\renewcommand\bf\bfseries
\addspace\printfield{pages}\addspace
\numberwithin{equation}{section}
\newcommand\myshade{85}
\colorlet{mylinkcolor}{violet}
\colorlet{mycitecolor}{YellowOrange}
\colorlet{myurlcolor}{Aquamarine}
\theoremstyle{plain}
\newtheorem{thm}{\protect\theoremname}[section]
\theoremstyle{plain}
\newtheorem{lem}[thm]{\protect\lemmaname}
\theoremstyle{plain}
\newtheorem{cor}[thm]{\protect\corollaryname}
\theoremstyle{plain}
\newtheorem{prop}[thm]{\protect\propositionname}
\theoremstyle{remark}
\newtheorem{claim}[thm]{\protect\claimname}
\newtheorem{assumption}[thm]{\protect\assumptionname}
\theoremstyle{remark}
\newtheorem{rem}[thm]{\protect\remarkname}
\theoremstyle{definition}
\newtheorem{defn}[thm]{\protect\definitionname}
\theoremstyle{plain}
\providecommand{\assumptionname}{Assumption}
\providecommand{\claimname}{Claim}
\providecommand{\corollaryname}{Corollary}
\providecommand{\definitionname}{Definition}
\providecommand{\lemmaname}{Lemma}
\providecommand{\propositionname}{Proposition}
\providecommand{\remarkname}{Remark}
\providecommand{\theoremname}{Theorem}
\providecommand{\examplename}{Example}
\crefname{section}{Section}{Sections}
\crefname{appendix}{Appendix}{Appendices}
\crefname{figure}{Figure}{Figures}
\crefname{assumption}{Assumption}{Assumptions}
\crefname{thm}{Theorem}{Theorems}
\crefname{lem}{Lemma}{Lemmas}
\newtheorem*{lem*}{\protect\lemmaname}
\renewcommand{\Im}{\operatorname{\mathbb{I}\mathbbm{m}}}
\renewcommand{\Re}{\operatorname{\mathbb{R}\mathbbm{e}}}
\newcommand{\ZZ}{\mathbb{Z}}
\newcommand{\TT}{\mathbb{T}}
\newcommand{\NN}{\mathbb{N}}
\newcommand{\RR}{\mathbb{R}}
\newcommand{\CC}{\mathbb{C}}
\newcommand{\EE}{\mathbb{E}}
\newcommand{\dif}{\mathrm{d}}
\newcommand{\supp}{\operatorname{supp}}
\newcommand{\im}{\operatorname{im}}
\newcommand{\tr}{\operatorname{tr}}
\newcommand{\Mat}{\operatorname{Mat}}
\newcommand{\Herm}{\operatorname{Herm}}
\newcommand{\dist}{\operatorname{dist}}
\newcommand{\GL}{\operatorname{GL}}
\newcommand{\Open}[1]{\mathrm{Open}(#1)}
\newcommand{\HSG}{\operatorname{Sp}_{2N}^\ast(\mathbb{C})}
\DeclarePairedDelimiter\norm{\lVert}{\rVert}%
\let\oldnorm\norm
\def\norm{\@ifstar{\oldnorm}{\oldnorm*}}
\DeclareMathOperator*{\slim}{s-lim}
\newcommand{\Id}{\mathds{1}}
\newcommand{\ve}{\varepsilon}
\newcommand{\ee}{\operatorname{e}}
\newcommand{\ii}{\operatorname{i}}
\title{Incomplete Localization for Disordered Chiral Strips}
  \providecommand{\assumptionname}{Assumption}
  \providecommand{\claimname}{Claim}
  \providecommand{\corollaryname}{Corollary}
  \providecommand{\definitionname}{Definition}
  \providecommand{\lemmaname}{Lemma}
  \providecommand{\propositionname}{Proposition}
  \providecommand{\remarkname}{Remark}
\providecommand{\theoremname}{Theorem}
\providecommand{\examplename}{Example}
\crefname{section}{Section}{Sections}
\crefname{prop}{Proposition}{Propositions}
\crefname{figure}{Figure}{Figures}
\crefname{assumption}{Assumption}{Assumptions}
\newcommand{\toitself}{\mathbin{\scalebox{.85}{%
			\lefteqn{\scalebox{.5}{$\blacktriangleleft$}}\raisebox{.34ex}{$\supset$}}}}
\begin{document}

\author{Jacob Shapiro\\
	{\footnotesize Department of Physics, Princeton University} 
	}
\maketitle
\begin{abstract}
	We prove that a disordered analog of the Su-Schrieffer-Heeger model exhibits dynamical localization (i.e. the fractional moments condition) at all energies except possibly zero energy, which is singled out by chiral symmetry. Localization occurs at arbitrarily weak disorder, provided it is sufficiently random. If furthermore the hopping probability measures are properly tuned so that the zero energy Lyapunov spectrum does not contain zero, then the system exhibits localization also at that energy, which is of relevance for topological insulators  \cite{Graf_Shapiro_2018_1D_Chiral_BEC}. The method also applies to the usual Anderson model on the strip.
\end{abstract}
%\tableofcontents
\section{Introduction}

The Anderson model in one-dimension was long known \cite{Kunz1980} to exhibit \emph{complete} localization, that is, localization regardless of the strength of the disorder or the energy at which the system is probed. A major advancement in this direction was made in \cite{KLEIN1990135}, which handles the strip with singular distributions of the onsite disorder. However, as it turns out, additional constraints on the randomness can make \emph{complete} localization fail at some special energy values or ranges, as was demonstrated already e.g. for the random polymer model \cite{Jitomirskaya2003}.

Such special energy values may be linked to the existence of rich topological phases. Indeed, the question of localization for the present model emerged from the study \cite{Graf_Shapiro_2018_1D_Chiral_BEC} of chiral topological insulators (class AIII in the Altland-Zirnbauer classification \cite{AltlandZirnbauer97}) in 1D, where a link was realized between the failure of localization at zero energy and topological phase shifts. Heuristically, this is precisely what makes the Anderson model in 1D topologically trivial \cite[Table 1]{Hasan_Kane_2010}--it cannot have any phase transitions, being completely localized. Thus the impetus of the present paper was to show that indeed the topological study of strongly disordered systems in \cite{Graf_Shapiro_2018_1D_Chiral_BEC} did not involve an empty set of models since its assumptions are fulfilled with a probability of either zero or one, as shall be demonstrated.

The chiral model dealt with in the present study--a disordered analog of the SSH model \cite{SSH_1979}--is characterized by having no on-site potential and possibly \emph{alternating} distributions of the nearest-neighbor hopping. We further generalize it by working with matrix-valued hopping terms, i.e., we work on a strip. The model exhibits dynamical localization at all non-zero energies, in the sense of the fractional-moments (FM henceforth) condition, as long as the hopping matrix distributions are continuous w.r.t. the Lebesgue measure on $\Mat_N(\mathbb{C})$ and have finite moments. If the zero-energy Lyapunov spectrum does not contain zero (which depends on how the model is tuned), then localization holds also at zero energy, the possibility of this failing being in stark contrast to the Anderson model. As noted above if the zero energy Lyapunov spectrum does contain zero, then the system could exhibit a topological phase shift, so that it makes sense that localization should fail then (as the topological indices are only defined in when the system is localized at zero energy).

There is also an independent interest in using the FM method for localization proofs rather than the multi-scale analysis, since its consequences for dynamical localization are somewhat easier to establish and more readily apply to topological insulators (compare \cite{Aizenman_Graf_1998} with \cite{Damanik2001}). Hence we note that the Anderson model on the strip can also be handled via our methods, but as always using the FM method, we need regularity of the probability distributions (cf. the treatment of \cite{KLEIN1990135,Carmona1987}). The two key differences for the localization proof between the ordinary Anderson model (which has onsite disorder) and the chiral model (which has disorder only in the kinetic energy) are the a-priori bounds on the Green's function (\cref{sec:a-priori bound}) and the proof of irreducibility necessary for Furstenberg's theorem (\cref{sec:irreducibility}).

%For a strip of width $N$, asymptotics in $N$ are interesting for two reasons: (1) if one could show uniform localization bounds in $N$ (unlikely) one will have shown complete localization in 2D \cite{Abrahams1979} and (2) if one  ___TODO___

Chapman and Stolz \cite{Chapman2015} deal with a similar case of disordered hopping, but stay within the multi-scale framework of \cite{KLEIN1990135}. Studies of the Lyapunov spectrum for various symmetry classes have also been conducted in \cite{Sadel2010,Ludwig2013,Asch2010} among others.

This paper is organized as follows. After presenting the model and our assumptions about it, we connect the Green's function to a product of transfer matrices in \cref{sec:transfer matrices}. In \cref{sec:the lyapunv exponents} we use Furstenberg's theorem in order to show that the Lyapunov spectrum at non-zero real energies \emph{must} be simple. In \cref{sec:a-priori bound} we employ rank-2 perturbation theory to get an a-priori bound on the $x,x-1$ entries of the Green's function, and finally in \cref{sec:loc at non-zero energies} we can tie everything together to obtain localization at non-zero energies. In this section we also show that any decay of the Greens function implies exponential decay, which is typical for 1D models. The study at zero energy requires separate treatment in \cref{sec:zero energy}. In \cref{sec:chiral RBM} we briefly comment on so called \emph{chiral random band matrices} at zero energy and the $\sqrt{N}$ conjecture, i.e., what happens to the Lyapunov spectrum when the size of the strip is taken to infinity. Along the way some technical results of independent interest in random operators are presented, such as \cref{prop:the Combes-Thomas estimate} (Combes-Thomas estimate without assuming compact support for the hopping terms), \cref{prop:decay of greens function above the real axis} (upgrading the decay of the Greens function off the real axis, uniformly in the height) and \cref{lem:any decay implies exponential decay} (any decay of the Greens function implies exponential decay, in 1D).

\section{The model and the results}

We study the Schr\"odinger equation on $\psi\in\mathcal{H}:=\ell^2(\mathbb{Z})\otimes\mathbb{C}^N$ at energy $E$ given by
\begin{align} (H\psi)_n := T_{n+1}^\ast\psi_{n+1}+T_n\psi_{n-1}=E\psi_n\,,\qquad(n\in\mathbb{Z})\,.\label{eq:Hamiltonian}\end{align}
where $(T_n)_{n\in\ZZ}$ is a random sequence of \emph{invertible} $N\times N$ complex matrices. Here $N\in\NN_{\geq1}$ is some fixed (once and for all) number of internal degrees of freedom at each site of the chain.

We take $(T_n)_n$ to be an \emph{alternatingly}-distributed, independent random sequence. To describe its law, let $\alpha_0,\alpha_1$ be two given probability distributions on $\GL_N(\mathbb{C})$. We assume for both $i=0,1$ the following: 
\begin{assumption}(\emph{Fatness})\label{assu:probability measures support contains open subset} $\supp \alpha_i$ contains some subset  $U_i \in \Open{\GL_N(\mathbb{C})}$.
\end{assumption}
\begin{assumption}(\emph{Uniform $\tau$-H\"older continuity})\label{assu:probability measures are uniformly tau Hoelder continuous} Fix $i=0,1$ and $k,l\in\{1,\dots,N\}$. Let $\mu_{R,I}$ be the conditional probability distributions on $\mathbb{R}$ obtained by "wiggling" only the real, respectively imaginary part of the random hopping matrix $M_{k,l}$, that is, $$ \mu_{R,I} = \alpha_i(\cdot|(M_{k',l'})_{k',l'\neq k,l},(M_{kl})_{I,R})\,. $$ Then we assume that for some $\tau\in(0,1]$, $\mu_{R,I}$ is a uniformly $\tau$-H\"older continuous measure (see \cite{Aizenman_Graf_1998}): there is some constant $C>0$ such that for all intervals $J\subseteq\mathbb{R}$ with $|J|\leq1$ one has $$ \mu_{R,I}(J) \leq C |J|^\tau\,. $$
\end{assumption}
\begin{assumption}\label{assu:regularity of probability distributions}(\emph{Regularity}) For $i=0,1$, $\alpha_i$ has finite moments in the following sense: \begin{align}\int_{M\in \GL_N(\mathbb{C})}\|M^{\pm1}\|^2\dif{\alpha_i(M)}<\infty\,.\end{align}
\end{assumption}
\begin{rem}These three assumptions are not optimal for the proof of localization that shall follow (cf. \cite{KLEIN1990135}), and were chosen as a good middle way optimizing both simplicity of proofs and strong results. Note that we do not require that $\alpha_i$ has a density with respect to the Haar measure.
\end{rem}
We then draw $(T_n)_n$ such that all its even members follow the law of $\alpha_0$ and all its odd members follow the law of $\alpha_1$. This makes $H$ into a random ergodic Hamiltonian.

The following definition's phrasing makes sense in light of the theorems we shall prove at zero energy:
\begin{defn}
We say that the system \emph{exhibits localization at zero energy} if $\alpha_0$ and $\alpha_1$ are such that the Lyapunov spectrum $\{\gamma_j(0)\}_{j=1}^{2N}$ (see \cref{eq:Definition of LE} for the definition) of the zero-energy Schr\"odinger equation $H\psi=0$ does not contain zero:
\begin{align}
0\notin\{\gamma_j(0)\}_{j=1}^{2N}\,.\label{eq:localized at zero}
\end{align}

\end{defn}
In what follows, let $(\delta_n)_{n\in\mathbb{Z}}$ be the canonical (position) basis of $\ell^2(\mathbb{Z})$ and the map $G(n,m; z)=\langle\delta_n,(H-z)^{-1}\delta_m\rangle:\mathbb{C}^{N}\to\mathbb{C}^{N}$ acts between the internal spaces of $n$ and $m$; $\|\cdot\|$ is the trace norm of such maps. 

The main result of this paper is the following theorem:
\begin{thm}
\label{thm:localization}
(I) Under \cref{assu:probability measures support contains open subset,assu:probability measures are uniformly tau Hoelder continuous,assu:regularity of probability distributions}, the FM condition holds at all non-zero energies: $\forall\,\lambda\in\mathbb{R}\setminus\{0\}\,,\exists\,s\in(0,1):\exists\,0<C,\mu<\infty:$
\begin{align}
\sup_{\eta\neq0}\mathbb{E}[\|G(n,m;\lambda+i\eta)\|^s]\leq C e^{-\mu |n-m|}\,,\qquad\forall n,m\in\mathbb{Z}\,.\label{eq:fractional moment condition}
\end{align}

(II) If moreover \cref{eq:localized at zero} holds, then \cref{eq:fractional moment condition} extends to $\lambda=0$ as well.
\end{thm}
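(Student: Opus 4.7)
My plan is to combine the four ingredients the introduction signposts: the transfer matrix representation from \cref{sec:transfer matrices}, Furstenberg's theorem on simplicity of the Lyapunov spectrum from \cref{sec:the lyapunv exponents}, the rank-2 perturbation a-priori bound on $\mathbb{E}[\|G(x,x-1;z)\|^s]$ from \cref{sec:a-priori bound}, and the off-axis proposition announced in the introduction that upgrades real-axis decay to a bound uniform in the imaginary part. For part (I), the first step is to recast $(H-z)\psi=0$ as a first-order recursion $\binom{\psi_{n+1}}{\psi_n}=M_n(z)\binom{\psi_n}{\psi_{n-1}}$ on $\CC^{2N}$ and to derive a Wronskian-type identity of the schematic form $G(n,m;z) = \Phi^+(n)\,W(z)^{-1}\,\Phi^-(m)^{*}$ (for $n\geq m$, symmetrically otherwise), relating the resolvent kernel to the two solutions $\Phi^\pm$ of $(H-z)\psi=0$ that decay at $\pm\infty$. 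This reduces the decay of $\|G(n,m;z)\|$ to exponential contraction of the product $M_n(z)\cdots M_{m+1}(z)$ along the stable Oseledets subspace, together with a short-range Wronskian/boundary factor essentially equal to a one-step Green's function value.

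For real $\lambda\neq 0$, Furstenberg's theorem (proved in \cref{sec:the lyapunv exponents} from \cref{assu:probability measures support contains open subset,assu:regularity of probability distributions}) gives that all $2N$ exponents $\gamma_j(\lambda)$ are simple and that the stable subspace is separated from the unstable one by a positive gap. A Le Page-type large deviations estimate, available thanks to \cref{assu:regularity of probability distributions} together with the irreducibility and contraction properties established in \cref{sec:the lyapunv exponents}, then controls the fractional moments of the relevant matrix elements of the product cocycle, giving exponential decay at some rate $\mu>0$ for some small $s\in(0,1)$.

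The remaining step is to splice this Lyapunov decay into the Wronskian identity. The boundary factor there is essentially $G(x,x-1;z)$, and controlling a fractional moment of it uniformly in $z$ is the role of the rank-2 a-priori bound of \cref{sec:a-priori bound}, which uses \cref{assu:probability measures are uniformly tau Hoelder continuous} to compensate for the absence of onsite randomness. Combining the two yields an exponential fractional-moment bound at real energy; to promote it to \cref{eq:fractional moment condition} for every $\eta\neq 0$, I would invoke the ``decay above the real axis'' proposition mentioned in the introduction, which bridges Combes-Thomas decay at large $|\eta|$ with continuity of $M_n(\cdot)$ near the real axis for small $|\eta|$, using the a-priori bound as the required uniform baseline.

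For part (II), the zero-energy transfer cocycle carries chiral block structure which generically would force $0 \in \{\gamma_j(0)\}$, and this is exactly what \cref{eq:localized at zero} rules out: under that hypothesis, the stable subspace at $\lambda=0$ is nondegenerate and separated from the unstable one, so Steps 1--3 carry over with the $\lambda=0$ variant of Furstenberg's analysis performed in \cref{sec:zero energy}. The main obstacle I expect is the uniformity in $\eta$: Furstenberg, irreducibility and Le Page are genuinely real-energy statements about the i.i.d.\ cocycle, so transferring the resulting bound uniformly through the upper half-plane rests squarely on combining the a-priori bound with the off-axis continuation proposition.
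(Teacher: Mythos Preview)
Your plan for Part (I) is broadly aligned with the paper's (transfer matrices $\to$ Furstenberg/Le~Page $\to$ a-priori bound $\to$ off-axis extension), but two points are inaccurate or incomplete.

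First, the off-axis step does not work by ``continuity of $M_n(\cdot)$ near the real axis.'' The paper instead uses that $z\mapsto \mathbb{E}[\|G_{[x,y]}(x,y;z)\|^s]$ is subharmonic and applies a maximum-principle argument on a square touching $\RR$ (\cref{lem:subharmonicity f bounded on Q with controlled divergence then f is bounded on Q tilde,lem:subharmonicity f is small everywhere on the boundary except near the real axis then it is small in tilde Q}): the real-axis result from \cref{thm:finite volume real energy FMC} controls the bottom edge, and a Combes--Thomas estimate (\cref{prop:the Combes-Thomas estimate}) controls the rest. Crucially, this yields only \emph{polynomial} decay in $|n-m|$ uniformly in $\eta$ (\cref{prop:decay of greens function above the real axis}); the exponential rate in \cref{eq:fractional moment condition} is then recovered by \cref{lem:any decay implies exponential decay}, a 1D-specific submultiplicativity/decoupling argument you have not mentioned. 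Without this upgrade your outline does not close.

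Second, your diagnosis of Part (II) is off. The hypothesis \cref{eq:localized at zero} \emph{is} the statement that the Lyapunov spectrum has a gap at $\lambda=0$, so there is no ``$\lambda=0$ variant of Furstenberg'' to perform---the transfer-matrix part goes through as before. The actual obstruction at zero energy is the a-priori bound: the Wronskian boundary factor $\mathcal{G}_{k-1,k}$ involves the \emph{diagonal} entry $G(k,k;z)$, and \cref{cor:a-priori bound for xx green's function} only gives $\mathbb{E}[\|G(x,x;z)\|^s]\leq C|z|^{-s}$, which blows up as $z\to 0$. \Cref{sec:zero energy} fixes this by an explicit finite-volume computation showing $G_{[1,2n]}(x,x;0)=0$, then using the resolvent identity to bound $\mathbb{E}[\|G_{[1,2n]}(x,x;z)\|^s]$ uniformly near $z=0$ (\cref{prop:Uniform bound on diagonal matrix element of finite volume Green's function}). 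Your proposal does not address this, and the one-step bound $\mathbb{E}[\|G(x,x-1;z)\|^s]$ alone is insufficient.
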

With $P:=\chi_{(-\infty,0)}(H)$ the Fermi projection, the theorem implies via \cite{Aizenman_Graf_1998} the following
\begin{cor}\label{cor:almost sure consequences of localization for the Fermi projection}
Under the above assumptions, including \cref{eq:localized at zero}, \cite[Assumption 1]{Graf_Shapiro_2018_1D_Chiral_BEC} is true almost surely: 
\label{assu:exp_decay_of_Fermi_projection}With probability one, for some deterministic $\mu,\nu>0$ and random $C>0$ we have
\begin{align*}
\sum_{n,n'\in\mathbb{Z}}\|P(n,n')\|(1+|n|)^{-\nu}e^{\mu|n-n'|} & \leq C < +  \infty\,.
\end{align*}
\end{cor}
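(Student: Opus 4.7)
The strategy is to invoke \cite{Aizenman_Graf_1998}, which promotes an FM bound on the Green's function into exponential decay of the kernel of any bounded Borel function of $H$. The hypothesis of that conversion is exactly \eqref{eq:fractional moment condition}, which is furnished by \cref{thm:localization}: part (I) at each nonzero energy and part (II) at $\lambda=0$ under \eqref{eq:localized at zero}. Applied to $P=\chi_{(-\infty,0)}(H)$, the conversion yields an in-expectation bound of the form $\EE[\|P(n,n')\|]\leq C\,e^{-\mu'|n-n'|}$ for some deterministic $C,\mu'>0$ uniform in $n,n'$.

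Two technical points demand attention before the Aizenman--Graf machinery can be invoked. First, the constants $s,C,\mu$ in \eqref{eq:fractional moment condition} depend a priori on $\lambda$, whereas the spectral integral for $P$ ranges over all of $(-\infty,0)$; local uniformity on compacts in $\RR\setminus\{0\}$ follows from continuous dependence of the Furstenberg inputs and the a-priori bounds of \cref{sec:a-priori bound} on the energy, while a neighborhood of $\lambda=0$ is covered by the observation that \eqref{eq:localized at zero} is an open condition on the Lyapunov spectrum. Second, because $\|T_n\|$ is only assumed to have a finite second moment, the spectrum of $H$ need not be bounded; the contribution to $P$ from $\lambda$ below any fixed threshold is handled deterministically via \cref{prop:the Combes-Thomas estimate}, which provides exponential decay of the resolvent off the spectrum without requiring compact support of the hopping matrices.

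To upgrade the in-expectation bound to the almost-sure statement with the polynomial weight $(1+|n|)^{-\nu}$, I would use a Markov-plus-Borel-Cantelli argument. Fixing $\mu<\mu'$, the random quantity $Y_n:=\sum_{n'\in\ZZ}\|P(n,n')\|\,e^{\mu|n-n'|}$ has expectation bounded uniformly in $n$ by summing a geometric series, so by Markov's inequality $\PP(Y_n>(1+|n|)^{\nu})$ is summable over $n\in\ZZ$ once $\nu>1$, whence Borel--Cantelli yields $Y_n\leq C_\omega(1+|n|)^{\nu}$ almost surely for some random $C_\omega<\infty$. Summing against the weight $(1+|n|)^{-\nu'}$ with $\nu'>\nu+1$ then produces the finite double sum in the claim. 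I expect the main obstacle to be the verification of local uniformity of the FM bound in the energy parameter, since \cref{thm:localization} is phrased pointwise in $\lambda$ and the small neighborhood of $\lambda=0$ is the most delicate regime; once that uniformity is in hand, the remaining steps are the standard FM-to-projection-decay pipeline.
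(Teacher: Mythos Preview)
Your two-step skeleton---invoke \cite[Theorem 2]{Aizenman_Graf_1998} to obtain $\EE[\|P(n,n')\|]\leq C e^{-\mu'|n-n'|}$, then upgrade to the almost-sure weighted sum via Markov plus Borel--Cantelli---is exactly the paper's approach (the second step is recorded as a separate claim in the appendix, with the same argument you sketch). That part is fine.

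Where you go astray is the middle paragraph. The Aizenman--Graf result does \emph{not} require the FM condition uniformly over the spectral support $(-\infty,0)$ of $P$; it requires \cref{eq:fractional moment condition} only at the single Fermi level $\lambda=0$, uniformly in $\eta$. The mechanism is a contour representation of $P(n,n')$ (for $n\neq n'$) along the vertical line $\Re z=0$, not a horizontal spectral integral over $(-\infty,0)$. Hence neither local uniformity of the constants in $\lambda$ over compacts of $\RR\setminus\{0\}$, nor any special handling of the unbounded spectrum via \cref{prop:the Combes-Thomas estimate}, is needed here: \cref{thm:localization}(II) already delivers the exact hypothesis. Your concern that ``the main obstacle'' is this uniformity is therefore misplaced; once you drop that detour, your proof and the paper's coincide.
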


We also have \cite[Assumption 2]{Graf_Shapiro_2018_1D_Chiral_BEC} fulfilled almost-surely due to the FMC being satisfied at zero energy:
\begin{cor}
	Under the above assumptions, including \cref{eq:localized at zero}, zero is almost-surely not an eigenvalue of $H$ (that is, \cite[Assumption 2]{Graf_Shapiro_2018_1D_Chiral_BEC}).
\end{cor}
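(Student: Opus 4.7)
The plan is to convert the fractional-moment bound at $\lambda=0$ from part (II) of \cref{thm:localization} into absence of a zero eigenvalue via a standard ``Fatou plus resolvent pole'' argument. Setting $\lambda=0$ and $n=m$ in \cref{eq:fractional moment condition} yields, for every $n\in\mathbb{Z}$,
\[
\sup_{\eta\neq 0}\mathbb{E}\bigl[\|G(n,n;\ii\eta)\|^{s}\bigr]\le C<\infty.
\]
Choosing a sequence $\eta_{k}\downarrow 0$ and applying Fatou's lemma gives $\mathbb{E}[\liminf_{k}\|G(n,n;\ii\eta_{k})\|^{s}]\le C$, so $\liminf_{k}\|G(n,n;\ii\eta_{k})\|<\infty$ almost surely for each fixed $n$. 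A countable union over $n\in\mathbb{Z}$ then produces a single full-measure event $\Omega_{0}$ on which this finiteness statement holds simultaneously for all $n\in\mathbb{Z}$.

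Next I would show that on the event $\{0\in\sigma_{\mathrm{pp}}(H)\}$ the liminf must blow up at some site, forcing this event to lie outside $\Omega_{0}$. Let $P_{0}(\omega):=P_{\{0\}}(H(\omega))$ denote the spectral projection onto $\ker H(\omega)$. By the spectral theorem and dominated convergence one has $-\ii\eta\,(H-\ii\eta)^{-1}\to P_{0}$ strongly as $\eta\downarrow 0$; since $G(n,n;\ii\eta)$ takes values in the finite-dimensional space $\Mat_{N}(\mathbb{C})$, strong convergence of the block implies norm convergence, giving
\[
-\ii\eta\,G(n,n;\ii\eta)\longrightarrow P_{0}(n,n)\qquad\text{as }\eta\downarrow 0.
\]
On the event $\{P_{0}\neq 0\}$ one has $\sum_{n}\tr P_{0}(n,n)=\tr P_{0}\ge 1$, so there must exist a (random) site $n_{\ast}\in\mathbb{Z}$ with $P_{0}(n_{\ast},n_{\ast})\neq 0$. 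Consequently $\|G(n_{\ast},n_{\ast};\ii\eta_{k})\|\ge c/|\eta_{k}|\to\infty$ for some $c>0$, contradicting $\omega\in\Omega_{0}$. Hence $\{P_{0}\neq 0\}\subseteq\Omega_{0}^{c}$ has probability zero, which is exactly the desired statement.

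The only genuinely delicate point is joint measurability in $\omega$, since a random site $n_{\ast}(\omega)$ is used and one needs the union-of-null-sets construction to really cover the event $\{P_{0}\neq 0\}$. This is standard: $P_{0}=\slim_{k\to\infty}(-\ii\eta_{k})(H-\ii\eta_{k})^{-1}$ along the chosen countable sequence, so it is a manifestly measurable operator-valued function of the disorder, and each $G(n,n;\ii\eta_{k})$ is likewise measurable. Apart from this measurability bookkeeping, no further ingredients beyond the FM bound and the spectral theorem are needed.
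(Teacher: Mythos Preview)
Your argument is correct and follows essentially the same route as the paper's proof: both use the diagonal FM bound at $\lambda=0$ to force $\eta\,G(n,n;\ii\eta)\to 0$ almost surely, and then invoke the spectral-theoretic fact that $-\ii\eta\,G(n,n;\ii\eta)\to P_{\{0\}}(n,n)$ to rule out a nontrivial kernel. The paper compresses steps 3--4 into a citation of Jak\v{s}i\'c's notes (zero is an eigenvalue iff $\lim_{\eta\to0^+}\eta\|\Im G(n,n;\ii\eta)\|>0$ for some $n$), whereas you spell out the strong convergence and the countable-union/measurability bookkeeping explicitly; the substance is the same.
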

\begin{proof}
		The inequality \cref{eq:fractional moment condition} at zero energy implies that almost-surely, $$ \limsup_{\eta\to0^+} |\eta|\|G(n,m;\ii\eta)\| = 0\,. $$
		In particular for $n=m$ we find that $$ \lim_{\eta\to0^+} \eta \|\Im{G(n,n;\ii\eta)}\| = 0\,, $$ for any $n$. However, zero is an eigenvalue of $H(\omega)$ iff $$\lim_{\eta\to0^+} \eta \|\Im{G_\omega(n,n;\ii\eta)}\| > 0$$ for some $n$ (see \cite[Jak\v{s}ic: Topics in spectral theory]{attal2006open}).
\end{proof}

In conclusion, taking the (full-measure) intersection of the two above sets, we get that with probability one \cite[Assumptions 1 and 2]{Graf_Shapiro_2018_1D_Chiral_BEC} hold under the condition \cref{eq:localized at zero}.

\paragraph{Relation to earlier model.}
Any realization of the random Hamiltonian \cref{eq:Hamiltonian} is of the same (deterministic) type as considered in \cite[eq. $\,(2.1)$]{Graf_Shapiro_2018_1D_Chiral_BEC}, at least up to a unitary map implementing notational changes. Let us first recall that Hamiltonian $H'$ and then set up the unitary map. The Hilbert space there was given by \begin{align}\mathcal{H}':=\mathcal{H}\otimes\mathbb{C}^{2}\ni\psi=\begin{pmatrix}\psi_{m}^{+}\\
\psi_{m}^{-}
\end{pmatrix}_{m\in\mathbb{Z}}\,,\label{eq:double Hilbert sp}\end{align} the Hamiltonian by $H'  = \begin{pmatrix}0 & S^\ast\\
S & 0
\end{pmatrix}
$
with $S$ acting on $\mathcal{H}$ as $(S\psi^{+})_{m}  :=  A_{m}\psi_{m-1}^{+}+B_{m}\psi_{m}^{+}$; Here $A,B:\mathbb{Z}\to \GL_N(\mathbb{C})$ are two given deterministic sequences.

We define the unitary map $U:\mathcal{H}'\to\mathcal{H}$ via
\begin{align*}
U(\begin{pmatrix}\psi^{+}\\
\psi^{-}
\end{pmatrix})_n := 
\begin{cases} 
\psi^+_m &,\qquad(m=2n+1) \\
\psi^-_m &,\qquad(m=2n)
\end{cases}\,,
\end{align*}
so that $UH'U^\ast=H$ upon setting $$T_n := 
\begin{cases} 
B_{m}^\ast &,\qquad(m=2n+1) \\
A_{m} &,\qquad(m=2n)
\end{cases}\,. $$
Thus our model is a random version of the model studied in \cite{Graf_Shapiro_2018_1D_Chiral_BEC}. The reason we use $H$ instead of $H'$ as before is that to estimate Green's functions (our ultimate goal in localization) there is no need to distinguish between two types of sites, and hence no reason to group them into dimers. In the present context, the chirality symmetry operator is $\Pi := (-1)^{X}$ with $X$ the position operator, and of course we still have the chiral symmetry constraint $\{H,\Pi\}=0$.

%\subsection{The non-existence of a spectral gap}
%Here we explore the (relatively generic) conditions for the spectral gap to be closed. 
%[TODO]
%\subsection{The mean square displacement and the entanglement entropy}
%Here we explore the behavior of the mean square displacement with respect to time and the entanglement entropy in case \cref{eq:localized at zero} \emph{fails}.
%[TODO]

\paragraph{The almost-sure spectrum.}

Here we will give conditions for the almost-sure spectrum to contain a dense pure point interval about zero of localized states, i.e., that there will not be a spectral gap about zero, but merely a mobility-gap \cite{EGS_2005}. Our goal here is to show that this situation is in fact generic.% and hence the mobility-gap study of \cite{Graf_Shapiro_2018_1D_Chiral_BEC} was indeed necessary.

The idea is that the almost-sure spectrum of an ergodic family is determined by the union of all periodic configurations within the support of the probability measure defining the model \cite{MR2509108,Kirsch07aninvitation}.

The following two claims are adaptations of \cite[Prop. 3.8, 3.9]{Kirsch07aninvitation} from the Anderson model to our chiral model. The proof of the first is precisely the same as that of \cite[Prop. 3.8]{Kirsch07aninvitation} and is hence omitted.

\emph{Only until the end of this section}, for convenience of notation, we consider our Hilbert space where every pair of sites is grouped together as in \cref{eq:double Hilbert sp}. Hence we define $\mu:=\alpha_0\otimes\alpha_1, \Omega := (\GL_{N}(\CC)^2)^\ZZ$ and $\mathbb{P}:=\bigotimes_\ZZ\mu$ is a probability measure on $\Omega$. We label a pair $(A_n,B_n)\in\GL_{N}(\CC)^2$ of hopping matrices by $S_n$ so that $(S_n)_{n\in\ZZ}\in\Omega$.

\begin{prop}
	There is a subset $\Omega_0\subseteq\Omega$ with $\mathbb{P}(\Omega_0)=1$ such that: for any $(S_n)_n\in\Omega_0$, any $\Lambda\subseteq\ZZ$ finite, any finite sequence $(q_n)_{n\in\Lambda}$ such that $q_n\in\supp(\mu)$ for all $n\in\Lambda$, and any $\ve>0$, there is a sequence $(j_n)_{n\in\NN}\subseteq\ZZ$ with $|j_n|\to\infty$ such that $$ \sup_{l\in\Lambda}\|q_l-S_{l-j_n}\|<\ve\qquad(n\in\NN)\,.$$
\end{prop}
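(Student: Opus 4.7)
The plan is to follow the classical Borel--Cantelli argument used to identify the ergodic support of an i.i.d.\ random Hamiltonian (as in \cite[Prop.\ 3.8]{Kirsch07aninvitation}); the only adaptation from the scalar case is that our random variables now take values in the separable metric space $\GL_N(\CC)^2$ rather than in $\RR$.

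\emph{Reduction to a countable family of targets.} The closed subset $\supp(\mu)\subseteq \GL_N(\CC)^2$ is itself separable, so pick a countable dense set $D\subseteq \supp(\mu)$. Let $\mathcal{T}$ denote the countable collection of triples $\tau=(\Lambda,(q_l)_{l\in\Lambda},\ve)$ with $\Lambda\subseteq\ZZ$ finite, each $q_l\in D$, and $\ve\in\mathbb{Q}_{>0}$.

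\emph{Full measure for a single target via second Borel--Cantelli.} Fix $\tau\in\mathcal{T}$ and set
$$ A_j^\tau := \bigl\{ (S_n)_n\in\Omega \,:\, \sup_{l\in\Lambda}\|q_l-S_{l-j}\|<\ve \bigr\}\qquad(j\in\ZZ). $$
By translation invariance of $\PP=\bigotimes_\ZZ\mu$, $\PP(A_j^\tau)=:p_\tau$ is independent of $j$, and $p_\tau>0$: indeed, $A_0^\tau$ is the intersection over $l\in\Lambda$ of the (independent) one-coordinate events $\{\|q_l-S_l\|<\ve\}$, each of which has positive $\mu$-probability precisely because $q_l\in\supp(\mu)$. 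Choosing any $d>\operatorname{diam}(\Lambda)$ the events $(A_{kd}^\tau)_{k\in\ZZ}$ involve pairwise disjoint coordinate blocks and are therefore mutually independent. Applied separately to $k>0$ and $k<0$, the second Borel--Cantelli lemma produces a full-measure set $\Omega_0^\tau$ on which $A_j^\tau$ occurs for a sequence of $j$'s with $|j|\to\infty$.

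\emph{Conclusion by countable intersection and approximation.} Put $\Omega_0:=\bigcap_{\tau\in\mathcal{T}}\Omega_0^\tau$, still of full probability. Given an arbitrary finite $\Lambda\subseteq\ZZ$, $(q_l)_{l\in\Lambda}$ with $q_l\in\supp(\mu)$, and $\ve>0$, choose $q'_l\in D$ with $\|q_l-q'_l\|<\ve/2$ and a rational $\ve'\in(0,\ve/2)$, so that $\tau':=(\Lambda,(q'_l),\ve')\in\mathcal{T}$. For any $(S_n)_n\in\Omega_0$ we obtain infinitely many $j$'s with $|j|\to\infty$ and $\sup_l\|q'_l-S_{l-j}\|<\ve'$; the triangle inequality then yields $\sup_l\|q_l-S_{l-j}\|<\ve/2+\ve'<\ve$, as required.

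\emph{Main obstacle.} The argument is essentially routine. The two points requiring (mild) care are the separability of $\supp(\mu)$, which lets one replace the uncountably many admissible triples by a countable dense skeleton, and the positivity $p_\tau>0$, which rests on the product-measure structure of $\PP$ together with the defining property of the topological support. Everything else is Borel--Cantelli and the triangle inequality.
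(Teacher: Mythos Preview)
Your proof is correct and follows precisely the approach the paper intends: the paper omits the argument entirely, stating that it is ``precisely the same as that of \cite[Prop.\ 3.8]{Kirsch07aninvitation}'', and your Borel--Cantelli argument with a countable dense skeleton of targets is exactly that proof adapted to $\GL_N(\CC)^2$-valued variables. The only points to watch---separability of $\supp(\mu)$, positivity of $p_\tau$ via the product structure, and independence of the block-shifted events---are all handled correctly.
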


\begin{prop}
	For any $\bar{S}=(\bar{A},\bar{B})\in\supp(\mu)$, the almost-sure spectrum of the random model defined by \ref{eq:Hamiltonian} contains the spectrum of the $2$-periodic deterministic Hamiltonian $\bar{H}$ associated with $\bar{S}$.
\end{prop}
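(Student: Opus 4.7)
I will use Weyl's criterion combined with the preceding proposition. Fix $\lambda\in\sigma(\bar H)$ and $\varepsilon>0$; I show that on a set of full $\mathbb P$-measure there is a unit vector $\phi\in\mathcal H$ with $\|(H_\omega-\lambda)\phi\|<\varepsilon$, so $\lambda\in\sigma(H_\omega)$ a.s. Since $\sigma(\bar H)\subseteq\RR$ is compact hence separable, intersecting the corresponding full-measure events over a countable dense subset $\{\lambda_k\}\subseteq\sigma(\bar H)$ and using that $\sigma(H_\omega)$ is closed then upgrades this to $\sigma(\bar H)\subseteq\sigma(H_\omega)$ $\mathbb P$-a.s.

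\textbf{Step 1: Weyl vector for $\bar H$.} Because $\bar H$ is a bounded self-adjoint $1$-periodic operator (on the grouped Hilbert space of \cref{eq:double Hilbert sp}), Floquet--Bloch theory supplies, for each $\lambda\in\sigma(\bar H)$, a bounded generalized eigenfunction $\varphi:\ZZ\to\CC^{2N}$ with $\bar H\varphi=\lambda\varphi$. Let $\phi_L:=\chi_{\Lambda_L}\varphi/\|\chi_{\Lambda_L}\varphi\|$ with $\Lambda_L:=[-L,L]\cap\ZZ$. Since $\bar H$ is nearest-neighbor, $[\bar H,\chi_{\Lambda_L}]\varphi$ is supported on the two boundary sites of $\Lambda_L$ and hence has norm $O(1)$, while $\|\chi_{\Lambda_L}\varphi\|=\Theta(L^{1/2})$ by boundedness from below of the average of $|\varphi|^2$ over a Bloch period. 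Thus
\begin{align*}
\|(\bar H-\lambda)\phi_L\|=\|[\bar H,\chi_{\Lambda_L}]\varphi\|/\|\chi_{\Lambda_L}\varphi\|=O(L^{-1/2}),
\end{align*}
and I pick $L$ so that this is $<\varepsilon/2$.

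\textbf{Step 2: Transfer to $H_\omega$.} Enlarge $\Lambda_L$ by one site on each side to $\tilde\Lambda_L$, so that $H\phi_L$ (for any nearest-neighbor $H$) depends only on hopping matrices indexed in $\tilde\Lambda_L$. Apply the preceding proposition with $\Lambda=\tilde\Lambda_L$ and the constant target sequence $q_l\equiv\bar S\in\supp\mu$: on a $\mathbb P$-full-measure event, for each $\eta>0$ there is $j=j(\omega,\eta)\in\ZZ$ with $\max_{l\in\tilde\Lambda_L}\|S_{l-j}(\omega)-\bar S\|<\eta$. Let $\phi_L^{(j)}$ denote the $j$-translate of $\phi_L$; translation invariance of $\bar H$ gives
\begin{align*}
\|(H_\omega-\lambda)\phi_L^{(j)}\|\leq\|(\bar H-\lambda)\phi_L\|+\|(H_\omega-\bar H)\phi_L^{(j)}\|,
\end{align*}
and the second summand is a finite sum of contributions $(T_m^\omega-\bar T_m)(\phi_L^{(j)})_{m\pm 1}$ for $m$ in the translate of $\tilde\Lambda_L$, hence bounded by $C\,|\tilde\Lambda_L|^{1/2}\|\phi_L\|_\infty\,\eta$. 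Choosing $\eta$ small forces the total to be $<\varepsilon$, and Weyl's criterion gives $\lambda\in\sigma(H_\omega)$ a.s.

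\textbf{Main obstacle.} The only quantitative input is the truncation bound of Step 1, but it is soft: nearest-neighbor structure together with Floquet boundedness of $\varphi$ makes the boundary commutator $O(1)$ and the normalizer $\Theta(L^{1/2})$. The rest of the argument consists of translation invariance of $\bar H$, the dense-configurations input from the previous proposition, and separability/closedness of the almost-sure spectrum, none of which requires any further randomness-specific estimate.
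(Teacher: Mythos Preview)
Your argument is correct and follows essentially the same route as the paper: Weyl's criterion for $\bar H$, then the preceding proposition to locate a window where the random hoppings nearly coincide with $\bar S$, then a translation and a triangle-inequality comparison $\|(H_\omega-\lambda)\Psi\|\le\|(\bar H-\lambda)\Psi\|+\|(H_\omega-\bar H)\Psi\|$, the last term handled by a Holmgren-type bound. The only differences are cosmetic: the paper invokes an abstract Weyl sequence and passes to compact support ``WLOG'', whereas you build the approximate eigenvector explicitly by truncating a Bloch solution; and your countable-dense passage to the full inclusion $\sigma(\bar H)\subseteq\Sigma$ is not needed once one remembers that the almost-sure spectrum $\Sigma$ is a deterministic closed set, so that $\lambda\in\sigma(H_\omega)$ a.s.\ already means $\lambda\in\Sigma$ for each fixed $\lambda$.
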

\begin{proof}
	Let $\lambda\in\sigma(\bar{H})$ be given. By Weyl's criterion for the spectrum, there is a sequence of normalized vectors $(\psi_n)_{n\in\NN}\subseteq\mathcal{H}$ such that $\|(\bar{H}-\lambda\Id)\psi_n\|\to0$ as $n\to\infty$. WLOG we may also assume that for each $n$, $\psi_n$ has compact support.
	
	By the preceding proposition, we can build a sequence $(j_n)_{n\in\NN}\subseteq\ZZ$ such that with probability 1, \begin{align} \sup_{l\in\supp(\psi_n)}\|\bar{S}-S_{l+j_n}\|<\frac{1}{n} \qquad(n\in\NN)\,.\label{eq:estimate of periodic vs random hoppings}\end{align}
	
	Let us define $\Psi_n := \psi_n(\cdot-j_n)$ for any $n\in\NN$, and estimate \begin{align*} \|(H-\lambda\Id)\Psi_n\| &\leq \|(\bar{H}-\lambda\Id)\Psi_n\|+\|(H-\bar{H})\Psi_n\| \\
	&=\|(\bar{H}-\lambda\Id)\psi_n\|+\|(H-\bar{H})\Psi_n\|\tag{periodicity of $\bar{H}$}\,.
	\end{align*}
	
	Now the first term converges to zero by construction. The second one, after using Holmgren's bound, is estimated by \cref{eq:estimate of periodic vs random hoppings} and hence also converges to zero. We conclude that $(\Psi_n)_{n\in\NN}$ is a Weyl sequence for $(H,\lambda)$ with probability one and hence $\lambda\in\sigma(H)$ with the same probability.
\end{proof}

Finally we need to understand the spectrum of a deterministic periodic realization:

\begin{lem}
	Let $A,B\in \GL_{N}(\CC)$. If $H$ is the 2-periodic Hamiltonian \cref{eq:Hamiltonian} such that all even, odd hopping terms are equal to $A,B$ respectively, then \begin{align}\inf \sigma(H^2) \leq \|A\|\|B\|\dist(\sigma(|A^{-1}B|),1)\dist(\sigma(|AB^{-1}|),1)\,.\label{eq:spectrum of periodic H}\end{align}
\end{lem}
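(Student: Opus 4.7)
The plan is a Bloch--Floquet reduction followed by an elementary variational estimate on an $N\times N$ matrix. Since $H$ is $2$-periodic, grouping pairs of sites into unit cells and taking the discrete Fourier transform yields a direct-integral decomposition $H\simeq\int^{\oplus}\hat H(k)\,dk/(2\pi)$. The chiral anticommutation $\{H,\Pi\}=0$ descends to each fibre as the block structure
$$\hat H(k)=\begin{pmatrix}0 & h(k)^{*}\\ h(k) & 0\end{pmatrix},$$
with $h(k)$ a first-order trigonometric polynomial in $k$ whose coefficients are built from $A,B$. Because $\hat H(k)^{2}$ is block-diagonal with blocks $h(k)^{*}h(k)$ and $h(k)h(k)^{*}$ of coinciding nonzero spectrum,
$$\inf\sigma(H^{2})=\inf_{k}\sigma_{\min}(h(k))^{2},$$
which after a substitution $z:=e^{2ik}$ (and absorbing an adjoint into the convention) becomes $\inf_{|z|=1}\sigma_{\min}(A+zB)^{2}$.

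Next, I would exploit the twin factorisations
$$A+zB=A(I+zA^{-1}B)=(zI+AB^{-1})B,$$
together with the elementary singular-value inequalities $\sigma_{\min}(PQ)\leq\|P\|\,\sigma_{\min}(Q)$ and $\sigma_{\min}(PQ)\leq\sigma_{\min}(P)\,\|Q\|$ (the second being the first applied to $Q^{*}P^{*}$). Multiplying the two resulting bounds yields
$$\sigma_{\min}(A+zB)^{2}\leq\|A\|\|B\|\,\sigma_{\min}\!\bigl(I+zA^{-1}B\bigr)\,\sigma_{\min}\!\bigl(zI+AB^{-1}\bigr).$$
It remains to choose $z\in\bS^{1}$ and a test vector making both $\sigma_{\min}$ factors small.

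For this, let $\lambda\in\sigma(A^{-1}B)$ be the eigenvalue with $|\lambda|$ closest to $1$, and let $v$ be a corresponding eigenvector. From $Bv=\lambda Av$ one reads off that $Bv$ is simultaneously an eigenvector of $AB^{-1}$ with eigenvalue $1/\lambda$. Setting $z:=-\bar\lambda/|\lambda|\in\bS^{1}$, so that $1+z\lambda=1-|\lambda|$, evaluation of the two factors on $v$ and on $Bv$ respectively yields
$$\sigma_{\min}\!\bigl(I+zA^{-1}B\bigr)\leq\bigl|1-|\lambda|\bigr|,\qquad \sigma_{\min}\!\bigl(zI+AB^{-1}\bigr)\leq\frac{\bigl|1-|\lambda|\bigr|}{|\lambda|}.$$
Identifying the two right-hand sides with the distances from $1$ to $\sigma(|A^{-1}B|)$ and $\sigma(|AB^{-1}|)$ at their reciprocally matched spectral points and substituting into the previous display yields \cref{eq:spectrum of periodic H}.

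The main obstacle is the requirement that a \emph{single} $z\in\bS^{1}$ simultaneously controls both singular-value factors. An argument based on singular vectors of $A^{-1}B$ would break down when $A^{-1}B$ is non-normal, since its left and right singular vectors then differ and cannot be matched under a common phase. The eigenvector route succeeds precisely because the eigenvalues of $A^{-1}B$ and $AB^{-1}$ are related by $\lambda\mapsto1/\lambda$: the phase $z=-\bar\lambda/|\lambda|$ optimal for the first factor is automatically optimal for the reciprocal eigenvalue $1/\lambda$ appearing in the second.
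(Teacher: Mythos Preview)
Your Bloch reduction and the twin factorisations $A+zB=A(I+zA^{-1}B)=(zI+AB^{-1})B$ match the paper's setup. The paper, however, takes a shorter route from there: it bounds $\inf\sigma(|H|)$ \emph{separately} by $\|A\|\,\dist(\sigma(|A^{-1}B|),1)$ (via the first factorisation, optimising over $k$) and by $\|B\|\,\dist(\sigma(|AB^{-1}|),1)$ (via the second, with its own independent choice of $k$), and then squares using $\min(a,b)^{2}\leq ab$. This sidesteps entirely the need for a single phase $z$ to control both factors simultaneously; your eigenvector--matching trick linking the spectra of $A^{-1}B$ and $AB^{-1}$, while neat, is not needed.

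There is, however, a genuine gap in your final identification. Having chosen an eigenvalue $\lambda$ of $A^{-1}B$, you obtain the bounds $\bigl|1-|\lambda|\bigr|$ and $\bigl|1-|\lambda|^{-1}\bigr|$; these are distances from $1$ to the \emph{modulus of an eigenvalue} of $A^{-1}B$ (resp.\ $AB^{-1}$), not to a point of $\sigma(|A^{-1}B|)$, which is the set of \emph{singular values}. For non-normal $A^{-1}B$ the two notions differ, and the discrepancy can go the wrong way. For instance with $A=I$ and $B=\begin{pmatrix}3/2 & 1\\ 0 & 3/2\end{pmatrix}$ the eigenvalues of $A^{-1}B$ both have modulus $3/2$ (distance $1/2$ from $1$), whereas its singular values are approximately $2.08$ and $1.08$ (distance $\approx0.08$); your $\bigl|1-|\lambda|\bigr|$ therefore \emph{exceeds} $\dist(\sigma(|A^{-1}B|),1)$, and the substitution in your last display is unjustified. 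You yourself note that a singular-vector argument breaks down when $A^{-1}B$ is non-normal; the eigenvector route does not repair this, it simply proves a bound in terms of $\dist\bigl(|\sigma(A^{-1}B)|,1\bigr)$ rather than $\dist\bigl(\sigma(|A^{-1}B|),1\bigr)$. In fact the paper's own proof has the same lacuna at the analogous step, and the above example shows that the lemma as literally stated fails; the intended statement is presumably the one with eigenvalue moduli, which both your argument and the paper's do establish.
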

\begin{proof}
	After Bloch decomposition we see that $$ \sigma(H^2) = \bigcup_{k\in\TT^1} \sigma(|S(k)|^2) $$ where $S(k)=A \exp(-\ii k) +B$ for all $k\in\TT^1$. Hence \begin{align*}
		\inf \sigma(|H|) &= \inf_{k\in\TT^1} \inf \sigma(|A \exp(-\ii k) +B|) \\ &= \inf_{k\in\TT^1} \|(A \exp(-\ii k) +B)^{-1}\|^{-1} \\
		&\leq \inf_{k\in\TT^1} \|A\|\|(\exp(-\ii k)\Id_N+A^{-1}B)^{-1}\|^{-1} \\
		&\leq \|A\|\dist(\sigma(|A^{-1}B|),1)\,.
	\end{align*}
	To finish the argument, we use the fact we could've factored $B$ instead of $A$, and the basic estimate $\min(\Set{a,b})\leq\sqrt{ab}$.
\end{proof}

We obtain
\begin{thm}
	If there are sequences $(A_n)_n,(B_n)_n\subset\supp(\alpha_0),\supp(\alpha_1)$ such that the RHS of \cref{eq:spectrum of periodic H} vanishes as $n\to\infty$ but is strictly positive for any finite $n$, then almost-surely, $H$ has no spectral gap.
\end{thm}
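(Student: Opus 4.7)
The plan is to combine the two preceding propositions, the lemma on the spectrum of the periodic realization, and the chiral symmetry $\{H,\Pi\}=0$. Denote by $\Sigma$ the almost-sure spectrum of the ergodic family $H$. The previous proposition already yields $\Sigma\supseteq\sigma(\bar H)$ for every deterministic $2$-periodic $\bar H$ built from a pair $(\bar A,\bar B)\in\supp(\mu)$, so it suffices to produce a sequence of such periodic realizations whose spectra approach zero.

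Let $r_n$ denote the RHS of \cref{eq:spectrum of periodic H} evaluated at $(A_n,B_n)$; by hypothesis $r_n\to 0$. For each $n$, let $\bar H_n$ be the $2$-periodic Hamiltonian with even/odd hopping pair $(A_n,B_n)$. The preceding proposition gives $\sigma(\bar H_n)\subseteq\Sigma$, and the lemma bounds $\inf\sigma(\bar H_n^2)\leq r_n$, so I can pick $\lambda_n\in\sigma(\bar H_n)\subseteq\Sigma$ with $|\lambda_n|\leq\sqrt{r_n}\to 0$. Since $\Sigma$ is closed this already gives $0\in\Sigma$. To promote this to the absence of an actual spectral gap around zero, I invoke the chiral symmetry $\Pi H\Pi=-H$, which forces $\Sigma=-\Sigma$; thus $\pm\lambda_n\in\Sigma$ for each $n$ and $\Sigma$ accumulates at $0$ from both sides, so no open interval around $0$ is disjoint from $\Sigma$.

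The argument is essentially a bookkeeping combination of three earlier pieces, so I do not anticipate any real obstacle. The only detail worth a brief sanity check is that the notational doubling active in this final section (pairing sites, with $\mu=\alpha_0\otimes\alpha_1$) is compatible with the lemma, which was phrased for $H$ in the single-site picture with alternating even/odd hoppings; this is exactly the content of the unitary $U$ from the \emph{Relation to earlier model} paragraph, which preserves the chirality $\Pi=(-1)^X$ and therefore transfers the spectral statements verbatim. The "strictly positive for finite $n$" portion of the hypothesis plays no role in the argument itself beyond excluding the trivial case in which some $\bar H_n$ already carries $0$ in its spectrum; if that happened, $0\in\Sigma$ would be immediate.
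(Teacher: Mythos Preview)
Your argument is correct and is exactly the deduction the paper intends: the theorem is stated in the paper without proof (just ``We obtain''), as an immediate consequence of the preceding proposition (periodic spectra sit inside $\Sigma$) and the lemma (bounding $\inf\sigma(\bar H^2)$), which is precisely the combination you wrote out. One minor remark: once you have $0\in\Sigma$ from closedness, the chiral-symmetry step is already superfluous for the stated conclusion, since any open interval about $0$ then meets $\Sigma$ at $0$ itself; your observation about the ``strictly positive for finite $n$'' hypothesis being inessential to the logic is also accurate.
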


\section{Transfer matrices}
\label{sec:transfer matrices}

In this section we will estimate the Green's function in terms of
transfer matrices, which arise by looking at the Schr\"odinger equation (with
$H$ acting to the right or to the left), 
\begin{equation}
(H-z)\psi=0,\qquad\varphi(H-z)=0\,,\qquad(z\in\CC)\label{eq:Schroedinger equation for localization section}
\end{equation}
as a second order difference equation, where
$\psi=(\psi_{n})_{n\in\mathbb{Z}}$, $\varphi=(\varphi_{n})_{n\in\mathbb{Z}}$
are (possibly unbounded) sequences with $\psi_{n},\varphi_{n}\in\mathbb{C}^{N}$
viewed as column, respectively row vectors. Evaluated at $n\in\mathbb{Z}$,
eqs. \cref{eq:Schroedinger equation for localization section}
read 
\begin{equation}\begin{aligned}
T_{n}\psi_{n-1}+T_{n+1}^{\ast}\psi_{n+1} & =  z\psi_{n}\,, \\
\varphi_{n-1}T_{n}^{\ast}+\varphi_{n+1}T_{n+1} & =  z\varphi_{n}\,.\label{eq:Schroedinger equation for localization section evaluated at some n}
\end{aligned}\end{equation}

For any two sequences $\psi$, $\varphi$ of columns and rows respectively,
let 
\begin{align*}
C_{n}(\varphi,\psi) & :=  \varphi_{n}T_{n+1}^{\ast}\psi_{n+1}-\varphi_{n+1}T_{n+1}\psi_{n}
\end{align*}
be their Wronskian (or Casoratian); if they solve \cref{eq:Schroedinger equation for localization section},
then their Wronskian is independent of $n$, as seen from the identity
\begin{align*}
C_{n}(\varphi,\psi)-C_{n-1}(\varphi,\psi) & =  \varphi_{n}(T_{n+1}^{\ast}\psi_{n+1}+T_{n}\psi_{n-1})   -(\varphi_{n+1}T_{n+1}+\varphi_{n-1}T_{n}^{\ast})\psi_{n}.
\end{align*}
Moreover, the Wronskian may also be expressed as 
\begin{align*}
C_{n}(\varphi,\psi) & =  \begin{pmatrix}\varphi_{n+1}T_{n+1} & \varphi_{n}\end{pmatrix}\begin{pmatrix}0 & -\mathds{1}_{N}\\
\mathds{1}_{N} & 0
\end{pmatrix}\begin{pmatrix}T_{n+1}^{\ast}\psi_{n+1}\\
\psi_{n}
\end{pmatrix}\,,
\end{align*}
which prompts us to associate to any sequence $\psi:\mathbb{Z}\to\mathbb{C}^{N}$
the sequence $\Psi:\mathbb{Z}\to\mathbb{C}^{2N}$ by 
\begin{align}
\Psi_{n} & :=  \begin{pmatrix}T_{n+1}^{\ast}\psi_{n+1}\\
\psi_{n}
\end{pmatrix}\,.\label{eq:Super wave function}
\end{align}
The \emph{transfer matrix }is the map 
\[
A_{n}(z):\mathbb{C}^{2N}\to\mathbb{C}^{2N},\quad\Psi_{n-1}\mapsto\Psi_{n}
\]
defined by the first equation \cref{eq:Schroedinger equation for localization section evaluated at some n}:
\begin{align}
\Psi_{n} & =  A_{n}(z)\Psi_{n-1}\,,\label{eq:Schroedinger equation formulated in terms of the transfer matrix}
\end{align}
holds for any $n\in\mathbb{Z}$ such that that equation
holds. The transfer matrix is thus given by the square matrix of order $2N$ 
\begin{align}
A_{n}(z) & =  \begin{pmatrix}zT_{n}^{\circ} & -T_{n}\\
T_{n}^{\circ} & 0
\end{pmatrix}\label{eq:Definition of transfer matrix}
\end{align}
with the abbreviation $M^{\circ}:=(M^{\ast})^{-1}=(M^{-1})^{\ast}$.
Since the second equation \cref{eq:Schroedinger equation for localization section}
is equivalent to $(H-\overline{z})\varphi^{\ast}=0$, the
constancy of the Wronskian implies 
\begin{align*}
A_{n}(\overline{z})^{\ast}JA_{n}(z)  =  J\,,\qquad
J  :=  \begin{pmatrix}0 & -\mathds{1}_{N}\\
\mathds{1}_{N} & 0
\end{pmatrix}\,,
\end{align*}
which can of course also be verified directly from \cref{eq:Definition of transfer matrix}. Another property that can be so verified is that $\det A_n(z)$ is independent of $z$ and moreover
\begin{align}
|\det A_n(z)| = 1\,\qquad(z\in\mathbb{C},n\in\mathbb{Z})\,.
\label{eq:group property of transfer matrices}
\end{align}
The matrix $J$ defines the symplectic structure of $\mathbb{C}^{2N}$.
In particular for $z=\lambda\in\mathbb{R}$ the transfer matrix is
\emph{Hermitian symplectic}, $A_{n}(\lambda)\in Sp^{\ast}_{2N}(\mathbb{C})$,
where 
\begin{align}\label{eq:definition of the Hermitian symplectic group}
Sp^{\ast}_{2N}(\mathbb{C}) & \equiv  \{A\in \Mat_{2N}(\mathbb{C})|A^{\ast}JA=J\}
\end{align}
is the Hermitian symplectic group, not to be confused with the complex symplectic group $Sp_{2N}(\mathbb{C})\equiv\{A\in \Mat_{2N}(\mathbb{C})|A^{T}JA=J\}$ (and see \cref{sec:The Hermitian symplectic group}).
But we keep $z\in\mathbb{C}$ for now so that $A_{n}(z)\notin Sp^{\ast}_{2N}(\mathbb{C})$
as a rule.

Let $I\subseteq\mathbb{Z}$ be an interval (in the sense of $\mathbb{Z}$)
and assume \cref{eq:Schroedinger equation formulated in terms of the transfer matrix}
for all $n\in I$. Then clearly 
\begin{align}
\Psi_{n} & =  B_{n,m}(z)\Psi_{m-1}\label{eq:the propagator between the super wave function at two different locations}
\end{align}
for $n,m\in I$, $n\geq m$ with the matrix of order $2N$ 
\begin{align*}
B_{n,m}(z) & :=  A_{n}(z)\cdots A_{m}(z)\,.
\end{align*}
\begin{rem}
\label{rem:Extending transfer matrix considerations to matrix sequences}The
relations between \cref{eq:Schroedinger equation for localization section,eq:Schroedinger equation for localization section evaluated at some n,eq:Super wave function,eq:Schroedinger equation formulated in terms of the transfer matrix} trivially extend to matrix solutions $\psi$ and $\Psi$ respectively
which are obtained by placing $l=1,\dots,2N$ solutions next to one
another in guise of columns. Seen that way they become linear maps
$\psi:\mathbb{C}^{l}\to\mathbb{C}^{N}$, $\Psi:\mathbb{C}^{l}\to\mathbb{C}^{2N}$.
\end{rem}

\begin{lem}
\label{lem:relation between matrix solutions at different places and a product of transfer matrices}For
$I$, $n$, $m$ and $\Psi$ as just stated we have 
\begin{align}
|\Psi_{m-1}|^{2} & \leq  \frac{\tr(|\wedge^{l-1}(BP)|^{2})}{\tr(|\wedge^{l}(BP)|^{2})}|\Psi_{n}|^{2}\,,\label{eq:relation between matrix solutions at different places and a product of transfer matrices}
\end{align}
where we use $|M|^{2}\equiv M^{\ast}M$, set $B=B_{n,m}(z)$,
and where $P:\mathbb{C}^{2N}\to\mathbb{C}^{2N}$ is an orthogonal
projection of rank $l$ such that $P\Psi_{m-1}=\Psi_{m-1}$.
\end{lem}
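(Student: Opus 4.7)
The plan is to reduce the claimed matrix inequality to a single operator inequality on $\CC^{2N}$ that can be sandwiched between $\Psi_{m-1}^*$ and $\Psi_{m-1}$, and then to match the resulting scalar prefactor to the stated trace ratio of exterior powers via the standard singular-value formula for $\wedge^j M$.

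First I would exploit the constraint $P\Psi_{m-1}=\Psi_{m-1}$ to rewrite \cref{eq:the propagator between the super wave function at two different locations} as $\Psi_n = (BP)\Psi_{m-1}$. This yields $|\Psi_n|^2 = \Psi_{m-1}^*|BP|^2\Psi_{m-1}$ together with $|\Psi_{m-1}|^2 = \Psi_{m-1}^*P\Psi_{m-1}$, so that \cref{eq:relation between matrix solutions at different places and a product of transfer matrices} becomes equivalent to the operator inequality $P \leq C\cdot |BP|^2$ on $\CC^{2N}$, where $C$ denotes the prefactor on the right-hand side of that estimate.

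To establish this operator inequality, I would note that $|BP|^2$ annihilates $V^\perp := \ker P$ and restricts to $V := \operatorname{range} P$ as a positive operator whose eigenvalues are the squared singular values $s_1^2\geq\cdots\geq s_l^2$ of $BP$. Since every transfer matrix is invertible by \cref{eq:group property of transfer matrices}, so is $B$, hence $B|_V$ is injective and $s_l>0$. Applying the elementary fact that any invertible positive operator $A$ satisfies $A^{-1}\leq \tr(A^{-1})\Id$ (its largest eigenvalue is dominated by its trace) to $A = |BP|^2|_V$ gives, on $V$, the inequality $\Id_V \leq \bigl(\sum_{j=1}^l s_j^{-2}\bigr)\cdot |BP|^2|_V$. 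Extending by zero from $V$ to all of $\CC^{2N}$ yields $P \leq \bigl(\sum_j s_j^{-2}\bigr)\cdot |BP|^2$, which is the desired operator inequality with constant $C = \sum_j s_j^{-2}$.

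It remains to identify this constant with the stated ratio. For any matrix $M$, the singular values of $\wedge^j M$ are the products $s_{i_1}\cdots s_{i_j}$ over $i_1<\cdots<i_j$, so $\tr(|\wedge^j M|^2)$ equals the $j$-th elementary symmetric polynomial in the squared singular values of $M$. Applied to $M = BP$ of rank $l$ this gives $\tr(|\wedge^l(BP)|^2) = s_1^2\cdots s_l^2$ and $\tr(|\wedge^{l-1}(BP)|^2) = (s_1^2\cdots s_l^2)\sum_{j=1}^l s_j^{-2}$, whose ratio is indeed $\sum_j s_j^{-2}$, completing the proof. The only subtlety along the way is the bookkeeping of the decomposition $\CC^{2N}=V\oplus V^\perp$ so that the inequality is first set up on $V$, where $|BP|^2$ is invertible; no step is a substantive obstacle.
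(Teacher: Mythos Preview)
Your proof is correct and follows essentially the same route as the paper: reduce to the operator inequality $P \leq \tr\bigl((|BP|^2|_V)^{-1}\bigr)\,|BP|^2$ via the elementary bound $A^{-1}\leq\tr(A^{-1})\Id$ for positive invertible $A$, then identify the scalar prefactor with the trace ratio of exterior powers. The only difference is in that last identification: the paper factors it out as a separate lemma (\cref{lem:Chapman Lemma C13}), proved through a Gramian-matrix identity borrowed from \cite{Chapman2015}, whereas you obtain the same formula more directly from the fact that $\tr(|\wedge^{j}M|^{2})$ is the $j$-th elementary symmetric polynomial in the squared singular values of $M$.
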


The proof rests on the following lemma, to be proven below.
\begin{lem}
\label{lem:Chapman Lemma C13}Let $W\subseteq V$ be linear spaces;
let $V$ be equipped with an inner product and let $P:V\to V$ be
the orthogonal projection onto $W$. Let $B:V\to V$ be a linear
map and $B':=\left.B\right|_{W}:W\to V$ its restriction to $W$.
Then 
\begin{align}
\tr(|B'|^{-2}) & =  \frac{\tr(|\wedge^{l-1}(BP)|^{2})}{\tr(|\wedge^{l}(BP)|^{2})}\label{eq:Chapman Lemma 13}
\end{align}
with $l:=\dim W$.
\end{lem}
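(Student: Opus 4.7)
The strategy is to reduce everything to a Gram-determinant calculation by choosing an adapted orthonormal basis. Pick an orthonormal basis $e_1,\dots,e_l$ of $W$ and extend it to an orthonormal basis $e_1,\dots,e_{\dim V}$ of $V$. Then $P e_i = e_i$ for $i \leq l$ and $Pe_i = 0$ for $i > l$, and the restriction $B'$ is encoded by the Gram matrix $G \in \Mat_l(\CC)$ with entries $G_{ij} = \langle Be_i, Be_j\rangle$. In particular $|B'|^2 = B'^\ast B'$ has matrix $G$ in the basis $(e_i)_{i \leq l}$, so $\tr(|B'|^{-2}) = \tr(G^{-1})$.

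Next I would compute the numerator and denominator of the right-hand side of \cref{eq:Chapman Lemma 13} by the same device. Because $BP$ annihilates $e_{l+1},\dots,e_{\dim V}$, the induced map $\wedge^k(BP)$ sends a basis wedge $e_{i_1}\wedge\cdots\wedge e_{i_k}$ to zero unless $\{i_1,\dots,i_k\}\subseteq\{1,\dots,l\}$, in which case it gives $Be_{i_1}\wedge\cdots\wedge Be_{i_k}$. Using $\tr(|\wedge^k(BP)|^2) = \sum_{|I|=k}\|BP e_{i_1}\wedge\cdots\wedge BPe_{i_k}\|^2$ and the standard identity $\|Be_{i_1}\wedge\cdots\wedge Be_{i_k}\|^2 = \det G[I]$, where $G[I]$ is the principal $k\times k$ submatrix of $G$ indexed by $I\subseteq\{1,\dots,l\}$, one gets
\begin{align*}
  \tr(|\wedge^{l}(BP)|^2) &= \det G, \\
  \tr(|\wedge^{l-1}(BP)|^2) &= \sum_{k=1}^{l} \det G[\{1,\dots,l\}\setminus\{k\}].
\end{align*}

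The last step is a cofactor-expansion identity: the diagonal entries of $G^{-1}$ are $(G^{-1})_{kk} = \det G[\{1,\dots,l\}\setminus\{k\}]/\det G$ (the sign $(-1)^{k+k}$ is $+1$), so summing on $k$ gives
\begin{align*}
  \tr(G^{-1}) = \frac{1}{\det G}\sum_{k=1}^{l}\det G[\{1,\dots,l\}\setminus\{k\}]
  = \frac{\tr(|\wedge^{l-1}(BP)|^2)}{\tr(|\wedge^{l}(BP)|^2)}.
\end{align*}
Combined with $\tr(|B'|^{-2}) = \tr(G^{-1})$ this proves \cref{eq:Chapman Lemma 13}. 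There is no real obstacle here; the only care needed is the bookkeeping of exterior powers of $BP$ (which is why the projection $P$ and not $B$ itself appears in the wedges) and the observation that $B'$ must be invertible for the statement to be meaningful, which is the case whenever $\det G \neq 0$ and otherwise both sides are understood to be $+\infty$.
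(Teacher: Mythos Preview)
Your proof is correct and follows essentially the same route as the paper: both choose an orthonormal basis of $W$, identify $|B'|^2$ with the Gram matrix $G$ of $(Be_i)_{i\le l}$, and reduce the right-hand side to $\sum_k \det G[\{1,\dots,l\}\setminus\{k\}]/\det G$. The only cosmetic difference is that the paper quotes the identity $\tr(G^{-1})=\sum_k\|w_k\|^2/\|w\|^2$ from \cite[Lemma C.12]{Chapman2015}, whereas you derive it directly via the cofactor formula for the diagonal entries of $G^{-1}$.
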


\begin{proof}[Proof of \cref{lem:relation between matrix solutions at different places and a product of transfer matrices}]
By \cref{eq:the propagator between the super wave function at two different locations}
we have 
\begin{align*}
|\Psi_{n}|^{2} & =  |B\Psi_{m-1}|^{2} =  |B'\Psi_{m-1}|^{2}
\end{align*}
with $B':=\left.B\right|_{W}$, $W:=\im P$; and thus
\begin{align*}
|\Psi_{m-1}|^{2} & \leq  \tr(|B'|^{-2})|\Psi_{n}|^{2}
\end{align*}
in view of 
\[
Q\geq\norm{Q^{-1}}^{-1}\,,\qquad\norm{Q^{-1}}\leq\tr(Q^{-1})
\]
for any $Q>0$, applied to $Q:=|B'|^{2}$. We conclude
by \cref{eq:Chapman Lemma 13}.
\end{proof}
\begin{proof}[Proof of \cref{lem:Chapman Lemma C13}] We recall \cite[Lemma C.12]{Chapman2015}:
Let $L\geq l\geq1$ be integers and let $v_{1},\cdots,v_{l}\in\mathbb{C}^{L}$
be linearly independent. Define $w:=v_{1}\wedge\dots\wedge v_{l}$
and $w_{k}:=v_{1}\wedge\dots\wedge\widehat{v_{k}}\wedge\dots v_{l}$ ($k=1,\dots,l$)
with $\widehat{\cdot}$ denoting omission and (for $l=1$) the empty product
being $w_{1}=1\in\wedge^{0}\mathbb{C}^{L}=\mathbb{C}$. Let $G$
be the Gramian matrix (of order $l$) for $v_{1},\dots,v_{l}$,
i.e. 
\begin{align*}
G_{jk} & =  \left\langle v_{j},v_{k}\right\rangle\,,\qquad(j,k=1,\dots,l)\,.
\end{align*}
Then 
\begin{align}
\tr(G^{-1}) & =  \frac{\sum_{k=1}^{l}\norm{w_{k}}^{2}}{\|w\|^{2}}\,.\label{eq:Chapman's Lemma C12}
\end{align}
This having been done, let $(e_{1},\dots,e_{l})$ be
an orthonormal basis of $W$, whence 
\begin{align*}
\varepsilon & =  e_{1}\wedge\dots\wedge e_{l}\\
\varepsilon_{k} & =  e_{1}\wedge\dots\wedge\hat{e_{k}}\wedge\dots\wedge e_{l}\,,\qquad(k=1,\dots,l)
\end{align*}
are orthonormal bases of $\wedge^{l}W$ and $\wedge^{l-1}W$,
respectively. We then apply \cref{eq:Chapman's Lemma C12} with $L=2N$
to $v_{i}=B'e_{i}$, and thus to $w=(\wedge^{l}B')\varepsilon$,
$w_{k}=(\wedge^{l-1}B')\varepsilon_{k}$ and to $G_{jk}=(|B'|^{2})_{jk}$,
the result being 
\begin{align*}
\tr(|B'|^{-2}) & =  \frac{\tr_{\wedge^{l-1}W}(|\wedge^{l-1}B'|^{2})}{\tr_{\wedge^{l}W}(|\wedge^{l}B'|^{2})}\,.
\end{align*}
Finally, just as we have $\tr_{W}(|B'|^{2})=\tr(|BP|^{2})$,
so we do 
\begin{align*}
\tr_{\wedge^{k}W}(|\wedge^{k}B'|^{2}) & =  \tr(|\wedge^{k}(BP)|^{2})\,,
\end{align*}
because $(\wedge^{k}B)(\wedge^{k}P)=\wedge^{k}(BP)$.
\end{proof}
The entries $G_{nk}=G_{nk}(z)$ of the Green's function
are matrices of order $N$ which may be looked at in their dependence
on $n$ at fixed $k$. So viewed $G_{k}=(G_{nk})_{n\in\mathbb{Z}}$
satisfies 
\begin{align*}
(H-z)G_{k} & =  \delta_{k}
\end{align*}
with $\delta_{k}=(\delta_{nk}1)_{n\in\mathbb{Z}}$. Thus
$\psi=G_{k}$ satisfies \cref{eq:Schroedinger equation for localization section}
at sites $n\neq k$ and in the matrix sense of \cref{rem:Extending transfer matrix considerations to matrix sequences}
(with $l=N$). Likewise, 
\[
\mathcal{G}_{k}=(\mathcal{G}_{nk})_{n\in\mathbb{Z}}\,,\quad\mathcal{G}_{nk}=\begin{pmatrix}T_{n+1}^{\ast}G_{n+1,k}\\
G_{nk}
\end{pmatrix}
\]
satisfies \cref{eq:Schroedinger equation formulated in terms of the transfer matrix}
with $\Psi=\mathcal{G}_{k}$ by \cref{eq:Super wave function}. In
particular, it does for $n\in I=(-\infty,k-1]$ whence \cref{lem:relation between matrix solutions at different places and a product of transfer matrices}
applies to $m\leq n=k-1$. The result is as follows: 
\begin{lem}
\label{lem:A Green's function formula}
We have 
\begin{align*}
|G_{m-1,k}(z)|^{2} & \leq  \frac{\tr(|\wedge^{l-1}(BP)|^{2})}{\tr(|\wedge^{l}(BP)|^{2})}|\mathcal{G}_{k-1,k}(z)|^{2}\,,
\end{align*}
where $B=B_{k-1,m}(z)$ and $P$ is a projection onto the range of $\mathcal{G}_{m-1,k}$.
\begin{proof}
Given the preliminaries it suffices to observe that $|G_{m-1,k}|^{2}\leq|\mathcal{G}_{m-1,k}|^{2}$.
\end{proof}
\end{lem}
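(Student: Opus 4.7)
The plan is to deduce this lemma as a direct specialization of \cref{lem:relation between matrix solutions at different places and a product of transfer matrices}, applied to the matrix-valued solution $\Psi = \mathcal{G}_k$ rather than a vector solution. The setup in the paragraph preceding the statement already does most of the work: we noted that $G_k = (G_{nk})_{n \in \mathbb{Z}}$ satisfies $(H-z) G_k = \delta_k$ and hence, by \cref{rem:Extending transfer matrix considerations to matrix sequences} with $l = N$, solves the matrix Schr\"odinger equation at every site $n \neq k$; in particular on the half-line $I = (-\infty, k-1]$. Consequently the lifted sequence $\mathcal{G}_k$, whose $n$-th block is $\begin{pmatrix} T_{n+1}^\ast G_{n+1,k} \\ G_{nk}\end{pmatrix}$, obeys the transfer-matrix recursion $\mathcal{G}_{nk} = A_n(z) \mathcal{G}_{n-1,k}$ for all $n \leq k-1$.

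With this in place, I would apply \cref{lem:relation between matrix solutions at different places and a product of transfer matrices} to $\Psi = \mathcal{G}_k$, picking the pair of sites to be $n = k-1$ (the right endpoint of $I$) and $m \leq k-1$, together with the product $B = B_{k-1,m}(z) = A_{k-1}(z) \cdots A_m(z)$. The projection $P$ is chosen to be the orthogonal projection of rank $l$ onto the range of $\mathcal{G}_{m-1,k}$, which by construction satisfies $P \mathcal{G}_{m-1,k} = \mathcal{G}_{m-1,k}$, as required by that lemma's hypothesis. The lemma then yields
\begin{align*}
|\mathcal{G}_{m-1,k}(z)|^2 &\leq \frac{\tr(|\wedge^{l-1}(BP)|^2)}{\tr(|\wedge^{l}(BP)|^2)} |\mathcal{G}_{k-1,k}(z)|^2 \, .
\end{align*}

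It remains only to pass from $\mathcal{G}_{m-1,k}$ on the left-hand side to $G_{m-1,k}$. This is the one-line observation recorded in the proof sketch: since $\mathcal{G}_{m-1,k}$ is a $2N \times l$ block-column containing $G_{m-1,k}$ as its lower $N \times l$ block, we have $|G_{m-1,k}|^2 = G_{m-1,k}^\ast G_{m-1,k} \leq \mathcal{G}_{m-1,k}^\ast \mathcal{G}_{m-1,k} = |\mathcal{G}_{m-1,k}|^2$ as a matrix inequality on $\mathbb{C}^l$, simply because the upper block contributes a positive semidefinite summand. Combining this with the displayed inequality gives the claim. There is no real obstacle here: the work has been front-loaded into \cref{lem:relation between matrix solutions at different places and a product of transfer matrices,lem:Chapman Lemma C13}, and the only thing to verify carefully is that $\mathcal{G}_k$ really does solve the transfer-matrix equation on the interval terminating at $k-1$, which is immediate from $(H-z)G_k = \delta_k$ away from $n = k$.
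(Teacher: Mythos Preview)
Your proposal is correct and follows essentially the same route as the paper: the paper's one-line proof ``Given the preliminaries it suffices to observe that $|G_{m-1,k}|^{2}\leq|\mathcal{G}_{m-1,k}|^{2}$'' is precisely your argument, with the ``preliminaries'' being the application of \cref{lem:relation between matrix solutions at different places and a product of transfer matrices} to $\Psi=\mathcal{G}_k$ on $I=(-\infty,k-1]$ that you have spelled out explicitly. You have simply unpacked what the paper leaves implicit.
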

In particular, we can pick $G^+$ as the right half-line Green's function on $[m-1,\infty)$ and $P$ a projection onto the first $N$ dimensions of $\mathbb{C}^{2N}$. As we'll see below, this will then relate $G^+_{m-1,k}(z)$ to the $N$th Lyapunov exponent of the transfer matrices, times a "constant" (in $|m-k|$) factor $\mathcal{G}_{k-1,k}(z)$ which will be controlled via the a-priori bounds \cref{sec:a-priori bound}.

\section{The Lyapunov spectrum and localization}
\label{sec:the lyapunv exponents}
In this section we define the Lyapunov spectrum associated to the random sequence of matrices $(A_n(\lambda))_n$. The main result here will be \cref{cor:Smallest Lyapunov exponent is strictly positive} which will show that the smallest positive exponent is \emph{strictly} positive for all $\lambda\in\RR\setminus\Set{0}$. Since it encodes in it the localization length, at least morally this already implies localization, and we shall show this rigorously using the FM method.

For brevity we define the maps $a, S$ into $\HSG$ by \begin{align*} \GL_N(\mathbb{C})\ni X &\stackrel{a}{\mapsto} \begin{pmatrix} X^{\circ} & 0\\
0 & X
\end{pmatrix}  \\ \mathbb{R}\ni \lambda &\stackrel{S}{\mapsto} \begin{pmatrix}\lambda & -\mathds{1}_{N}\\
\mathds{1}_{N} & 0_{N}
\end{pmatrix}\,.  \end{align*}
Then we may factorize the transfer matrix \cref{eq:Definition of transfer matrix} as $A_n(z)=S(z)a(T_n)$, the first factor being deterministic. We note that the matrices $A_n(z)$ are independent, but not identically distributed since they are distributed differently for $n$ even and odd. Hence $(A_{2n+1} A_{2n})_{n}$ is an i.i.d. random sequence.

For $\lambda\in\RR$, let $\mu_\lambda$ be the push forward measure induced by $$\GL_{N}(\mathbb{C})^{2}\ni(X,Y)\mapsto S(\lambda)a(X)S(\lambda)a(Y)\in \HSG$$ where $X$, $Y$ are distributed with $\alpha_1$, $\alpha_0$ respectively. Below we sometimes leave $\lambda$ implicit in the notation.
\begin{prop}
\label{prop:log of transfer matrix is integrable}We have $\int\log^{+}(\|g\|)\dif{\mu(g)}<\infty$
where $\log^{+}$ is the positive part of $\log$.
\begin{proof}
By definition we have
\begin{align*}
\int\log^{+}(\|g\|)\dif{\mu(g)} & =  \int\log^{+}(\norm{Sa(X)Sa(Y)})\dif{\alpha_{1}(X)}\dif{\alpha_{0}(Y)}\,.
\end{align*}
Since $\log^+$ is monotone increasing,
\begin{align*}
\log^+(\norm{Sa(X)Sa(Y)}) & \leq  2\log^+(\|S\|)+\log^+(\norm{a(X)})+\log^+(\norm{a(Y)})\,.
\end{align*}
Hence it is sufficient to show that 
\begin{align}
\int_{\GL_{N}(\mathbb{C})}\log^{+}(\norm{a(X)})\dif{\alpha_{i}(X)} & <  \infty
\,.\label{eq:log+ is integrable}
\end{align}
This follows from \cref{assu:regularity of probability distributions} by $\norm{a(X)}=\max(\norm{X^\circ,\norm{X}}),\norm{X^\circ}=\norm{X^{-1}}$, and $\log^+(t)\leq t^2$ for all $t>0$.
\end{proof}
\end{prop}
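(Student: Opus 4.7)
The plan is to unravel the push-forward $\mu$ back to an integral against $\alpha_1\otimes\alpha_0$, apply submultiplicativity of the operator norm to separate the contributions of the two random factors, and finally invoke \cref{assu:regularity of probability distributions}. Concretely, by definition of the push-forward,
\begin{align*}
\int\log^{+}(\|g\|)\dif\mu(g) = \int\log^{+}\bigl(\|S(\lambda)a(X)S(\lambda)a(Y)\|\bigr)\dif\alpha_1(X)\dif\alpha_0(Y)\,.
\end{align*}
Submultiplicativity of the operator norm combined with the elementary inequality $\log^{+}(ab)\le\log^{+}(a)+\log^{+}(b)$ (verified case-by-case depending on whether $ab\le 1$ or $ab>1$) then yields
\begin{align*}
\log^{+}\bigl(\|S(\lambda)a(X)S(\lambda)a(Y)\|\bigr)\le 2\log^{+}(\|S(\lambda)\|)+\log^{+}(\|a(X)\|)+\log^{+}(\|a(Y)\|)\,.
\end{align*}

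The first term on the right is a deterministic constant, so it suffices to show that $\int\log^{+}(\|a(X)\|)\dif\alpha_i(X)<\infty$ for each $i=0,1$. Since $a(X)$ is block-diagonal with diagonal blocks $X$ and $X^\circ=(X^{-1})^{\ast}$, one has $\|a(X)\|=\max(\|X\|,\|X^{-1}\|)$, and the task reduces to a bound on $\int\log^{+}(\|X\|)\dif\alpha_i(X)+\int\log^{+}(\|X^{-1}\|)\dif\alpha_i(X)$.

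The final step is the crude pointwise bound $\log^{+}(t)\le t^2$ (valid for every $t>0$), which matches exactly the second-moment condition in \cref{assu:regularity of probability distributions} on $\|X^{\pm 1}\|$ and yields finiteness. There is no genuine obstacle here: each step is bookkeeping, and the only point worth flagging is the mismatch between the Oseledec-type integrability class ($\log^{+}$) and the paper's $L^2$ moment hypothesis. This mismatch is comfortably bridged by $\log^{+}(t)\le t^2$; in fact the weaker $L^1$ control on $\|X^{\pm 1}\|$, via $\log^{+}(t)\le t$, would already have sufficed.
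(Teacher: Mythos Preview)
Your proof is correct and follows essentially the same approach as the paper's: unravel the push-forward, use submultiplicativity together with $\log^{+}(ab)\le\log^{+}(a)+\log^{+}(b)$ to separate the four factors, identify $\|a(X)\|=\max(\|X\|,\|X^{-1}\|)$, and conclude via $\log^{+}(t)\le t^{2}$ and \cref{assu:regularity of probability distributions}. Your closing remark that an $L^1$ moment would already suffice is a nice addendum but does not change the argument.
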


\begin{cor}
Using \cite[pp. 6]{Bougerol9781468491746} we have that the $2N$ Lyapunov exponents (henceforth \emph{LE})
\begin{align}
\gamma_{j}(z) & \equiv  \lim_{n\to\infty}\frac{1}{n}\mathbb{E}[\log(\sigma_{j}(B_{n}(z)))]\label{eq:Definition of LE}\,,
\end{align}
 where $\sigma_{j}$ is the $j$-th singular value of a matrix (ordered such that $\sigma_1$ is the largest), are
well-defined and take values in $[-\infty,\infty)$. We use the abbreviation $B_{n}(z)=B_{n,1}(z)$.
\end{cor}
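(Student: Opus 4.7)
The plan is to invoke the Furstenberg-Kesten theorem on the i.i.d.\ sequence $(g_k)_{k\geq 1}$ defined by $g_k := A_{2k}(z)A_{2k-1}(z)$, whose common law on $\HSG$ (for $z=\lambda\in\RR$) coincides, up to a trivial relabelling of $\alpha_0,\alpha_1$, with $\mu_\lambda$. The only hypothesis needed, namely $\log^+\|g_1\|\in L^1$, is supplied by \cref{prop:log of transfer matrix is integrable}. Furstenberg-Kesten then asserts, for each $p\in\{1,\dots,2N\}$, that
\begin{align*}
\Gamma_p(z) &:= \lim_{m\to\infty}\frac{1}{m}\log\|\wedge^p(g_m\cdots g_1)\|
\end{align*}
exists almost surely and in $L^1$, with values in $[-\infty,\infty)$. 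Using $\|\wedge^p M\|=\sigma_1(M)\cdots\sigma_p(M)$ and subtracting consecutive $\Gamma_p$'s extracts the individual exponents of the sequence $(g_k)_k$; dividing by $2$ (to account for the fact that each $g_k$ is a product of two transfer matrices) yields candidates for the $\gamma_j(z)$ of the original sequence $(A_n(z))_n$.

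To pass from the even subsequence $n=2m$ to all $n$ as in \cref{eq:Definition of LE}, I would write $B_{2m+1}=A_{2m+1}B_{2m}$ and exploit
\begin{align*}
\lvert\log\sigma_j(XY)-\log\sigma_j(Y)\rvert \leq \log\|X\|+\log\|X^{-1}\|\,,
\end{align*}
whose expectation is finite by \cref{assu:regularity of probability distributions}. Dividing by $n$ and letting $n\to\infty$ absorbs the odd-$n$ correction to zero, so that the expectation-based definition \cref{eq:Definition of LE} coincides with the $L^1$-limit provided by Furstenberg-Kesten, along both parities of $n$. That the limits in \cref{eq:Definition of LE} can be taken inside the expectation without loss follows from the uniform-in-$n$ $L^1$-boundedness furnished by the same theorem.

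The only real subtlety is the pairing: the individual sequence $(A_n(z))_n$ is \emph{not} i.i.d.\ because of the alternating distributions, so Furstenberg-Kesten is not applicable to it directly. Once consecutive pairs are grouped and the single leftover factor at odd $n$ is absorbed into an additive error of order $O(1)$ in expectation, the corollary is a direct translation of the standard theorem. The fact that the $\gamma_j$ are allowed to equal $-\infty$ (rather than lying in $(-\infty,\infty)$) reflects that \cref{assu:regularity of probability distributions} gives integrability of $\log^+\|g^{\pm 1}\|$ but no uniform positive lower bound on the lower singular values of the products, so the individual $\gamma_j$ for $j<2N$ may a priori diverge to $-\infty$.
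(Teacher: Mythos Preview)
Your proposal is correct and is essentially the argument the paper leaves implicit: the paper gives no proof beyond the citation to Bougerol, having already set up the i.i.d.\ paired sequence $(A_{2n+1}(z)A_{2n}(z))_n$ and verified the integrability hypothesis in \cref{prop:log of transfer matrix is integrable}. Your explicit treatment of the passage from even to odd $n$ via the singular-value perturbation inequality is a detail the paper omits entirely.

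One small remark on your final paragraph: in the present model $|\det B_n(z)|=1$ by \cref{eq:group property of transfer matrices}, so $\sum_{j=1}^{2N}\gamma_j(z)=0$; combined with $\gamma_1(z)<\infty$ this forces all exponents to be finite, and the $[-\infty,\infty)$ in the statement is just the generic Furstenberg--Kesten conclusion rather than something that can actually occur here. Also, if any exponent were to diverge to $-\infty$ it would be those with \emph{large} $j$ (small singular values), not $j<2N$ as you wrote. Neither point affects the validity of your argument.
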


\begin{rem}
\label{rem:Lyapunov spectrum symmetric about zero}While \cref{eq:group property of transfer matrices} alone does not allow us to conclude a symmetry property for the exponents, if we restrict to $z\in\mathbb{R}$, 
then the Hermitian symplectic condition implies that the exponents are symmetric about zero, that is, $\gamma_{j}(z)=-\gamma_{2N-(j-1)}(z)$
for all $j\in\{1,\dots,N\}$.
\begin{proof}
Since we are taking the logarithm, the symmetry of the singular values of the Hermitian symplectic matrix $B_{n}(z)$ about one (as shown in \cref{prop:Eigenvalues of Hermitian symplectic matrix are symmetric about S^1})
implies a symmetry of the exponents about zero.
\end{proof}
\end{rem}

\subsection{Irreducibility associated with the transfer matrices}\label{sec:irreducibility}
In order to prove the irreducibility of the semigroup generated by $\supp(\mu_\lambda)$, we study the map that takes a hopping matrix to the transfer matrix and its relation to $\HSG$.
\begin{prop}
	Let \begin{align}
	M = \begin{pmatrix}
	\lambda T^\circ & -T \\
	T^\circ & 0
	\end{pmatrix}\label{eq:general form of transfer matrix in irreducibility proof}
	\end{align}
	with $\lambda\in\RR\setminus\{0\}$. Then the map $$ \GL_N(\CC)^3\ni(T_1,T_2,T_3)\mapsto M_1 M_2 M_3 \in \HSG $$ is a submersion, i.e. its tangent map has maximal rank.
	\label{prop:product of three transfer matrices is a submersion}
\end{prop}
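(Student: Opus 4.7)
The plan is to exploit the Lie group structure of $\HSG$ and reduce submersiveness to showing that three $\operatorname{Ad}$-twisted copies of a fixed $2N^2$-dimensional subalgebra exhaust $\Lie(\HSG)$, which has real dimension $4N^2$.

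\emph{Step 1 (Factorization).} The transfer matrix factorizes as $M(T) = S(\lambda)\cdot a(T)$, where $a: \GL_N(\CC) \to \HSG$, $a(T) = \begin{pmatrix} T^\circ & 0 \\ 0 & T \end{pmatrix}$, is an injective Lie group homomorphism onto a closed subgroup $A$. Its Lie algebra is
\begin{align*}
	\mathfrak{a} := \Lie(A) = \left\{\begin{pmatrix} -X^* & 0 \\ 0 & X \end{pmatrix} : X \in \Mat_N(\CC)\right\}\,,
\end{align*}
of real dimension $2N^2$; and $A$ normalizes $\mathfrak{a}$ (conjugation by $a(T)$ sends the above element to its version with $X$ replaced by $TXT^{-1}$).

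\emph{Step 2 (The differential).} Writing $\dot M_i = M_i Y_i$ with $Y_i := M_i^{-1}\dot M_i$, Step 1 yields $Y_i \in \mathfrak{a}$, and as $\dot T_i$ varies freely over $\Mat_N(\CC)$, $Y_i$ sweeps out all of $\mathfrak{a}$. The chain rule and left translation give
\begin{align*}
	(M_1 M_2 M_3)^{-1} df(\dot T_1, \dot T_2, \dot T_3) = \operatorname{Ad}\bigl((M_2 M_3)^{-1}\bigr) Y_1 + \operatorname{Ad}\bigl(M_3^{-1}\bigr) Y_2 + Y_3\,.
\end{align*}
Since $\operatorname{Ad}(M_2 M_3)$ is a bijection of $\Lie(\HSG)$, submersiveness is equivalent to $\mathfrak{a} + \operatorname{Ad}(M_2)\mathfrak{a} + \operatorname{Ad}(M_2 M_3)\mathfrak{a} = \Lie(\HSG)$. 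Using $M_i = S a(T_i)$ and that $A$ normalizes $\mathfrak{a}$, this further reduces to
\begin{align*}
	\mathfrak{a} + \operatorname{Ad}(S)\mathfrak{a} + \operatorname{Ad}\bigl(Sa(T_2)S\bigr)\mathfrak{a} = \Lie(\HSG)\,,
\end{align*}
which notably depends on neither $T_1$ nor $T_3$.

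\emph{Step 3 (Filling $\Lie(\HSG)$).} Parametrize $\xi \in \Lie(\HSG)$ as $\xi = \begin{pmatrix} a & b \\ c & -a^* \end{pmatrix}$ with $a \in \Mat_N(\CC)$ and $b,c$ Hermitian; this accounts for all $4N^2$ real dimensions. Explicit block multiplication yields: elements of $\operatorname{Ad}(S)\mathfrak{a}$ have $c = 0$ and $b = -\lambda(X + X^*)$, which sweeps out all Hermitian matrices as $X$ varies; elements of $\operatorname{Ad}(Sa(T_2)S)\mathfrak{a}$ have bottom-left block $c = \lambda T_2^\circ(X + X^*) T_2^{-1}$, sweeping out all Hermitian matrices as well (by invertibility of $T_2$). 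Combined with the freedom of $a$ provided by $\mathfrak{a}$, and since $d = -a^*$ is forced, every $\xi \in \Lie(\HSG)$ is realized.

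The main obstacle is the bookkeeping in Step 3, specifically verifying that the $c$-block of $\operatorname{Ad}(Sa(T_2)S)\mathfrak{a}$ truly covers all Hermitian matrices. The hypothesis $\lambda \neq 0$ is essential: both the $b$-block of $\operatorname{Ad}(S)\mathfrak{a}$ and the $c$-block of $\operatorname{Ad}(Sa(T_2)S)\mathfrak{a}$ pick up an explicit factor of $\lambda$. In fact at $\lambda = 0$ one has $S = J$, so $\operatorname{Ad}(S)$ preserves $\mathfrak{a}$ and the three $\operatorname{Ad}$-twisted copies all collapse onto $\mathfrak{a}$, the submersion failing — consistent with the exceptional status of zero energy throughout the paper.
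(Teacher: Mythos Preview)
Your proof is correct and takes a genuinely different route from the paper's. The paper works in an explicit coordinate chart: it identifies the open submanifold $\mathscr{S}_0=\{M\in\HSG:\det D\neq0\}$ with $\GL_N(\CC)\times\Herm_N(\CC)^2$ via $M\cong(D,R,S)$, singles out the slice $\mathscr{S}_1=\{R=\lambda\Id\}$, and then factors the triple product as the composition of two concrete maps $(T_2,T_3)\mapsto M_2M_3\in\mathscr{S}_1$ and $(T_1,M')\mapsto M_1M'\in\mathscr{S}_0$, each of which is checked to be a submersion by writing it in the $(D,R,S)$ chart as a chain of elementary submersions (inversions, multiplications, $T\mapsto|T|^2$, affine shifts). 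Your argument instead left-translates to the identity and reduces everything to the linear-algebra statement $\mathfrak{a}+\operatorname{Ad}(S)\mathfrak{a}+\operatorname{Ad}(Sa(T_2)S)\mathfrak{a}=\Lie(\HSG)$, which you verify by a clean triangular filling of the blocks $(a,b,c)$. Your approach is shorter and makes two structural points transparent that the paper's chart computation obscures: the independence from $T_1,T_3$ (absorbed by the normalizer property of $A$), and the precise mechanism by which $\lambda=0$ fails (all three Ad-twists collapse onto $\mathfrak{a}$ since $S(0)=J$ normalizes it). The paper's approach, on the other hand, produces the $(D,R,S)$ chart of $\HSG$ as a by-product, which is of some independent interest and is reused in \cref{sec:The Hermitian symplectic group}.
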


\begin{lem}
	\label{lem:General form of Hermitian symplectic matrix when D is invertible}
	Let $$ M=\begin{pmatrix}A & B\\
	C & D
	\end{pmatrix}\in \GL_{2N}(\mathbb{C}) $$ with $D\in \GL_{N}(\mathbb{C})$. Then $M\in\HSG$ iff 
	\begin{align}
	B=RD\,,\qquad C = DS\label{eq:DRS chart for open subset of Hermitian symplectic group}
	\end{align} 
	with $R=R^\ast$, $S=S^\ast$ and \begin{align}
	A = D^\circ (\mathds{1}+B^\ast C)\,.\label{eq:form for A block in a Hermitian symplectic matrix with D invertible}
	\end{align}
\end{lem}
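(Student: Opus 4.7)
The plan is to expand the defining identity $M^\ast J M = J$ in block form, identify the independent constraints on the blocks, and invert them using the invertibility of $D$. Direct block multiplication of $M^\ast J M$ produces four equations which organize into three independent conditions: (i) $A^\ast C$ is Hermitian (top-left block); (ii) $D^\ast A - B^\ast C = \Id$ (the bottom-left block equation, whose adjoint $A^\ast D - C^\ast B = \Id$ is the top-right block equation, so no new content); and (iii) $D^\ast B$ is Hermitian (bottom-right block). Thus $M \in \HSG$ iff (i)--(iii) hold.

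For the forward direction, assume $D$ is invertible and define $R := B D^{-1}$, $S := D^{-1} C$, so that $B = R D$ and $C = D S$ as in \cref{eq:DRS chart for open subset of Hermitian symplectic group}. Condition (iii), which reads $D^\ast R D = D^\ast R^\ast D$, immediately gives $R = R^\ast$. Solving (ii) for $A$ by left-multiplication with $D^\circ = (D^\ast)^{-1}$ yields \cref{eq:form for A block in a Hermitian symplectic matrix with D invertible}. It remains to deduce $S = S^\ast$ from the hermiticity of $A^\ast C$. Substituting the formula for $A$ together with $B = RD$ and $R = R^\ast$, both $A^\ast C$ and $C^\ast A$ unfold into the shape ``linear in $S$ or $S^\ast$ plus a common quadratic term $S^\ast D^\ast R D S$''; equating them forces $S = S^\ast$.

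The converse direction is routine: inserting $B = RD$, $C = DS$ with $R, S$ Hermitian, and $A = D^\circ(\Id + B^\ast C)$, conditions (ii) and (iii) become immediate algebraic identities ($D^\ast A = \Id + B^\ast C$ by construction, $D^\ast B = D^\ast R D$ is manifestly Hermitian), and condition (i) reduces to the same cancellation already performed in the forward direction.

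The main obstacle I anticipate is the final step of the forward direction: unlike $R$, the identity $S = S^\ast$ does not follow from a single block equation but must be extracted by combining condition (i) with the explicit formula for $A$, and one has to keep careful track of adjoints to see the quadratic term $S^\ast D^\ast R D S$ appear on both sides and cancel cleanly. Everything else is a direct block-algebra exercise.
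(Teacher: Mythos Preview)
Your proposal is correct and follows essentially the same route as the paper: both reduce the symplectic condition to the three block equations of \cref{prop:general block description of Hermitian symplectic group} (self-adjointness of $A^\ast C$ and $B^\ast D$, plus $A^\ast D - C^\ast B = \Id$), then use invertibility of $D$ to pass to the $(D,R,S)$ parametrization. Your treatment is in fact slightly more explicit than the paper's, which writes out only the converse direction in detail (computing $B^\ast D = D^\ast R D$ and $A^\ast C = S + S D^\ast R D S$ and noting both are self-adjoint when $R,S$ are) and leaves the forward direction---in particular the extraction of $S = S^\ast$ from condition (i)---to the reader; your anticipation of that step as the only nontrivial point is spot on.
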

The proof may be found in \cref{sec:The Hermitian symplectic group}.

Let $\mathscr{S}_0\subseteq\HSG$ be the open subset (and hence submanifold) given by $\mathscr{S}_0=\{M\in\HSG|\det D \neq 0\}$. By \cref{lem:General form of Hermitian symplectic matrix when D is invertible}, the map $\mathscr{S}_0\to \GL_N(\mathbb{C})\times \Herm_N(\mathbb{C})\times \Herm_N(\mathbb{C})$, $M\mapsto (D,R,S)$ is a coordinate chart of $\mathscr{S}_0$; for short $M\cong(D,R,S)$. In particular $$ \dim_\mathbb{R} \HSG = \dim_{\mathbb{R}} \mathscr{S}_0 = 2N^2+N^2+N^2=4N^2\,.$$
In the following $\lambda \in \mathbb{R}$, $\lambda\neq0$ is fixed.

Let $\mathscr{S}_1\subseteq\mathscr{S}_0$ be the submanifold which in terms of the chart consists of matrices $M\cong(D,R=\lambda\mathds{1},S)=:(D,S)$. In particular $\dim_{\mathbb{R}}\mathscr{S}_1 = 3N^2$. Moreover, we consider matrices of the form \cref{eq:general form of transfer matrix in irreducibility proof} with $T\in \GL_N(\mathbb{C})$. Then $M\in\HSG$, as remarked before \cref{eq:definition of the Hermitian symplectic group}, though they are not of the form discussed in \cref{lem:General form of Hermitian symplectic matrix when D is invertible} because of $D=0$. However, their products are, $M_1 M_2\in\mathscr{S}_1$, with 
\begin{align}
M_1 M_2 \cong (D,S)=(-T_1^\circ T_2,-\lambda(T_2^\ast T_2)^{-1})
\label{eq:DRS chart for product of two transfer matrices}
\end{align}
in terms of the chart. In fact
\begin{align*}
M_1 M_2 = \begin{pmatrix} \star & -\lambda T_1^\circ T_2 \\
\lambda T_1^\circ T_2^\circ & - T_1^\circ T_2\,,
\end{pmatrix}
\end{align*}
from which the claims about $D$ and $R$, i.e. $D\in\GL_N(\CC)$ and $R=\lambda\Id$, are evident by comparison with \cref{eq:DRS chart for open subset of Hermitian symplectic group}; the one about $S$ follows from $T_2 (T_2^\ast T_2)^{-1}=T_2^\circ$.

Let $M$ be as in \cref{eq:general form of transfer matrix in irreducibility proof} and $M'\cong(D,S)\in\mathscr{S}_1$. Then $MM'\in\mathscr{S}_0$ with $MM'\cong(\tilde{D},\tilde{R},\tilde{S})$, where 
\begin{align}
\tilde{D} = \lambda T^\circ D\,,\qquad \tilde{R} = \lambda - \lambda^{-1}|T^\ast|^2\,,\qquad\tilde{S} = S + \lambda^{-1} |D|^{-2}\,.\label{eq:DRS for product of three transfer matrices}
\end{align}
In fact, 
\begin{align*}
\begin{pmatrix}
\lambda T^\circ & -T \\
T^\circ & 0
\end{pmatrix} \begin{pmatrix}
A & \lambda D \\
D S & D
\end{pmatrix} = \begin{pmatrix}
\star & (\lambda^2 T^\circ - T) D \\
T^\circ A  & \lambda T^\circ D
\end{pmatrix}\,,
\end{align*}
where $A=D^\circ + \lambda D S$ by \cref{eq:form for A block in a Hermitian symplectic matrix with D invertible}. Then the claim \cref{eq:DRS for product of three transfer matrices} is obvious for $\tilde{D}$ and by \cref{eq:DRS chart for open subset of Hermitian symplectic group} amounts for the rest to $$ T^\circ A = \tilde{D}\tilde{S}\,,\qquad (\lambda^2 T^\circ - T)D = \tilde{R}\tilde{D}\,. $$
The two conditions simplify to $ D^{-1} D^\circ + \lambda S = \lambda \tilde{S}$ and to $\lambda^2 - T(T^\circ)^{-1} = \lambda \tilde{R}$, which by $D^{-1}D^\circ = |D|^{-2}$ are satisfied.

The next lemma concludes the proof of \cref{prop:product of three transfer matrices is a submersion}.

\begin{lem}
	The above product maps \begin{align*} \GL_N(\mathbb{C})\times \GL_N(\mathbb{C}) \to \mathscr{S}_1\,,&\qquad (T_1,T_2)\mapsto M_1 M_2=M'\,,\\\GL_N(\mathbb{C})\times\mathscr{S}_1 \to \mathscr{S}_0\,,&\qquad(T,M')\mapsto MM' \end{align*} are submersions.
	\begin{proof}
		We represent domain and codomain of both maps in their charts: 
		\begin{align}
		(T_1,T_2)\mapsto(D,S)\,,\qquad(T,D,S)\mapsto(\tilde{D},\tilde{R},\tilde{S})\label{eq:Product of transfer matrices two or three in the DRS chart}
		\end{align}
		with right hand sides given by \cref{eq:DRS chart for product of two transfer matrices,eq:DRS for product of three transfer matrices}. As a preparation we observe that the following maps are submersions: 
		\begin{itemize}
			\item $\GL_N(\mathbb{C})\toitself$, $T\mapsto T^{-1},T^\ast, TM$ or $MT$ for some fixed $M\in \GL_N(\mathbb{C})$,
			\item $\GL_N(\mathbb{C})\cap \Herm_{N}(\mathbb{C})\toitself$, $S\mapsto S^{-1}$
			\item $\GL_N(\mathbb{C})\to \Herm_{N}(\mathbb{C})$, $T\mapsto |T|^2$ or $|T^\ast|^2$,
			\item $\Herm_{N}(\mathbb{C})\toitself: S\mapsto S+M$ for some fixed $M\in\Herm_N(\CC)$.
		\end{itemize}
		Only the map $T\mapsto |T|^2$ may deserve comment. It has a differentiable right inverse $S\mapsto S^{1/2}$ on the open subset $\{S>0\}\subseteq \Herm_N(\mathbb{C})$, whence the claim.
		
%		It acts between manifolds of dimension $2N^2$ and $N^2$ and its tangent map at $T$ is $D f:X\mapsto X^\ast T+T^\ast X$. The null space consists of matrices $X$ such that $i T^\ast X \in \Herm_N(\mathbb{C})$, whence its dimension is $N^2$. The claim follows by the rank-nullity theorem.
%		
		The first map \cref{eq:Product of transfer matrices two or three in the DRS chart} is then written as a concatenation $$ (T_1,T_2)\mapsto(D,T_2)\,,\qquad T_2\mapsto S$$ of maps that are seen to be submersions. Likewise for the second map: $$ (T,D,S)\mapsto (T,D,\tilde{S})\,,\qquad(T,D)\mapsto(T,\tilde{D})\,,\qquad T\mapsto \tilde{R}\,.$$
	\end{proof}
\end{lem}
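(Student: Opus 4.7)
The plan is to work in the $(D,R,S)$-chart on $\mathscr{S}_0$ that the preceding paragraphs already provide, where the two product maps take the explicit form
\begin{align*}
(T_1,T_2) &\mapsto (D,S)=\bigl(-T_1^\circ T_2,\;-\lambda(T_2^\ast T_2)^{-1}\bigr)\,,\\
(T,D,S) &\mapsto (\tilde D,\tilde R,\tilde S)=\bigl(\lambda T^\circ D,\;\lambda-\lambda^{-1}|T^\ast|^2,\;S+\lambda^{-1}|D|^{-2}\bigr)\,.
\end{align*}
Since a composition of submersions is a submersion, the strategy is to exhibit each of these as such a composition of elementary pieces. For that I would first assemble a short library of atomic submersions: on $\GL_N(\CC)$, the maps $T\mapsto T^{-1}$, $T\mapsto T^\ast$ and left/right multiplication by a fixed invertible matrix are diffeomorphisms; on $\Herm_N(\CC)$ every translation is a diffeomorphism; on the invertible Hermitian matrices $S\mapsto S^{-1}$ is a diffeomorphism. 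The only atom whose submersivity is not a tautology is $T\mapsto TT^\ast$ from $\GL_N(\CC)$ onto the positive-definite Hermitian cone, where the real dimension drops from $2N^2$ to $N^2$; for this I would exhibit the smooth right inverse $S\mapsto S^{1/2}$, from which surjectivity of the tangent map follows.

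With this library in hand, the first map factors as $(T_1,T_2)\mapsto(D,T_2)\mapsto(D,S)$: the first arrow is a diffeomorphism via the formula $T_1^\circ=-DT_2^{-1}$, and the second acts only on the $T_2$-slot through $T_2\mapsto-\lambda(T_2^\ast T_2)^{-1}$, a chain of library submersions. Likewise the second map factors as $(T,D,S)\mapsto(T,D,\tilde S)\mapsto(T,\tilde D,\tilde S)\mapsto(\tilde D,\tilde R,\tilde S)$: in the first arrow $S\mapsto S+\lambda^{-1}|D|^{-2}$ is a diffeomorphism for fixed $D$; in the second $D\mapsto \lambda T^\circ D$ is a diffeomorphism for fixed $T$; in the third we forget $T$ and replace it with $\tilde R=\lambda-\lambda^{-1}|T^\ast|^2$, a library submersion in $T$ composed with an affine diffeomorphism, the $\tilde D,\tilde S$-slots being projections. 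Each arrow is a submersion, hence so is the composition.

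The main obstacle, and essentially the only nontrivial check, is the submersivity of $T\mapsto TT^\ast$; once the smooth square-root section has pinned that down, everything else is bookkeeping in the chart, engineered so that every intermediate step varies exactly one coordinate through a single library atom, with the other coordinates carried along trivially.
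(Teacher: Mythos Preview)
Your proposal is correct and follows essentially the same approach as the paper: working in the $(D,R,S)$-chart, assembling the identical library of elementary submersions (with the square-root section handling the only nontrivial atom $T\mapsto |T^\ast|^2$), and factoring each product map through the same chain of coordinate-wise diffeomorphisms and submersions. Your write-up is in fact slightly more explicit than the paper's about why each intermediate arrow is a submersion, but the argument is the same.
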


\begin{prop}
	\label{prop:open subset of symplectic group}Let $\lambda\in\mathbb{R}\setminus\{0\}$. Then the semigroup $T_{\mu_{\lambda}}$ generated by $\supp \mu_\lambda$ contains an open subset of $\HSG$.
	\begin{proof}
		By definition, the measure $\mu_\lambda$ is the one induced by $\alpha_0, \alpha_1$ through \cref{eq:general form of transfer matrix in irreducibility proof} and the map $(M_1,M_2)\mapsto B:=M_2 M_1$. Now $T_{\mu_{\lambda}}$ contains for sure all matrices $B'B=M_2'M_1'M_2M_1$. By \cref{assu:probability measures support contains open subset}, \cref{prop:product of three transfer matrices is a submersion} and the submersion theorem \cite[Theorem 7.1]{Bredon_1993}, the matrices $M_1'M_2M_1$ cover an open subset of $\mathscr{S}_0$ and hence of $\HSG$, and so does $B'B$.
	\end{proof}
\end{prop}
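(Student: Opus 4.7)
The plan is to bootstrap from \cref{prop:product of three transfer matrices is a submersion} by noting that any two elements of $\supp\mu_\lambda$ already furnish a product of four transfer matrices, three of which I can let vary freely while keeping one fixed. By definition, a generic element of $\supp\mu_\lambda$ has the form $S(\lambda)a(X)\,S(\lambda)a(Y)$ with $X\in\supp\alpha_1$ and $Y\in\supp\alpha_0$, so $T_{\mu_\lambda}$ contains all four-fold products $(M'_1 M'_2)(M_1 M_2)$ of transfer matrices of the form \cref{eq:general form of transfer matrix in irreducibility proof}, with the $T$'s drawn in the alternating pattern from $\supp\alpha_1$, $\supp\alpha_0$, $\supp\alpha_1$, $\supp\alpha_0$.

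Next, I would fix the leftmost transfer matrix $M'_1$ at an arbitrary sample $T'_1\in\supp\alpha_1$ and let the other three vary. By \cref{assu:probability measures support contains open subset} there are non-empty open sets $U'_2\subseteq\supp\alpha_0$, $U_1\subseteq\supp\alpha_1$ and $U_2\subseteq\supp\alpha_0$ over which $T'_2,T_1,T_2$ may range independently. \cref{prop:product of three transfer matrices is a submersion} tells us that the map $(T'_2,T_1,T_2)\mapsto M'_2 M_1 M_2$ is a submersion into $\HSG$, so the submersion theorem \cite[Theorem 7.1]{Bredon_1993} guarantees that its image on $U'_2\times U_1\times U_2$ is an open subset of $\HSG$. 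Finally, left multiplication by the fixed invertible element $M'_1$ is a diffeomorphism of $\HSG$, so $M'_1\cdot (M'_2 M_1 M_2)$ also covers an open subset of $\HSG$, and this open subset lies in $T_{\mu_\lambda}$.

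The principal work has been absorbed into \cref{prop:product of three transfer matrices is a submersion}; the remaining argument is essentially bookkeeping. The only potential subtlety is making the alternating distribution match up with the three free slots in the submersion statement, which is why I pair two copies of $\supp\mu_\lambda$ (yielding four transfer matrices) and then fix one outer factor; this leaves exactly three slots to apply the submersion result, with distributions $\alpha_0,\alpha_1,\alpha_0$ respectively, each of which is fat by \cref{assu:probability measures support contains open subset}.
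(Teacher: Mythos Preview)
Your proof is correct and follows essentially the same approach as the paper's: take a product of two elements of $\supp\mu_\lambda$ (four transfer matrices), fix one outer factor, apply \cref{prop:product of three transfer matrices is a submersion} to the remaining three (whose hopping matrices range over open sets by \cref{assu:probability measures support contains open subset}), and restore the fixed factor by a diffeomorphism. The only difference is cosmetic notation and that you make the alternating-distribution bookkeeping explicit, which the paper leaves implicit.
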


\begin{rem}
For the usual Anderson model on a strip, as in \cite{KLEIN1990135,Simon1985}
for example, the transfer matrix is rather 
\begin{align*}
A_{x}(z) & =  \begin{pmatrix}z-V_{x} & -\mathds{1}_{N}\\
\mathds{1}_{N} & 0_{N}
\end{pmatrix}
\end{align*}
with $(V_{x})_{x\in\mathbb{Z}}$ the independent, identically
distributed sequence of onsite potentials (which in general should
take values in $\Herm_{N}(\mathbb{C})$). Then known
results, say, in \cite{KLEIN1990135} (and references therein) show
that the semigroup generated by the support of this transfer matrix
contains an open subset of the Hermitian symplectic group (for \emph{all}
$z\in\mathbb{C}$). Taking this result (which holds also for $z=0$) to replace our \cref{prop:open subset of symplectic group} and proving an a-priori bound, one could extend the analysis here to the Anderson model on the strip as well.
\end{rem}

\begin{cor}
\label{cor:The Lyapunov spectrum is simple}If $\lambda\in\mathbb{R}\backslash\{0\}$,
then the Lyapunov spectrum is simple: $\gamma_{j}(\lambda)\neq\gamma_{j'}(\lambda)$
for all $j\neq j'$ in $\{1,\dots,2N\}$. 
\begin{proof}
We apply \cite[Proposition IV.3.5]{Bougerol9781468491746}, which goes through even though it is applied on $Sp_{2N}(\mathbb{R})$ whereas here we apply it on $\HSG$.
\end{proof}
\end{cor}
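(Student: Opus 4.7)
The plan is to invoke a Furstenberg--Gol'dsheid--Margulis type simplicity criterion for products of Hermitian symplectic matrices, exactly as Bougerol--Lacroix carry it out in the case of the real symplectic group. The criterion states that for an i.i.d.\ sequence drawn from a probability measure $\mu$ on $\HSG$ with $\int \log^+\|g\|\,\dif\mu(g) < \infty$, if the closed semigroup $T_\mu$ generated by $\supp \mu$ is $p$-strongly irreducible and $p$-contracting on $\wedge^p \CC^{2N}$ for every $p=1,\dots,N$, then the top $N$ Lyapunov exponents are strictly ordered, $\gamma_1 > \gamma_2 > \cdots > \gamma_N \ge 0$. Combined with the symmetry $\gamma_j(\lambda) = -\gamma_{2N-j+1}(\lambda)$ of \cref{rem:Lyapunov spectrum symmetric about zero}, this strict ordering forces full simplicity of the entire $2N$-tuple $(\gamma_j(\lambda))_{j=1}^{2N}$.

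First I would observe that the integrability hypothesis is precisely \cref{prop:log of transfer matrix is integrable} applied to the i.i.d.\ pair-product measure $\mu_\lambda$. The substantive work lies in checking $p$-strong irreducibility and $p$-contractivity, both of which I would deduce uniformly from \cref{prop:open subset of symplectic group}: the closed semigroup $T_{\mu_\lambda}$ contains a nonempty open subset $U \subseteq \HSG$. Openness immediately rules out the existence of any finite $T_{\mu_\lambda}$-invariant family of proper subspaces in $\wedge^p\CC^{2N}$, because the orbit of any such subspace under $U$ would be an open subset of the relevant Grassmannian and hence could not be a finite union of proper subvarieties; this yields strong irreducibility. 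For contractivity, I would use openness of $U$ to perturb any chosen element along a Cartan subgroup, staying inside $U$, so as to produce matrices with arbitrarily large ratio between the $p$-th and $(p+1)$-th singular values; such a matrix is proximal on $\wedge^p\CC^{2N}$ and belongs to $T_{\mu_\lambda}$.

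The main obstacle I anticipate is purely expository rather than mathematical: Bougerol--Lacroix's Proposition IV.3.5 is stated for $Sp_{2N}(\RR)$, while here we operate on $\HSG$. However, their proof uses the ambient group only through two structural facts -- that the singular values come in reciprocal pairs $\{\sigma,\sigma^{-1}\}$, and that the standard representation on $\wedge^p\CC^{2N}$ has the expected highest-weight structure. The first is built into the Hermitian symplectic condition $A^\ast J A = J$ just as in the real symplectic case, and the second is independent of the real form chosen. Hence the proposition transfers verbatim, and \cref{prop:open subset of symplectic group} together with \cref{prop:log of transfer matrix is integrable} furnishes all of its hypotheses.
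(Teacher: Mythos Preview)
Your proposal is correct and follows essentially the same approach as the paper: both invoke Bougerol--Lacroix Proposition~IV.3.5, feeding it the open-subset property of \cref{prop:open subset of symplectic group} and the integrability from \cref{prop:log of transfer matrix is integrable}, and both note that the passage from $Sp_{2N}(\RR)$ to $\HSG$ is harmless. The paper's proof is a one-line citation; you have simply unpacked the hypotheses and sketched why the transfer between real forms goes through.
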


\begin{cor}
\label{cor:Smallest Lyapunov exponent is strictly positive}
If $\lambda\in\mathbb{R}\backslash\{0\}$, then $\gamma_{N}(\lambda)>0$.
\begin{proof}
We always have $\gamma_{N}\geq\gamma_{N+1}$ because this is how we
choose the ordering of the labels. By \cref{rem:Lyapunov spectrum symmetric about zero}
we have that $\gamma_{N}(\lambda)=-\gamma_{N+1}(\lambda)$
and by \cref{cor:The Lyapunov spectrum is simple} $\gamma_{N}(\lambda)\neq\gamma_{N+1}(\lambda)$.
\end{proof}
\end{cor}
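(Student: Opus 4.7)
The plan is to combine the two immediately preceding corollaries with the symmetry remark from \cref{rem:Lyapunov spectrum symmetric about zero}. Three ingredients are in hand: (i) by the ordering convention used in \cref{eq:Definition of LE}, one has $\gamma_N(\lambda)\geq\gamma_{N+1}(\lambda)$ trivially; (ii) by the Hermitian-symplectic symmetry of $A_n(\lambda)$ for real $\lambda$, which forces the singular values of $B_n(\lambda)$ to be symmetric about $1$, one gets $\gamma_N(\lambda)=-\gamma_{N+1}(\lambda)$; and (iii) by \cref{cor:The Lyapunov spectrum is simple}, since $\lambda\neq0$, all exponents are distinct, so in particular $\gamma_N(\lambda)\neq\gamma_{N+1}(\lambda)$.

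Given these, the argument is essentially forced. From (i) and (iii) one upgrades the weak to a strict inequality: $\gamma_N(\lambda)>\gamma_{N+1}(\lambda)$. Substituting (ii) yields $\gamma_N(\lambda)>-\gamma_N(\lambda)$, i.e.\ $2\gamma_N(\lambda)>0$, which gives the conclusion. No further computation or estimate is needed; the statement is a corollary in the strict sense.

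If I were to anticipate where a subtle issue could hide, it would be only in making sure one is entitled to invoke the symmetry: the symmetry-about-zero in \cref{rem:Lyapunov spectrum symmetric about zero} rests on $A_n(\lambda)\in Sp^\ast_{2N}(\mathbb{C})$, which holds precisely for $\lambda\in\mathbb{R}$, and on \cref{prop:Eigenvalues of Hermitian symplectic matrix are symmetric about S^1} being applicable to the finite products $B_n(\lambda)$ (which are themselves Hermitian symplectic since $Sp^\ast_{2N}(\mathbb{C})$ is a group). Both are in place. Likewise simplicity in \cref{cor:The Lyapunov spectrum is simple} has been proved via Furstenberg's theorem, using the irreducibility/openness established in \cref{prop:open subset of symplectic group} and the integrability provided by \cref{prop:log of transfer matrix is integrable}; the restriction $\lambda\neq0$ enters there, which is exactly why our statement excludes zero. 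So the corollary is a clean combinatorial consequence, and I would write it in three lines.
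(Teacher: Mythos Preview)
Your proof is correct and follows essentially the same three-step argument as the paper: ordering gives $\gamma_N\geq\gamma_{N+1}$, the Hermitian-symplectic symmetry (\cref{rem:Lyapunov spectrum symmetric about zero}) gives $\gamma_N=-\gamma_{N+1}$, and simplicity (\cref{cor:The Lyapunov spectrum is simple}) gives $\gamma_N\neq\gamma_{N+1}$, forcing $\gamma_N>0$. Your additional remarks on why the cited ingredients apply are accurate but not needed for the proof itself.
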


\begin{rem}
$\gamma_{N}(0)=0$ is possible though generically false. See also \cref{sec:chiral RBM}.
\begin{proof}
When $z=0$, via \cref{eq:Schroedinger equation formulated in terms of the transfer matrix},
for even sites:
\begin{align*}
\Psi_{2x} & \equiv  \begin{pmatrix}T_{2x+1}^{\ast}\psi_{2x+1}\\
\psi_{2x}
\end{pmatrix} =  A_{2x}(0)\Psi_{2x-1}\\
 & =  \begin{pmatrix}0 & -T_{2x}\\
T_{2x}^\circ & 0_{N}
\end{pmatrix}\begin{pmatrix}T_{2x}^{\ast}\psi_{2x}\\
\psi_{2x-1}
\end{pmatrix} =  \begin{pmatrix}-T_{2x}\psi_{2x-1}\\
\psi_{2x}
\end{pmatrix}\,.
\end{align*}
That is, 
\begin{align}
\psi_{2x+1} & =  -T_{2x+1}^\circ T_{2x}\psi_{2x-1}\,,\label{eq:zero energy transfer matrices}
\end{align}
and similarly for the odd sites, 
\begin{align*}
\psi_{2x+2} & =  -T_{2x+2}^\circ T_{2x+1}\psi_{2x}\,.
\end{align*}

As a result, within its zero eigenspace, $H$ commutes with the operator
$(-1)^{X}$ where $X$ is the position operator (this operator
gives the parity of the site) and the problem splits into two independent
first-order difference equations, with transfer matrices $-T_{2x+1}^\circ T_{2x}$
and $-T_{2x+2}^\circ T_{2x+1}$ respectively.
These transfer matrices are \emph{not} symplectic, nor is the absolute
value of their determinant equal to one\textendash they are merely
elements in $\GL_{N}(\mathbb{C})$, and only their direct
sum has these properties.

Hence the theorem insuring the simplicity of the Lyapunov spectrum,
\cite[pp. 78, Theorem IV.1.2]{Bougerol9781468491746} does not help
in this case, since the simplicity of the Lyapunov spectrum of these
transfer matrices will not imply that none of the exponents are zero.
Instead we are reduced to the more direct question of whether
any of the exponents are zero or not.

Here is a (trivial) example where there is a zero exponent: when $N=1$, there is only one exponent for each (separate) chirality sector, and the hopping matrices are merely complex numbers. Then for the (even) positive chirality e.g. we have
\begin{align*}
\gamma_{1}^+(0) & \equiv  \lim_{n\to\infty}\frac{1}{n}\mathbb{E}[\log(|(-T_{2n+1}^\circ T_{2n})(-T_{2n-1}^\circ T_{2n-2})\dots(-T_{3}^\circ T_{2})|)]\\
 & =  \lim_{n\to\infty}\frac{1}{n}(-\sum_{l=3,l\text{ odd}}^{2n+1}\mathbb{E}[\log(|T_{l}|)]+\sum_{l=2,l\text{ even}}^{2n}\mathbb{E}[\log(|T_{l}|)])\\
 &   (\text{independence property})\\
 & =  \mathbb{E}_{\alpha_{0}}[\log(|\cdot|)]-\mathbb{E}_{\alpha_{1}}[\log(|\cdot|)]\,.
\end{align*}
We immediately see that when $\alpha_{1}=\alpha_{0}$, the exponent is zero,
and when $\alpha_{1}\neq\alpha_{0}$ and both have non-zero $\log$-expectation
value (not the same value), then the exponent is non-zero.

In general, as can be seen from the formula for the transfer matrices,
if $\supp \alpha_{1} $ is concentrated within the unit
ball of $\GL_{N}(\mathbb{C})$ and $\supp \alpha_{0} $
is very far outside the unit ball of $\GL_{N}(\mathbb{C})$,
then we are guaranteed that none of the exponents will be zero at zero energy.
\end{proof}
\end{rem}

In the rest of this section we establish continuity properties of the exponents and finally connect $\gamma_N(\lambda)$ to the decay rate (in $n$) of a product of $n$ transfer matrices. This is a simple extension of the analysis in \cite[Section 2]{KLEIN1990135} to the complex-valued case with off-site randomness. Thus we frequently use the notation and conventions of \cite{KLEIN1990135} below sometimes without explicit reference.

We adopt the following viewpoint which is customary in the study of product of random matrices. Since we are analyzing matrices in $\HSG$, we'll be interested in \emph{isotropic} subspaces (the subspaces of $\mathbb{C}^{2N}$ on which $J$ restricts to the zero bilinear form) which are left invariant by $\HSG$. Correspondingly we study the \emph{isotropic} Grassmannian manifold, $\bar{L}_k$, the set of isotropic subspaces of dimension $k$ within $\mathbb{C}^{2N}$ with $k\in\{1,\dots,N\}$. A convenient way to parametrize such isotropic subspaces is via exterior powers: a simple vector $u_1\wedge\dots\wedge u_k\in\wedge^k\mathbb{C}^{2N}$ with $\{u_i\}_i$ linearly independent and $\langle u_i, J u_j\rangle = 0$ for all $i,j\in\{1,\dots,k\}$ defines a point in $\bar{L}_k$, and one can talk about the action of $\HSG$ on such a point by lifting $g\in\HSG$ to $\wedge^k g$ on $\wedge^k\mathbb{C}^{2N}$.

\begin{defn}
For any $[v]\in\bar{L}_k$
and $z\in\mathbb{C}$ let 
\begin{align*}
\Phi_{z}([v]) & :=  \mathbb{E}\bigl[\log\bigl(\frac{\norm{\wedge^k A_{1}(z)v}}{\|v\|}\bigr)\bigr]\,.
\end{align*}
\end{defn}

\begin{prop}\label{prop:cocyle continuous wrt vector}
$[v]\mapsto\Phi([v])$ is continuous.
\begin{proof}
This is \cite[Prop. 2.4]{KLEIN1990135} or \cite[V.4.7
(i)]{Carmona1990} for our model. Via \cref{prop:bound on log of norm of exterior product of matrix of |det|=00003D1}
and \cref{prop:log of transfer matrix is integrable} we find that
the ``sequence'' $[v]\mapsto\log(\frac{\norm{\wedge^k A_{1}v}}{\|v\|})$
is bounded by the integrable function $k(2N-1)\log(\norm{ A_{1}})$.
So using the Lebesgue dominated convergence theorem and the fact that
$[v]\mapsto\log(\frac{\norm{\wedge^k A_{1}v}}{\|v\|})$ is continuous
we find our result.
\end{proof}
\end{prop}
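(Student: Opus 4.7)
The plan is to deduce continuity of $\Phi_z$ by the dominated convergence theorem applied to the random integrand
\[
F_{\omega}([v]) \;:=\; \log\!\bigl(\norm{\wedge^{k}A_{1}(z;\omega)v}/\|v\|\bigr),
\]
which descends to a well-defined function on $\bar{L}_{k}$ because of the homogeneity in $v$.

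First I would argue pointwise (in $\omega$) continuity of $F_{\omega}$. Since $T_{1}\in\GL_{N}(\mathbb{C})$ almost surely, the transfer matrix $A_{1}(z)=S(z)a(T_{1})$ is almost surely invertible, and therefore $\wedge^{k}A_{1}(z)$ is an invertible linear map on $\wedge^{k}\mathbb{C}^{2N}$. Thus $\norm{\wedge^{k}A_{1}(z)v}$ never vanishes on nonzero simple representatives of points in $\bar{L}_{k}$, and $F_{\omega}$ is the composition of continuous maps.

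Next I would establish a $[v]$-uniform integrable bound. By \cref{eq:group property of transfer matrices} we have $|\det A_{1}(z)|=1$ almost surely, so the smallest singular value of $A_{1}(z)$ satisfies $\sigma_{\min}(A_{1}(z))\geq\norm{A_{1}(z)}^{-(2N-1)}$, yielding $\norm{A_{1}(z)^{-1}}\leq\norm{A_{1}(z)}^{2N-1}$. Passing to the $k$-th exterior power and using $\norm{\wedge^{k}B}\leq\norm{B}^{k}$ for both $B=A_{1}(z)$ and $B=A_{1}(z)^{-1}$, one obtains
\[
|F_{\omega}([v])|\;\leq\;k(2N-1)\log^{+}\!\norm{A_{1}(z;\omega)}
\]
independently of $[v]\in\bar{L}_{k}$, which is the content of the cited bound on the log-norm of exterior products of matrices with $|\det|=1$. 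Integrability of this dominating function is exactly \cref{prop:log of transfer matrix is integrable}, once one notes that $\norm{S(z)}$ is deterministic and that $\norm{a(T_{1})}$ contributes via the same argument as there.

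With pointwise continuity and an integrable dominant in hand, dominated convergence delivers continuity of $[v]\mapsto \mathbb{E}[F_{\omega}([v])]=\Phi_{z}([v])$. The only substantive step is the uniform estimate on $F_{\omega}$; everything else is bookkeeping. I do not anticipate a real obstacle, but the one place to be careful is in ensuring the bound remains valid as $v$ varies within its equivalence class in $\bar{L}_{k}$, which is immediate from the $k$-homogeneity of numerator and denominator in the ratio defining $F_{\omega}$.
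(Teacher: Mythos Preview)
Your proof is correct and follows essentially the same approach as the paper: you bound the integrand uniformly in $[v]$ by $k(2N-1)\log\norm{A_{1}(z)}$ via the exterior-power/unimodular-determinant estimate (which is exactly \cref{prop:bound on log of norm of exterior product of matrix of |det|=00003D1}), invoke \cref{prop:log of transfer matrix is integrable} for integrability, and apply dominated convergence. The additional care you take with well-definedness on $\bar{L}_{k}$ and the explicit singular-value argument are welcome elaborations but do not change the route.
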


\begin{prop}\label{prop:cocycle lip cont wrt energy}
$z\mapsto\Phi_{z}([v])$ is Lipschitz continuous,
uniformly in $[v]$, as long as $z$ ranges in a compact subset of $\CC$.
\begin{proof}
This is \cite[Prop. 2.4]{KLEIN1990135}  or \cite[V.4.7
(ii)]{Carmona1990} for our model. First note that since $\norm{Mu}=\norm{ML^{-1}Lu}\leq\norm{ML^{-1}}\norm{Lu}$
we have for all $[v]$,
\begin{align*}
\Phi_{z}([v])-\Phi_{w}([v]) & \leq  \mathbb{E}[\log(\norm{A_{1}(z)(A_{1}(w))^{-1}})]\,.
\end{align*}
By symmetry, 
\begin{align*}
\Phi_{w}([v])-\Phi_{z}([v]) & \leq  \mathbb{E}[\log(\norm{A_{1}(w)(A_{1}(z))^{-1}})]\\
 &  \quad (|\det(A_{1}(w)(A_{1}(z))^{-1})|=1)\\
 & =  (2N-1)\mathbb{E}[\log(\norm{A_{1}(z)(A_{1}(w))^{-1}})]\,,
\end{align*}
so that 
\begin{align*}
|\Phi_{z}([v])-\Phi_{w}([v])| & \leq  (2N-1)\mathbb{E}[\log(\norm{A_{1}(z)(A_{1}(w))^{-1}})]\,.
\end{align*}
Next we have 
\begin{align*}
S(z)a(X)S(z)a(Y)[S(w)a(X)S(w)a(Y)]^{-1} & =  S(z)a(X)S(z)S(w)^{-1}a(X)^{-1}S(w)^{-1}\,.
\end{align*}
We remark that 
\begin{align*}
S(z)S(w)^{-1} & =  \begin{pmatrix}z & -\mathds{1}_{N}\\
\mathds{1}_{N} & 0_{N}
\end{pmatrix}\begin{pmatrix}0_{N} & \mathds{1}_{N}\\
-\mathds{1}_{N} & w
\end{pmatrix} =  \begin{pmatrix}\mathds{1}_{N} & (z-w)\mathds{1}_{N}\\
0 & \mathds{1}_{N}
\end{pmatrix}\\
 & =  \mathds{1}_{2N}+(z-w)\begin{pmatrix}0 & \mathds{1}_{N}\\
0 & 0
\end{pmatrix}\,,
\end{align*}
so that 
\begin{align*}
\norm{S(z)a(X)S(z)S(w)^{-1}a(X)^{-1}S(w)^{-1}} & =  \norm{S(z)a(X)[\mathds{1}_{2N}+(z-w)\begin{pmatrix}0 & \mathds{1}_{N}\\
0 & 0
\end{pmatrix}]a(X)^{-1}S(w)^{-1}}\\
% & =  \norm{S(z)S(w)^{-1}+(z-w)S(z)a(X)\begin{pmatrix}0 & \mathds{1}_{N}\\
%0 & 0
%\end{pmatrix}a(X)^{-1}S(w)^{-1}}\\
% & =  \norm{\mathds{1}_{2N}+(z-w)[\begin{pmatrix}0 & \mathds{1}_{N}\\
%0 & 0
%\end{pmatrix}+S(z)a(X)\begin{pmatrix}0 & \mathds{1}_{N}\\
%0 & 0
%\end{pmatrix}a(X)^{-1}S(w)^{-1}]}\\
 & \leq  1+|z-w|\norm{\begin{pmatrix}0 & \mathds{1}_{N}\\
0 & 0
\end{pmatrix}}+\\
 &   +|z-w|\norm{S(z)}\norm{a(X)}\times\\
 &   \times\norm{\begin{pmatrix}0 & \mathds{1}_{N}\\
0 & 0
\end{pmatrix}}\norm{a(X)^{-1}}\norm{S(w)^{-1}}\,.
\end{align*}
Now $\norm{\begin{pmatrix}0 & \mathds{1}_{N}\\
0 & 0
\end{pmatrix}}=1$, $\norm{a(X)^{-1}}=\norm{a(X)}$ and $\log(1+x)\leq x$
for all $x\in\mathbb{R}$ so that we find 
\begin{align*}
\log(\norm{A_{1}(z)(A_{1}^{w})^{-1}}) & \leq  |z-w|(1+\norm{S(z)}\norm{S(w)^{-1}}\norm{a(X)}^{2})\,.
\end{align*}
Since $z\mapsto\norm{S(z)}$ is continuous, $z$ ranges in a compact
set, and using \cref{assu:regularity of probability distributions}, we find 
\begin{align*}
|\Phi_{z}([v])-\Phi_{w}([v])| & \leq  |z-w|(1+\sup_{z'}\norm{S_{z'}}^{2}\mathbb{E}[\norm{a(T_{1})}^{2}])\,,
\end{align*}
obtaining the claim for $k=1$; the other cases being an easy generalization.
\end{proof}
\end{prop}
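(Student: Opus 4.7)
The goal is a bound $|\Phi_z([v]) - \Phi_w([v])| \leq C|z-w|$ uniform in $[v]$, for $z, w$ ranging in any fixed compact $K \subseteq \CC$. I would first eliminate the dependence on $v$ by submultiplicativity, then compute the remaining norm explicitly, then integrate.

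\emph{Eliminating $v$.} Writing $\wedge^k A_1(z) v = \wedge^k(A_1(z)A_1(w)^{-1}) \cdot \wedge^k A_1(w) v$ and using $\|\wedge^k B\| \leq \|B\|^k$,
\[
\log\frac{\|\wedge^k A_1(z)v\|}{\|v\|} - \log\frac{\|\wedge^k A_1(w)v\|}{\|v\|} \leq k \log\|A_1(z)A_1(w)^{-1}\|,
\]
whose right-hand side is independent of $[v]$. Taking expectation yields $\Phi_z([v]) - \Phi_w([v]) \leq k\,\EE[\log\|A_1(z)A_1(w)^{-1}\|]$; the reverse inequality follows by swapping $z$ and $w$. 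To unify the two directions into a single quantity, exploit $|\det A_1(\cdot)| = 1$ (cf.\ \cref{eq:group property of transfer matrices}): the singular values of $M := A_1(z)A_1(w)^{-1}$ multiply to $1$, which forces the elementary bound $\|M^{-1}\| \leq \|M\|^{2N-1}$. Therefore $|\Phi_z([v]) - \Phi_w([v])| \leq k(2N-1)\EE[\log\|A_1(z)A_1(w)^{-1}\|]$.

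\emph{Explicit form and integration.} The crucial algebraic identity is
\[
S(z)S(w)^{-1} = \Id + (z-w)\begin{pmatrix} 0 & \Id_N \\ 0 & 0 \end{pmatrix} =: \Id + (z-w) E,
\]
a direct computation from the definition of $S$. Since the random $a$-factors in the factorization of $A_1(z)$ cancel against their inverses in $A_1(z)A_1(w)^{-1}$, the product collapses to $S(z)a(T_1)[\Id + (z-w)E]a(T_1)^{-1}S(w)^{-1}$, which equals $\Id + (z-w)E$ plus a term of order $|z-w|$ involving a conjugation by $a(T_1)$. Using $\|a(X)^{-1}\| = \|a(X)\| = \max(\|X\|, \|X^{-1}\|)$ and the triangle inequality, one obtains
\[
\|A_1(z)A_1(w)^{-1}\| \leq 1 + |z-w|\bigl(1 + \|S(z)\|\,\|S(w)^{-1}\|\,\|a(T_1)\|^2\bigr).
\]
Now apply $\log(1+x) \leq x$ and take expectation: $\EE[\|a(T_1)\|^2] < \infty$ by \cref{assu:regularity of probability distributions}, while $\|S(z)\|$ and $\|S(w)^{-1}\|$ are uniformly bounded for $z, w \in K$ by continuity in $z$. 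Consequently $\EE[\log\|A_1(z)A_1(w)^{-1}\|] \leq C_K|z-w|$, and the reduction step closes the argument.

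\emph{Main obstacle.} The only genuine subtlety is preserving uniformity in $[v]$, and this is handled cleanly by the very first step: submultiplicativity replaces the $v$-dependent ratio by the $v$-independent operator norm at the cost of a harmless factor $k \leq N$. Everything else is routine. What is truly responsible for the statement being \emph{Lipschitz} rather than merely Hölder is the fact that $S(z)$ is affine in $z$, so that $S(z)S(w)^{-1} - \Id$ is exactly linear in $(z-w)$; without this algebraic feature of the transfer matrix, one would have to settle for a weaker modulus of continuity.
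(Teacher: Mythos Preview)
Your proposal is correct and follows essentially the same route as the paper: reduce to $\EE[\log\|A_1(z)A_1(w)^{-1}\|]$ via submultiplicativity, use $|\det|=1$ to control the reverse direction with a $(2N-1)$ factor, compute $S(z)S(w)^{-1}=\Id+(z-w)E$, and then bound the norm and apply $\log(1+x)\le x$ together with \cref{assu:regularity of probability distributions}. The only cosmetic difference is that you carry the general $k$ (via $\|\wedge^k B\|\le\|B\|^k$) throughout, whereas the paper treats $k=1$ and declares the rest an easy generalization.
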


We state without proof the following corollary and proposition which may be found in \cite[Prop. 2.5]{KLEIN1990135}, \cite[Prop. 2.6]{KLEIN1990135} respectively. Their proof in the present setting relies on \cref{prop:cocyle continuous wrt vector,prop:cocycle lip cont wrt energy}.

\begin{cor}
The map $z\mapsto\gamma_{j}(z)$ is continuous for all
$j$ as $z$ ranges in a compact subspace of $\CC$.
%\begin{proof}
%This is (\cite{KLEIN1990135} Proposition 2.5 or \cite{Carmona1990} V.4.8.)
%The same proof as theirs goes through thanks to the claims above.
%\end{proof}
\end{cor}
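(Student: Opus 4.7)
The plan is to adapt the Furstenberg--Ledrappier approach to continuity of Lyapunov exponents (as in \cite[Ch.~V]{Bougerol9781468491746} and \cite[Prop.~2.6]{KLEIN1990135}) to the present Hermitian-symplectic setting, using the two continuity statements just proven as the main analytic inputs.

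First, I would reduce to showing continuity of the partial sums $\Lambda_k(z) := \gamma_1(z) + \dots + \gamma_k(z)$ for $k = 1, \dots, 2N$. Since the top singular value of $\wedge^k B_n(z)$ equals $\sigma_1(B_n(z)) \cdots \sigma_k(B_n(z))$, one has
\[
\Lambda_k(z) = \lim_{n\to\infty} \frac{1}{n}\mathbb{E}[\log\|\wedge^k B_n(z)\|].
\]
Continuity of each $\gamma_j = \Lambda_j - \Lambda_{j-1}$ then reduces to continuity of each $\Lambda_k$, which is the top Lyapunov exponent of the exterior power cocycle $\wedge^k A_n(z)$ acting on the isotropic Grassmannian $\bar L_k$.

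Second, I would invoke Furstenberg's integral representation. Strong irreducibility (\cref{prop:open subset of symplectic group}) combined with simplicity of the Lyapunov spectrum (\cref{cor:The Lyapunov spectrum is simple}, which plays the role of contractivity) yields a \emph{unique} $\mu_z$-invariant probability measure $\nu_z^{(k)}$ on $\bar L_k$, satisfying
\[
\Lambda_k(z) = \int_{\bar L_k} \Phi_z^{(k)}([v]) \, d\nu_z^{(k)}([v]),
\]
where $\Phi_z^{(k)}$ is the cocycle of Def.\ 4.1 applied to $\wedge^k A_1(z)$. Continuity of $\Lambda_k$ then follows by a weak-$*$ compactness argument: for $z_n \to z_0$ in the fixed compact, any weak-$*$ cluster point $\nu_*$ of $\{\nu_{z_n}^{(k)}\}$ on the compact manifold $\bar L_k$ must be $\mu_{z_0}$-invariant, because the factorization $A_n(z) = S(z) a(T_n)$ of \cref{sec:the lyapunv exponents} gives continuous dependence of $\mu_z$ on $z$ in the weak topology on $\HSG$, and the $\HSG$-action on $\bar L_k$ is continuous. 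By uniqueness $\nu_* = \nu_{z_0}^{(k)}$, so the whole sequence converges weakly. Finally, \cref{prop:cocyle continuous wrt vector} (continuity in $[v]$) together with \cref{prop:cocycle lip cont wrt energy} (uniform Lipschitz continuity in $z$), both extended routinely to the $k$-th exterior power cocycle via multilinearity, allow passing to the limit under the integral:
\[
\Lambda_k(z_n) = \int \Phi_{z_n}^{(k)} \, d\nu_{z_n}^{(k)} \;\longrightarrow\; \int \Phi_{z_0}^{(k)} \, d\nu_{z_0}^{(k)} = \Lambda_k(z_0).
\]

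The main obstacle is verifying that Furstenberg's framework really applies to the \emph{exterior} cocycle on $\bar L_k$, for every $k \in \{1, \dots, 2N\}$: one needs strong irreducibility and contractivity of $\wedge^k \mu_z$ acting on $\bar L_k$, not merely of $\mu_z$ on $\mathbb{C}^{2N}$. This propagation from $k=1$ to general $k$ is standard for the complex symplectic group (see \cite[Ch.~IV]{Bougerol9781468491746}), but in the Hermitian symplectic setting it would require an analog of \cref{prop:open subset of symplectic group} with $\wedge^k\mu_z$ replacing $\mu_z$, together with the observation that simplicity of the spectrum (\cref{cor:The Lyapunov spectrum is simple}) provides a strict gap between $\gamma_k$ and $\gamma_{k+1}$ --- precisely the Goldsheid--Margulis condition needed for uniqueness of $\nu_z^{(k)}$. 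Once these structural facts are in place, the weak-$*$ continuity argument above is routine.
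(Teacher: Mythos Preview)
Your outline follows the Furstenberg--Ledrappier route the paper defers to (\cite[Prop.~2.5]{KLEIN1990135}), and the reduction to partial sums $\Lambda_k$ together with the weak-$*$ compactness/uniqueness argument is the right skeleton. Two points, however, need repair before the argument goes through as stated.

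First, the isotropic Grassmannian $\bar L_k$ is preserved only by $\HSG$, i.e.\ only for \emph{real} $z$. For $z \in \mathbb{C} \setminus \mathbb{R}$ the matrices $A_n(z)$ do not map isotropic subspaces to isotropic subspaces, so there is no $\mu_z$-action on $\bar L_k$ and hence no invariant probability $\nu_z^{(k)}$ there to speak of. To carry the argument through a compact subset of $\mathbb{C}$ you must work on the full Grassmannian $G_k(\mathbb{C}^{2N})$ (equivalently, on the projectivization of $\wedge^k\mathbb{C}^{2N}$) and establish strong irreducibility and contractivity of $\wedge^k\mu_z$ there, not merely on the isotropic piece. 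Second, both \cref{prop:open subset of symplectic group} and \cref{cor:The Lyapunov spectrum is simple} are proved only for $\lambda \in \mathbb{R} \setminus \{0\}$, and simplicity can genuinely fail at $\lambda = 0$ (this is the central theme of the paper). Without simplicity at $z_0$ the invariant measure on the $k$-th Grassmannian need not be unique, and your weak-$*$ limit argument does not close at $z_0 = 0$. You therefore either need to restrict the corollary to compact sets avoiding zero, or supply a separate argument there --- for instance, upper semicontinuity of $\Lambda_k$ is automatic from subadditivity ($\Lambda_k(z) = \inf_n n^{-1}\mathbb{E}[\log\|\wedge^k B_n(z)\|]$ is an infimum of functions continuous by \cref{prop:cocycle lip cont wrt energy}), but lower semicontinuity at a point where the spectrum degenerates requires more than what you have written.
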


%\begin{cor}
%\label{cor:Lyapunov spectrum simple and non-zero even for slightly complex energies}For
%each $E\in\mathbb{R}$ there is some $U\in Open(\mathbb{C})$,
%$E\in U$, such that for all $z\in U$, $\{\gamma_{j}(z)\}_{j=1}^{2N}$
%are all distinct and non-zero. 
%
%In particular, $\gamma_{N}(z)>0$ for all $z\in\mathbb{C}$
%such that $\Re\{z\} \neq0$ and $\Im\{z\} $
%is sufficiently small.
%\end{cor}

\begin{prop}
\label{prop:Convergence of Lyapunov limit uniformly}For each $j\in\{1,\dots,N\}$
and $[x]\in\bar{L}_j$,
we have 
\begin{align*}
\lim_{n\to\infty}\frac{1}{n}\mathbb{E}[\log(\frac{\norm{\wedge^{j}B_{n}(\lambda)x}}{\|x\|})] & =  \sum_{l=1}^{j}\gamma_{l}(\lambda)\,,
\end{align*}
the limit being \emph{uniform} as $\lambda$ ranges in a compact subspace
of $\mathbb{R}$ and also \emph{uniform} in $[x]$.
%\begin{proof}
%This is (\cite{KLEIN1990135} Prop. 2.6).
%\end{proof}
\end{prop}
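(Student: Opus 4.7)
The plan is to adapt the argument of \cite[Prop.~2.6]{KLEIN1990135} to our setting, the two minor modifications being that our transfer matrices are i.i.d.\ only after pairing adjacent sites and that the ambient group is $\HSG$ rather than the real symplectic one. Neither issue forces new ideas: the pairing is handled by passing to the i.i.d.\ sequence $(A_{2n+1}(\lambda)A_{2n}(\lambda))_n$, and the structural input we need on $\HSG$---namely simplicity of the spectrum from \cref{cor:The Lyapunov spectrum is simple} and the open-subset property of $T_{\mu_\lambda}$ from \cref{prop:open subset of symplectic group}---has already been established.

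I would first prove pointwise convergence at fixed $\lambda\in\RR\setminus\{0\}$ and $[x]\in\bar{L}_j$. Oseledec's multiplicative ergodic theorem applied to the paired cocycle yields an almost-sure deterministic Lyapunov filtration with distinct gaps $\gamma_1(\lambda)>\dots>\gamma_{2N}(\lambda)$ (by \cref{cor:The Lyapunov spectrum is simple}), and for every $[x]$ transversal to the associated ``slow'' Oseledec subspace one has $\frac{1}{n}\log(\|\wedge^{j}B_n(\lambda)x\|/\|x\|)\to\sum_{l=1}^j\gamma_l(\lambda)$ almost surely. Convergence of expectations then follows from dominated convergence, with integrable majorant supplied by $\log\|\wedge^j B_n\|\leq j\sum_{k=1}^n\log^{+}\|A_k(\lambda)\|$ together with \cref{prop:log of transfer matrix is integrable}. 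To promote this from almost every to every $[x]\in\bar{L}_j$, one invokes strong irreducibility and proximality of the $\HSG$-action on $\bar{L}_j$: these follow from \cref{prop:open subset of symplectic group}, since an open subset of $\HSG$ contains proximal elements and precludes invariant finite unions of proper isotropic subspaces, so the Furstenberg--Goldsheid--Margulis framework applies.

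For the uniform part, set $\phi_n(\lambda,[x]):=\frac{1}{n}\mathbb{E}[\log(\|\wedge^{j}B_n(\lambda)x\|/\|x\|)]$. Inspection of the proofs of \cref{prop:cocyle continuous wrt vector,prop:cocycle lip cont wrt energy} shows that, after dividing by $n$, the very same estimates render $\phi_n$ Lipschitz in $\lambda$ on any fixed compact $K\subset\RR\setminus\{0\}$ and continuous in $[x]$, with moduli of continuity independent of $n$ (the telescoping argument already absorbs a factor of $n$, and the square-integrability in \cref{assu:regularity of probability distributions} supplies the Lipschitz constant). Hence $\{\phi_n\}$ is an equicontinuous family on the compact space $K\times\bar{L}_j$, pointwise convergent to the continuous limit $\lambda\mapsto\sum_{l=1}^{j}\gamma_l(\lambda)$; Arzel\`a--Ascoli then forces uniform convergence of the whole sequence, since any subsequential uniform limit must coincide with the pointwise limit.

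The main obstacle is the almost-every-to-every step in the pointwise argument, since irreducibility and proximality must be verified not on the full Grassmannian but on its isotropic submanifold $\bar{L}_j$. The saving grace is that the open subset of $\HSG$ provided by \cref{prop:open subset of symplectic group} acts richly enough on $\bar{L}_j$ to yield proximal elements and rule out invariant algebraic families of proper isotropic subspaces; combined with the spectral gap $\gamma_j(\lambda)>\gamma_{j+1}(\lambda)$ (which, via \cref{rem:Lyapunov spectrum symmetric about zero}, is ensured by \cref{cor:The Lyapunov spectrum is simple} for $j<N$ and by \cref{cor:Smallest Lyapunov exponent is strictly positive} for $j=N$), this supplies everything required by the Furstenberg-type machinery.
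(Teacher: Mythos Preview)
Your overall strategy matches what the paper does: it simply cites \cite[Prop.~2.6]{KLEIN1990135} and remarks that the proof in the present setting relies on \cref{prop:cocyle continuous wrt vector,prop:cocycle lip cont wrt energy}. Your use of the paired i.i.d.\ sequence, of \cref{prop:open subset of symplectic group} for strong irreducibility and proximality on $\bar{L}_j$, and of the Lipschitz-in-$\lambda$ estimate (which does survive the telescoping and the division by $n$) is exactly the intended route.

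There is, however, one genuine gap. You assert that ``the very same estimates render $\phi_n$ \dots\ continuous in $[x]$, with moduli of continuity independent of $n$''. This does \emph{not} follow from \cref{prop:cocyle continuous wrt vector}. That proposition gives continuity of the one-step cocycle $[v]\mapsto\Phi_\lambda([v])$, but $\phi_n(\lambda,[x])=\frac{1}{n}\sum_{k=0}^{n-1}\mathbb{E}[\Phi_\lambda(B_k(\lambda)[x])]$ involves the iterated action $B_k[x]$, and nearby directions $[x],[x']$ can be pulled apart by a highly anisotropic $B_k$; there is no $n$-independent modulus available from the pointwise estimates alone. Consequently the Arzel\`a--Ascoli step, as you wrote it, does not close in the $[x]$ variable.

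The standard repair---and this is how \cite{KLEIN1990135} and \cite[Ch.~V]{Carmona1990} actually proceed---is to get uniformity in $[x]$ not from equicontinuity but from \emph{uniqueness of the $\mu_\lambda$-invariant probability measure} $\nu_\lambda$ on $\bar{L}_j$ (which you already have, via strong irreducibility and the contracting property supplied by \cref{prop:open subset of symplectic group}). Any weak-$\ast$ limit point of the Ces\`aro averages $\frac{1}{n}\sum_{k=0}^{n-1}\mu_\lambda^{\ast k}\ast\delta_{[x]}$ is $\mu_\lambda$-invariant, hence equals $\nu_\lambda$ regardless of $[x]$; testing against the continuous function $\Phi_\lambda$ then gives $\phi_n(\lambda,[x])\to\int\Phi_\lambda\,\dif\nu_\lambda=\sum_{l=1}^j\gamma_l(\lambda)$ uniformly in $[x]$ by compactness of $\bar{L}_j$. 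After that, your Lipschitz-in-$\lambda$ argument (which \emph{is} correct and uniform in $n$ and $[x]$) upgrades the convergence to be uniform on $K\times\bar{L}_j$.
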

The following result finally connects $\gamma_N(\lambda)>0$ with the exponential decay of a product of transfer matrices. Its proof is included in the appendix for the reader's convenience, as it appeared merely as an outline in \cite[Prop. 2.7]{KLEIN1990135}.
\begin{prop}
\label{prop:fractional moments of products of transfer matrices decay exponentially}
For any compact $K\subseteq\mathbb{R}$, $j\in\{1,\dots,N\}$
such that $\gamma_{j}(\lambda)>0$ for all $\lambda\in K$, there exist
some $s\in(0,1)$, $N\in\mathbb{N}$ and $C>0$
such that 
\begin{align*}
\mathbb{E}[(\frac{\norm{\wedge^{j-1}B_{n}(\lambda)y}}{\|y\|}\frac{\|x\|}{\norm{\wedge^{j}B_{n}(\lambda)x}})^{s}] & \leq  \exp(-C n)
\end{align*}
for all $n\in\mathbb{N}_{\geq N}$, for all $[x]\in \bar{L}_{j}$ and for all $[y]\in \bar{L}_{j-1}$.
\end{prop}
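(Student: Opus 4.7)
The strategy is to combine the uniform negative drift of $\mathbb{E}[\log X_n]$ coming from \cref{prop:Convergence of Lyapunov limit uniformly} with a sub-multiplicative iteration that converts that drift into exponential decay of the fractional moment. Writing
\begin{equation*}
X_n(\lambda,[x],[y]) := \frac{\|\wedge^{j-1}B_n(\lambda)y\|\,\|x\|}{\|y\|\,\|\wedge^j B_n(\lambda)x\|}\,,
\end{equation*}
the goal is $\sup_{\lambda\in K,[x],[y]}\mathbb{E}[X_n^s]\leq e^{-Cn}$ for some $s\in(0,1)$ and $C>0$, for all $n$ large enough.

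First, I would apply \cref{prop:Convergence of Lyapunov limit uniformly} to wedges of rank $j$ and of rank $j-1$ and subtract, to obtain $\tfrac{1}{n}\mathbb{E}[\log X_n]\to -\gamma_j(\lambda)$ uniformly in $(\lambda,[x],[y])\in K\times\bar L_j\times\bar L_{j-1}$. By continuity of $\lambda\mapsto\gamma_j(\lambda)$ on $K$ and the positivity hypothesis, $\gamma:=\min_{\lambda\in K}\gamma_j(\lambda)>0$, hence there is some even $N_0\in\mathbb{N}$ with $\mathbb{E}[\log X_{N_0}]\leq-\gamma N_0/2$ uniformly. Next, from the factorization $B_{m+n}=B_{m+n,n+1}B_n$ and the definition of $X_n$ I read off the cocycle identity $X_{m+n}(\lambda,[x],[y])=X_m(\lambda,[B_n x],[B_n y])\,X_n(\lambda,[x],[y])$. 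For $m,n\in N_0\mathbb{Z}$ the alternation of $\alpha_0,\alpha_1$ lines up after an even shift, so $B_{m+n,n+1}$ has the same distribution as $B_m$ and is independent of $B_n$; setting $u_n:=\sup_{\lambda\in K,[x],[y]}\mathbb{E}[X_n^s]$ and conditioning on $B_n$ gives the sub-multiplicativity $u_{m+n}\leq u_m u_n$, whence $u_{kN_0}\leq u_{N_0}^k$. It thus suffices to exhibit $s\in(0,1)$ with $u_{N_0}<1$.

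For this I would use the expansion $e^{sY}\leq 1+sY+\tfrac{s^2}{2}Y^2 e^{s|Y|}$ with $Y:=\log X_{N_0}$, giving $\mathbb{E}[X_{N_0}^s]\leq 1+s\mathbb{E}[Y]+\tfrac{s^2}{2}\mathbb{E}[Y^2 e^{s|Y|}]$. The direction-uniform bounds $X_{N_0}\leq\|B_{N_0}\|^{j-1}\|B_{N_0}^{-1}\|^j$ and $X_{N_0}^{-1}\leq\|B_{N_0}\|^j\|B_{N_0}^{-1}\|^{j-1}$ (both following from the singular-value estimate $\|B^{-1}\|^{-k}\leq\|\wedge^k B z\|/\|z\|\leq\|B\|^k$) reduce all moments of $|Y|$ to moments of $\log\|A_k^{\pm1}\|$. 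Combined with \cref{assu:regularity of probability distributions} and the elementary bound $(\log^+ t)^4\leq C_\epsilon t^\epsilon$, this gives $\mathbb{E}[Y^4]$ and $\mathbb{E}[e^{2s|Y|}]$ bounded uniformly in $(\lambda,[x],[y])$ for all $s$ below some $s_0\in(0,1)$; Cauchy--Schwarz then controls $\mathbb{E}[Y^2 e^{s|Y|}]$ by a constant $M_{N_0}$. Together with $\mathbb{E}[Y]\leq -\gamma N_0/2$ this yields $u_{N_0}\leq 1-s\gamma N_0/2+s^2 M_{N_0}/2<1$ for $s$ small enough, closing the argument.

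The main obstacle is making all these estimates \emph{uniform} in the starting directions $[x]\in\bar L_j$, $[y]\in\bar L_{j-1}$, and in $\lambda\in K$. The key structural observation is that $X_{N_0}^{\pm1}$ is bounded direction-independently by a polynomial in $\|B_{N_0}^{\pm1}\|$, so uniformity reduces to uniformity of moments of matrix products, which is provided by \cref{assu:regularity of probability distributions} and the fact that after pair-grouping the transfer-matrix factors are i.i.d.
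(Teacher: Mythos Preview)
Your proposal is correct and follows essentially the same route as the paper's proof: define the additive cocycle $\zeta=\log X_n$, use \cref{prop:Convergence of Lyapunov limit uniformly} to get uniform negative drift, apply the Taylor-type inequality $e^{sY}\leq 1+sY+\text{(quadratic)}$ together with the direction-independent singular-value bounds (your $X_{N_0}^{\pm1}\leq\|B_{N_0}\|^{j}\|B_{N_0}^{-1}\|^{j}$, which is the content of \cref{prop:bound on log of norm of exterior product of matrix of |det|=00003D1}) and \cref{assu:regularity of probability distributions} to make $u_{N_0}<1$, and then iterate via the cocycle property and independence. The one point you leave implicit is how to pass from $u_{kN_0}\leq u_{N_0}^k$ to the bound for \emph{all} $n\geq N$: the paper writes $n=q_n N_0+r_n$ and controls the length-$r_n$ remainder block crudely via H\"older and the same moment estimates, which is routine and fits straight into your framework.
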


\section{An a-priori bound}
\label{sec:a-priori bound}
In this short section we establish the a-priori boundedness of one-step Green's functions, which is a staple of the fractional-moments method. The fact that for our model one uses the one-step Green's function rather than the diagonal one is due to the fact we do not have on-site randomness, which forces the usage of rank-2 perturbation theory.
\begin{prop}
\label{prop:A-priori bound}For any $s\in(0,1)$ we
have some strictly positive constant $C_{s}$ such that 
\begin{align*}
\mathbb{E}[\norm{G(x,x-1;\,z)}^{s}] & \leq  C_{s}<\infty\,,
\end{align*}
for all $x\in\mathbb{Z}$ and for all $z\in\mathbb{C}$. So $C_{s}$
depends on $s$, $\alpha_{1}$ and $\alpha_{0}$.
\begin{proof}
Using finite-rank perturbation theory, we can find the explicit dependence
of the complex number $G(x,x-1;\,z)_{i,j}$ (for some
$(i,j)\in\{1,\dots,N\}^{2}$) on the random hopping
$(T_{x})_{i,j}$. Indeed, $(T_{x})_{i,j}=\left\langle \delta_{x}\otimes e_{i},H(T),\delta_{x-1}\otimes e_{j}\right\rangle $.
We find that $G(x,x-1;\,z)_{i,j}$ is equal to the
bottom-left matrix element of the $2\times2$ matrix 
\begin{equation}
\bigl(A+\begin{pmatrix}0 & \overline{\lambda}\\
\lambda & 0
\end{pmatrix}\bigr)^{-1}\,,\label{eq:matrix for a-priori bound}
\end{equation}
where $A$ is the inverse of the $2\times2$ matrix 
\[
\begin{pmatrix}\tilde{G}(x-1,x-1;\,z)_{j,j} & \tilde{G}(x-1,x;\,z)_{j,i}\\
\tilde{G}(x,x-1;\,z)_{i,j} & \tilde{G}(x,x;\,z)_{i,i}
\end{pmatrix}\,,
\]
with $\tilde{H}$ being $H$ with $(T_{x})_{i,j}$ ``turned
off'' (i.e. set to zero), and $\lambda:=(T_{x})_{i,j}$ for convenience.
Thus our goal is to bound the fractional moments with respect to $\lambda\in\mathbb{C}$
of the off-diagonal entry of the matrix in \cref{eq:matrix for a-priori bound}
where $A$ is some given $2\times2$ matrix with complex entries.
The only thing we know about $A$ is that $\Im\{A\} >0$
though we won't actually use this (cf. \cite[Lemma 5]{Graf1994}). 

First note that $\begin{pmatrix}0 & \overline{\lambda}\\
\lambda & 0
\end{pmatrix}=\Re\left\{ \lambda\right\} \sigma_{1}+\Im\left\{ \lambda\right\} \sigma_{2}$ and also that for any $2\times2$ matrix $M$ we have $\frac{1}{2}tr(M\sigma_{1})=\frac{1}{2}(M_{12}+M_{21})$
whereas $\frac{1}{2}tr(M\sigma_{2})=\frac{i}{2}(M_{12}-M_{21})$.
Thus by the triangle inequality,
\begin{align*}
\mathbb{E}[|M_{12}|^{s}] & =  \mathbb{E}[|\frac{1}{2}tr(M\sigma_{1})|^{s}]+\mathbb{E}[|\frac{1}{2}tr(M\sigma_{2})|^{s}]\,.
\end{align*}
So, if we get control on each summand separately we could bound
$\mathbb{E}[|M_{12}|^{s}]$.

We write 
\begin{align*}
A+\begin{pmatrix}0 & \overline{\lambda}\\
\lambda & 0
\end{pmatrix} & =  \tilde{A}+\Re\left\{ \lambda\right\} \sigma_{1}
\end{align*}
and we expand $\tilde{A}$ as $\tilde{A}=a_{0}+a_{i}\sigma_{i}$ for
some $\begin{pmatrix}a_{0}\\
a
\end{pmatrix}\in\mathbb{C}^{4}$, so that if $M:=(a_{0}+a_{i}\sigma_{i}+\Re\left\{ \lambda\right\} \sigma_{1})^{-1}$
we find
\begin{align*}
\frac{1}{2}\tr(M\sigma_{1}) & =  \frac{-a_{1}-\Re\left\{ \lambda\right\} }{a_{0}\,^{2}-a_{2}\,^{2}-a_{3}\,^{2}-(a_{1}+\Re\left\{ \lambda\right\} )^{2}}\,.
\end{align*}
We now apply \cref{prop:Improved Lemma for a-prior bound} just below with $z=-a_{1}-\Re\left\{ \lambda\right\} $,
$c=a_{0}\,^{2}-a_{2}\,^{2}-a_{3}\,^{2}$ to get the result (using
the Layer-Cake representation), which relies on \cref{assu:probability measures are uniformly tau Hoelder continuous}. The term $\mathbb{E}[|\frac{1}{2}tr(M\sigma_{2})|^{s}]$
is dealt with in precisely the same way.
\end{proof}
\end{prop}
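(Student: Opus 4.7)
The target is a uniform (in $z \in \CC$, $x\in\ZZ$, and disorder realization on average) bound on the fractional moment of a single off-diagonal block of the resolvent. Since our disorder is in the kinetic energy rather than on-site, the usual Simon--Wolff rank-one decoupling is unavailable. The natural substitute is a rank-two perturbation: the scalar variable $\lambda:=(T_x)_{ij}$ enters $H$ through both $T_x$ and $T_x^\ast$, so varying $\lambda$ while freezing all other randomness corresponds to the rank-two perturbation $\lambda\,|\delta_x\otimes e_i\rangle\langle\delta_{x-1}\otimes e_j|+\overline\lambda\,|\delta_{x-1}\otimes e_j\rangle\langle\delta_x\otimes e_i|$ of the deflated Hamiltonian $\tilde H$ obtained by setting $(T_x)_{ij}=0$. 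My plan is to bound the single scalar matrix element $G(x,x-1;z)_{ij}$; summing over the $N^2$ pairs $(i,j)$ then controls $\EE[\|G(x,x-1;z)\|^s]$.

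Next I would apply the Krein resolvent formula for this rank-two perturbation. Writing $P$ for the orthogonal projection onto the two-dimensional span of $\{\delta_x\otimes e_i,\delta_{x-1}\otimes e_j\}$, one gets $P(H-z)^{-1}P = \bigl(A+\bigl(\begin{smallmatrix}0 & \overline\lambda\\ \lambda & 0\end{smallmatrix}\bigr)\bigr)^{-1}$, where $A$ is the $2\times 2$ matrix obtained by inverting the corresponding $2\times 2$ block of the deflated Green's function $P(\tilde H-z)^{-1}P$. Thus $G(x,x-1;z)_{ij}$ equals the off-diagonal entry of this $2\times2$ inverse, with explicit rational dependence on $\Re\lambda$ and $\Im\lambda$.

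Now I would condition on every random variable other than $\Re\lambda$ (and then repeat for $\Im\lambda$). Expanding $A=a_0\Id+a_k\sigma_k$ in the Pauli basis and using $M_{12}=\tfrac12\tr(M\sigma_1)+\tfrac{\ii}{2}\tr(M\sigma_2)$, one obtains
\begin{align*}
\tfrac12\tr(M\sigma_1) &= \frac{-(a_1+\Re\lambda)}{(a_0^2-a_2^2-a_3^2)-(a_1+\Re\lambda)^2}\,,
\end{align*}
with an analogous formula for $\tr(M\sigma_2)$ in terms of $\Im\lambda$. By \cref{assu:probability measures are uniformly tau Hoelder continuous}, the conditional law of $\Re\lambda$ is uniformly $\tau$-H\"older continuous with a constant independent of the remaining randomness and of $z$, and the same holds for $\Im\lambda$. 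So the problem reduces to the following deterministic fact: for every $c\in\CC$ and every uniformly $\tau$-H\"older continuous probability measure $\mu$ on $\RR$ with constant $C$,
\begin{align*}
\int_{\RR}\Bigl|\frac{u}{c-u^2}\Bigr|^s\dif{\mu(u)}\leq C_s<\infty
\end{align*}
for all $s<\tau$, with $C_s$ depending only on $s,\tau,C$.

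The main obstacle is this last inequality, and I would prove it by a Layer-Cake argument: $\int|f|^s\dif\mu = s\int_0^\infty t^{s-1}\mu(\{|f|>t\})\dif t$. The super-level set $\{|u/(c-u^2)|>t\}$ consists of at most two intervals on $\RR$ (possibly unbounded), on which I use the H\"older estimate $\mu(J)\leq C|J|^\tau$ for $|J|\leq 1$ together with the trivial bound $\mu(J)\leq 1$, and split the $t$-integral at an appropriate threshold. The integrable tail at $t\to\infty$ uses $s<\tau$ and the fact that $|u/(c-u^2)|\sim 1/|u|$ at infinity, while the contribution near $t=0$ is controlled by the probability bound $\leq 1$. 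The constant comes out independent of $c$ because the width of the super-level sets depends on $c$ only through the radius of the polar region, which the H\"older modulus absorbs. With the deterministic lemma in hand, taking expectation over the remaining randomness is immediate and yields the claimed uniform bound; \cref{assu:regularity of probability distributions} is not even needed here, entering only later.
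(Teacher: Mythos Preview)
Your proposal is correct and follows essentially the same route as the paper: rank-two Krein formula, Pauli-basis decomposition of the $2\times2$ inverse, and a Layer-Cake estimate over the super-level sets of $z/(z^2-c^2)$ combined with the $\tau$-H\"older hypothesis. The paper packages the level-set input as the clean Lebesgue-measure bound $|\{\lambda\in\RR:|z/(z^2-c^2)|>t\}|\le 4/t$ for $z$ on an \emph{arbitrary} line in $\CC$ (needed since the Pauli coefficient $a_1$ is in general complex, so your reduction to real $u$ in $u/(c-u^2)$ is slightly too special), proved by a short half-plane argument that makes the $c$-independence you allude to manifest.
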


\begin{prop}
	\label{prop:Improved Lemma for a-prior bound}Consider the subset
	of the complex plane 
	\begin{align*}
		D_{c,t} & :=  \{z\in\mathbb{C}||\frac{z}{z^{2}-c^{2}}|>t\}
	\end{align*}
	for $c\in\mathbb{C}$ and $t>0$ and the line $\mathbb{R}\ni\lambda\mapsto z:=\alpha\lambda+\beta$
	($\alpha,\beta\in\mathbb{C}$, $|\alpha|=1$) parametrized
	by arclength. Then its intersection with $D_{c,t}$ is bounded in
	Lebesgue measure as 
	\begin{align*}
		|\{\lambda\in\mathbb{R}|z\in D_{c,t}\}| & \leq  \frac{4}{t}\,.
	\end{align*}
	\begin{proof}
		Since the statement is invariant w.r.t. rotations of $c,\alpha,\beta$
		about the origin, we may assume $c\geq0$. We then estimate the measure
		when $D_{c,t}$ is replaced by its intersection with the right half-plane
		$\{z\in\mathbb{C}|\Re\{z\} \geq0\}$ (and likewise for
		the left one). Then 
		\[
		|\frac{z}{z^{2}-c^{2}}|=\frac{1}{|z-c|}|\frac{z}{z+c}|\leq\frac{1}{|z-c|}
		\]
		because $\Re\{z\} +c\geq\Re\{z\} $ there,
		which implies $D_{c,t}\cap\{z|\Re\{z\} \geq0\}\subseteq\{z||z-c|<t^{-1}\}$.
		The intersection of that disk with any line is of length $2t^{-1}$
		at most.
	\end{proof}
\end{prop}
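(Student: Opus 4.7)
The plan is to combine a rotational-invariance reduction with an elementary geometric estimate: after normalizing $c$, I will split the plane into two half-planes and show that in each half-plane $D_{c,t}$ is contained in a disk of radius $1/t$. A line meets such a disk in a chord of length at most the diameter $2/t$, and summing the two contributions gives the claimed $4/t$.

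First I would exploit the rotational symmetry. Observe that under the simultaneous rotation $(c,\alpha,\beta)\mapsto(e^{-\ii\theta}c,e^{-\ii\theta}\alpha,e^{-\ii\theta}\beta)$, the expression $z/(z^{2}-c^{2})$ is equivariant (both $z$ and $c$ pick up the same phase), so $D_{c,t}$ is simply rotated to $e^{-\ii\theta}D_{c,t}$, and the parametrized line $\lambda\mapsto\alpha\lambda+\beta$ is sent to its rotate. The $\lambda$-measure of the preimage is preserved, so I may assume $c\geq0$.

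Next I would factor
\[
\left|\frac{z}{z^{2}-c^{2}}\right|=\frac{1}{|z-c|}\cdot\frac{|z|}{|z+c|}
\]
and treat the two half-planes separately. On $\{\Re z\geq0\}$, expanding $|z+c|^{2}=(\Re z+c)^{2}+(\Im z)^{2}\geq|z|^{2}$ (using $c\geq0$) gives $|z|/|z+c|\leq1$, hence $|z/(z^{2}-c^{2})|\leq 1/|z-c|$. Therefore $D_{c,t}\cap\{\Re z\geq0\}$ is contained in the disk $\{|z-c|<1/t\}$ of radius $1/t$ centered at $c$. Because $|\alpha|=1$ the line parametrization is an isometry, so the set of $\lambda$'s landing in this disk has Lebesgue measure at most the diameter, $2/t$. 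The half-plane $\{\Re z\leq0\}$ is handled symmetrically, now using $|z-c|\geq|z|$ and containment in the disk of radius $1/t$ about $-c$, again contributing at most $2/t$.

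Adding the two half-plane contributions yields $4/t$; the overlap on the imaginary axis has $\lambda$-measure zero so does not affect the total. The argument is purely geometric and I do not anticipate any subtle obstacle — the only minor care needed is the half-plane inequality $|z\pm c|\geq|z|$, which is immediate from the sign of $c$.
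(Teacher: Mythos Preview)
Your proof is correct and follows essentially the same approach as the paper: rotational reduction to $c\geq0$, splitting into half-planes, and using $|z|/|z\pm c|\leq1$ on the appropriate half-plane to confine $D_{c,t}$ to a disk of radius $1/t$ about $\pm c$. You are slightly more explicit than the paper about the half-plane inequality and the isometry of the arclength parametrization, but the argument is identical.
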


\begin{cor}
\label{cor:a-priori bound for xx green's function}For any $s\in(0,1)$
we have some strictly positive constant $C_{s}$ (not the same as
the one from above) such that 
\begin{align*}
\mathbb{E}[\norm{G(x,x;\,z)}^{s}] & <  \frac{1}{|z|}C_{s}\,,
\end{align*}
for all $x\in\mathbb{Z}$ and for all $z\in\mathbb{C}\backslash\{0\}$.
$C_{s}$ depends only on $s\in(0,1)$, $\alpha_{1}$
and $\alpha_{0}$.
\begin{proof}
From the relation $(H-z)R(z)\equiv\mathds{1}$
we find 
\begin{align*}
-zG(x,x)+T_{x+1}^{\ast}G(x+1,x;\,z)+T_{x}G(x-1,x;\,z) & =  \mathds{1}_{N}\,,
\end{align*}
so that 
\begin{align*}
G(x,x;\,z) & =  z^{-1}[\mathds{1}_{N}-T_{x+1}^{\ast}G(x+1,x;\,z)+T_{x}G(x-1,x;\,z)]\,.
\end{align*}
Hence by the triangle inequality, H\"older's inequality, \cref{assu:regularity of probability distributions}
and \cref{prop:A-priori bound} we find the result.
\end{proof}
\end{cor}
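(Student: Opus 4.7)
The plan is to use the resolvent equation $(H-z)R(z) = \mathds{1}$ to express $G(x,x;z)$ as a linear combination, divided by $z$, of the one-step off-diagonal blocks $G(x\pm 1, x; z)$, and then invoke the a-priori bound of \cref{prop:A-priori bound} on those blocks. The factor $|z|^{-1}$ in the claim will arise purely from solving the resulting linear relation for $G(x,x;z)$.

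First I would read off the $(x,x)$-block of $(H-z)R(z) = \mathds{1}$. Since $(H\psi)_n = T_{n+1}^\ast \psi_{n+1} + T_n \psi_{n-1}$, applying $H-z$ in the first variable of the Green's function column $G(\cdot, x; z)$ at $n = x$ gives
\[
T_{x+1}^\ast G(x+1,x;z) + T_x G(x-1,x;z) - z\, G(x,x;z) = \mathds{1}_N,
\]
from which
\[
G(x,x;z) = z^{-1}\bigl[\mathds{1}_N - T_{x+1}^\ast G(x+1,x;z) - T_x G(x-1,x;z)\bigr].
\]
Taking operator norms, raising to the $s$-th power for $s\in(0,1)$, and using the subadditivity $(a+b+c)^s \leq a^s + b^s + c^s$ together with submultiplicativity of the norm yields
\[
\|G(x,x;z)\|^s \leq |z|^{-s}\Bigl(N^s + \|T_{x+1}\|^s\|G(x+1,x;z)\|^s + \|T_x\|^s\|G(x-1,x;z)\|^s\Bigr).
\]

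The main point to handle --- which I expect to be the only real obstacle --- is that $T_{x+1}$ is \emph{not} independent of $G(x+1,x;z)$, since the latter depends on the entire random Hamiltonian $H$, hence on $T_{x+1}$ itself. I would resolve this via H\"older's inequality,
\[
\mathbb{E}\bigl[\|T_{x+1}\|^s\|G(x+1,x;z)\|^s\bigr] \leq \mathbb{E}\bigl[\|T_{x+1}\|^{sp}\bigr]^{1/p}\,\mathbb{E}\bigl[\|G(x+1,x;z)\|^{sq}\bigr]^{1/q},
\]
choosing conjugate exponents $p,q>1$ with $sp \leq 2$ (so the $T$-factor is finite by \cref{assu:regularity of probability distributions}) and $sq < 1$ (so the $G$-factor is finite by \cref{prop:A-priori bound} applied at the fractional exponent $sq$, noting that its proof works verbatim for the forward one-step block $G(x+1,x;z)$). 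The $T_x G(x-1,x;z)$ term is treated symmetrically. Finally, $|z|^{-s}$ is bounded by a constant multiple of $|z|^{-1}$ on the only relevant regime $|z|\leq 1$, and for $|z|>1$ the quantity $|z|^{-1}$ is itself at most $1$ while the Neumann series for $R(z)$ already supplies a uniform bound, yielding the claimed inequality.
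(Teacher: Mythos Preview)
Your argument is essentially the same as the paper's: read off the $(x,x)$-block of the resolvent identity, solve for $G(x,x;z)$, and control the cross terms $\|T\|^s\|G(x\pm1,x;z)\|^s$ by H\"older together with \cref{assu:regularity of probability distributions} and \cref{prop:A-priori bound}. That core is correct and matches the paper line for line.

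One caveat about your final paragraph. The bound that the argument (yours and the paper's) actually produces is $|z|^{-s}C_s$, not $|z|^{-1}C_s$; you rightly notice this and try to upgrade it. Your fix for $|z|>1$ via a Neumann series is not available here, however, because $H$ is not assumed bounded: only second moments of the hopping matrices are finite (\cref{assu:regularity of probability distributions}), so $\sup_n\|T_n\|$ need not be finite and $R(z)$ need not admit a norm-convergent Neumann expansion for any $|z|$. In fact for large $|z|$ one expects $\mathbb{E}[\|G(x,x;z)\|^s]\sim|z|^{-s}$, which is \emph{larger} than $|z|^{-1}$, so the $|z|^{-1}$ in the displayed inequality should be read as $|z|^{-s}$ (or the statement should be taken on bounded energy sets, which is how it is used downstream in \cref{thm:finite volume real energy FMC}). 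This is a cosmetic point about the exponent, not a defect in your strategy.
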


\begin{rem}
The last two statements hold equally well if we replace $G(x,x)$
or $G(x-1,x)$ with the half-line Green's function or even a finite-volume Green's function.
\end{rem}

\section{Localization at non-zero energies}
\label{sec:loc at non-zero energies}
In this section we establish localization for all non-zero energies. We do this in two steps: first at real energies (and hence finite volume) due to the fact that the Furstenberg analysis requires the transfer matrices to be $\HSG$-valued which needs the energy to be real. We then extend this exponential decay off the real axis using the harmonic properties of the Green's function to get polynomial decay at complex energies, which in turn implies exponential decay via a decoupling-type lemma. Once localization of finite volume at complex energies is established, the infinite volume result is implied via the strong-resolvent convergence of the finite volume Hamiltonian to the infinite volume one.
\subsection{Localization at finite volume and real energies}

\begin{thm}
\label{thm:finite volume real energy FMC}
For any compact $K\subset\RR\setminus\{0\}$,
there is some $s\in(0,1)$ and $\mu>0$ such
that
\begin{align*}
\sup_{\lambda\in K}\mathbb{E}[|G_{[x,y]}(x,y;\,\lambda)|^{s}] & <  |\lambda|^{-s}e^{-\mu|x-y|}\,,
\end{align*}
for all $(x,y)\in\mathbb{Z}^{2}$ with $|x-y|$
sufficiently large. Here $G_{[x,y]}$ is the Green's function of the finite-volume restriction of $H$ to $[x,y]\subseteq\mathbb{Z}$.
\begin{proof}
Let $\lambda\in K$ be given and $s\in(0,1)$. By \cref{lem:A Green's function formula}
we know that 
\begin{align*}
\norm{G_{[x,y]}(x,y;\,\lambda)}^{s} & \leq  |\lambda|^{-s}C(z)^{\frac{s}{2}}(\sum_{j=1}^{N}\frac{\norm{\wedge^{N-1}B_{y-1,x-1}(\lambda)u_{j}}^{2}}{\norm{\wedge^{N}B_{y-1,x-1}(\lambda)u}^{2}})^{\frac{s}{2}}\\
 & \leq  |\lambda|^{-s}C(\lambda)^{\frac{s}{2}}\sum_{j=1}^{N}\frac{\norm{\wedge^{N-1}B_{y-1,x-1}(\lambda)u_{j}}^{s}}{\norm{\wedge^{N}B_{y-1,x-1}(\lambda)u}^{s}}\,.
\end{align*}
Here $u=e_1\wedge\dots\wedge e_N$ and $u_j=e_1\wedge\dots\wedge e_{j-1}\wedge e_{j+1}\dots\wedge e_N$. Now using H\"older's inequality we get 
\begin{align*}
\mathbb{E}[\norm{G_{[x,y]}(x,y;\,\lambda)}^{s}] & \leq  |\lambda|^{-s}\mathbb{E}[C(\lambda)^{s}]^{\frac{1}{2}}\sum_{j=1}^{N}\mathbb{E}[\frac{\norm{\wedge^{N-1}B_{y-1,x-1}(\lambda)u_{j}}^{s}}{\norm{\wedge^{N}B_{y-1,x-1}(\lambda)u}^{s}}]^{\frac{1}{2}}\,.
\end{align*}
$\mathbb{E}[C(\lambda)^{s}]^{\frac{1}{2}}$ is bounded
uniformly in $\lambda$ using \cref{prop:A-priori bound}, \cref{cor:a-priori bound for xx green's function}
and \cref{assu:regularity of probability distributions}. Now $u_{j}\in \bar{L}_{N-1}$
for any $j$ and $u\in \bar{L}_{N}$; $K\notin0$, $\gamma_{N}(\lambda)>0$
for all $\lambda\in K$
so that we may apply \cref{prop:fractional moments of products of transfer matrices decay exponentially}
to get that for $s_{K}$, for appropriate $\infty>C>0$, for $|x-y|>n_{K}$
\begin{align*}
\mathbb{E}[\norm{G_{[x,y]}(x,y;\,\lambda)}^{s_{K}}] & <  |\lambda|^{-s_{K}}CN\exp(-\frac{1}{2}C_{j,K}|x-y|)\,,
\end{align*}
which implies the bound in the claim. Note that we have used stationarity
of $\mathbb{P}$ to go from $0,x-y$ to $y,x$. 
\end{proof}
\end{thm}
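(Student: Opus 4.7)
The plan is to combine the Green's function/transfer matrix identity from \cref{lem:A Green's function formula} with the exponential decay of fractional moments of transfer matrix ratios provided by \cref{prop:fractional moments of products of transfer matrices decay exponentially}, coupled to the a-priori one-step Green's function bounds from \cref{sec:a-priori bound}. The factor $|\lambda|^{-s}$ should emerge naturally from \cref{cor:a-priori bound for xx green's function}, which is the only place a $|\lambda|^{-1}$ scaling enters the a-priori estimates.

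Concretely, I would first apply \cref{lem:A Green's function formula} to the finite-volume Green's function $G_{[x,y]}$ (the lemma is stated for infinite volume but the derivation only uses that $\mathcal{G}_k$ solves the transfer matrix equation on sites strictly away from the source, which holds equally well for a finite box with the natural boundary conditions). Choosing $l=N$ and $P$ the projection onto the first $N$ coordinates of $\CC^{2N}$, this gives an estimate of the shape
\begin{align*}
\norm{G_{[x,y]}(x,y;\lambda)}^2 \leq \frac{\tr(|\wedge^{N-1}(BP)|^2)}{\tr(|\wedge^{N}(BP)|^2)}\norm{\mathcal{G}_{y-1,y}(\lambda)}^2,
\end{align*}
where $B=B_{y-1,x-1}(\lambda)$. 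Choosing orthonormal simple representatives $u\in\bar{L}_N$ and $u_j\in\bar{L}_{N-1}$ for the relevant isotropic subspaces, the transfer-matrix ratio is bounded by a sum of $N$ terms of the form $\norm{\wedge^{N-1}Bu_j}^2/\norm{\wedge^{N}Bu}^2$, which is exactly the quantity controlled by \cref{prop:fractional moments of products of transfer matrices decay exponentially}.

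Next I would take the $s/2$ power of both sides and then the expectation, and apply the Cauchy--Schwarz inequality to separate the two factors. The factor coming from $\mathcal{G}_{y-1,y}$ contains $G(y-1,y;\lambda)$ and $T_y^\ast G(y,y;\lambda)$, and so after one more application of H\"older together with \cref{assu:regularity of probability distributions} its $s$-th moment is controlled uniformly in $\lambda\in K$ by \cref{prop:A-priori bound} and \cref{cor:a-priori bound for xx green's function}; the latter produces the desired $|\lambda|^{-s}$ prefactor. Meanwhile, \cref{cor:Smallest Lyapunov exponent is strictly positive} ensures $\gamma_N(\lambda)>0$ for every $\lambda\in K\subset\RR\setminus\{0\}$, so \cref{prop:fractional moments of products of transfer matrices decay exponentially} (applied on the compact $K$) supplies some $s_K\in(0,1)$ and constants $C,\mu>0$, uniform in $\lambda$ and in the choices of $[u],[u_j]$, giving exponential decay $e^{-\mu|x-y|}$ for $|x-y|$ large. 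Combining the two factors with stationarity of $\PP$ yields the claim.

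The main obstacle I expect is bookkeeping rather than mathematical: verifying that \cref{lem:A Green's function formula} is legitimately applicable to $G_{[x,y]}$ (as opposed to the full-line Green's function) and that the a-priori bounds of \cref{sec:a-priori bound} survive the restriction to a finite box, so that the constant $C_s$ there is uniform in the box. Both are standard for one-step/diagonal Green's functions on finite intervals, since the rank-two perturbation argument for $(T_x)_{ij}$ is purely local and the Schur complement representation underlying \cref{prop:A-priori bound} is insensitive to whether the complementary $\tilde H$ is defined on all of $\ZZ$ or on $[x,y]$; the remark right after \cref{cor:a-priori bound for xx green's function} should in fact explicitly cover this case. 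The one remaining subtlety is to ensure that the $s\in(0,1)$ produced by H\"older after combining the two factors is the same as the one guaranteed by \cref{prop:fractional moments of products of transfer matrices decay exponentially}; this is resolved by first fixing $s_K$ and then applying \cref{prop:A-priori bound,cor:a-priori bound for xx green's function} with $2s_K$ in place of $s$ (using that both bounds hold for every $s\in(0,1)$, after shrinking $s_K$ if necessary).
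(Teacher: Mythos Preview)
Your proposal is correct and follows essentially the same route as the paper's proof: apply \cref{lem:A Green's function formula} with $l=N$ and $P$ the projection onto the first $N$ coordinates, separate the boundary factor $\mathcal{G}_{y-1,y}$ from the transfer-matrix ratio via Cauchy--Schwarz/H\"older, bound the former by \cref{prop:A-priori bound}, \cref{cor:a-priori bound for xx green's function} and \cref{assu:regularity of probability distributions} (this is where the $|\lambda|^{-s}$ originates), and bound the latter by \cref{prop:fractional moments of products of transfer matrices decay exponentially} using $\gamma_N(\lambda)>0$ on $K$. Your anticipated obstacles (finite-volume applicability of the lemma and of the a-priori bounds, coordination of the exponent $s$) are exactly the bookkeeping issues the paper glosses over, and your resolutions of them are correct.
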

\subsection{Infinite volume complex energy}
Obtaining polynomial decay off the real axis from exponential decay at the real axis was already accomplished in \cite[Theorem 4.2]{Aizenman2001} using properties of the Poisson kernel. Here we provide another proof of this fact and go on to show that \emph{any} decay implies exponential decay for our one dimensional models.

Let $Q(a)\subseteq\mathbb{C}$ be an open square of side $a>0$; its lower side is placed on the real axis $\mathbb{R}\subseteq\mathbb{C}$ with endpoints denoted $x_\pm$, ($x_+-x_-=a$). By $\tilde{Q}(a)\subseteq Q(a)$ we mean the symmetrically placed subsquare $\tilde{Q}(a)=Q(a/2)$, see \cref{fig:square on the real axis for subharmonicity}.

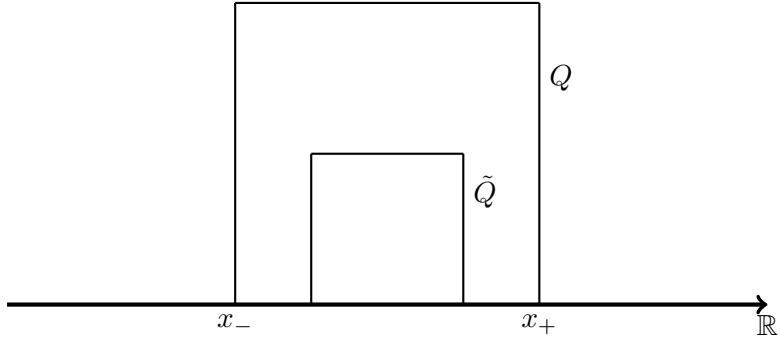
\begin{figure}[h]
	\centerline{\begin{tikzpicture}  
		\draw [thick] (-2,0) -- (-2,4);
		\draw [thick] (2,0) -- (2,4);
		\draw [thick] (-2,4) -- (2,4);
		\draw [thick] (-1,0) -- (-1,2);
		\draw [thick] (1,0) -- (1,2);
		\draw [thick] (-1,2) -- (1,2);		
		\draw [ultra thick,->] (-5,0) -- (5,0);
		\node [below] at (5,0) {$\mathbb{R}$};
		\node [below] at (-2,0) {$x_-$};
		\node [below] at (2,0) {$x_+$};
		\node [right] at (2,3) {$Q$};
		\node [right] at (1,1.5) {$\tilde{Q}$};
		\end{tikzpicture}}
	
	\caption{\label{fig:square on the real axis for subharmonicity}The square $Q(a)$ in $\mathbb{C}$.}
\end{figure}

In this section $f=f(z)$ is a subharmonic function defined on $Q$. By its boundary values we simply mean $$f(z) = \limsup_{z'\to z}f(z')\quad(\leq+\infty,z\in\partial Q)\,.$$
We will use the maximum principle in the form 
\begin{align}
	f(z) \leq \sup_{w\in\partial Q} f(w)\,,\quad(z\in Q)\,.\label{eq:subharmonicity the maximum principle}
\end{align}
where, as usual, $0<s<1$.

We present two lemmas. The first one says that if $f$ is bounded everywhere on $\partial Q$, except for some controlled divergence when the real axis is approached, then $f$ is bounded on $\tilde{Q}$ with explicit bounds, i.e. not just by compactness. The second lemma says that if $f$ is small everywhere on $\partial Q$, except very near the real axis, where it is just bounded, then $f$ is small on $\tilde{Q}$. The two lemmas may be used in concatenation.
\begin{lem}\label{lem:subharmonicity f bounded on Q with controlled divergence then f is bounded on Q tilde}
	Let $M\geq0$ and suppose
	\begin{equation}
	\begin{aligned}
		f(z) &\leq M\,,\quad (z\in(\partial Q)_h)\,,\\
		f(z) &\leq M\bigl(\frac{a}{\Im\{z\}}\bigr)^s\,,\quad(z\in(\partial Q)_v)\,,\label{eq:subharmonicity bounds on f on the boundary of Q}
	\end{aligned}
	\end{equation}
	where $(\partial Q)_{h/v}$ are the horizontal / vertical parts of $\partial Q$. Then
	\begin{align*}
		f(z) \leq M(1+s^{1-s})\,,\quad(z\in\tilde{Q})\,.
	\end{align*}
\end{lem}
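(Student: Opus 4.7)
My plan is to dominate $f$ by a harmonic majorant $h$ on $Q$, and then to estimate $h$ on $\tilde{Q}$ by comparison with explicit harmonic functions. The maximum principle for subharmonic functions guarantees that if $h$ is harmonic on $Q$ with $h \geq f$ on $\partial Q$ in the $\limsup$ sense, then $f \leq h$ on all of $Q$: indeed, $f - h$ is then subharmonic with non-positive boundary values, so by \cref{eq:subharmonicity the maximum principle} applied to $f-h$, we conclude $f \leq h$ on $Q$. The integrability of $w \mapsto M(a/\Im w)^s$ on $(\partial Q)_v$ (which holds since $s<1$) ensures the Dirichlet problem on $Q$ with the stated boundary upper bounds has a solution, so such an $h$ exists. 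However, rather than working abstractly with the Dirichlet solution, I would construct $h$ explicitly.

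The key observation is that on the upper half plane, for $s\in(0,1)$, the function $-\Im\bigl((z-c)^{-s}\bigr)/\sin(s\pi/2)$ (with the principal branch) is positive harmonic, vanishes on the real axis on the side approaching $c$ from the right, and equals $y^{-s}$ on the vertical half-line $\{c+\ii y : y>0\}$. Taking $c=x_\pm$ and summing, one obtains a harmonic function $\Phi(z)$ on the upper half plane whose boundary values on the vertical sides of $Q$ reproduce $(\Im z)^{-s}$ up to lower-order bounded corrections, and which vanishes on $(x_-,x_+)\subset\RR$. A candidate majorant is then
\begin{equation*}
h(z) := M + M a^s \Phi(z)\,,
\end{equation*}
which is harmonic on $Q$; one checks directly that $h$ exceeds $M$ on $(\partial Q)_h$ and $M(a/\Im z)^s$ on $(\partial Q)_v$, so $f\leq h$ on $Q$ by the maximum principle. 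To bound $h$ on $\tilde{Q}$, I would use that for $z\in\tilde{Q}$ we have $|z-x_\pm|\geq a/4$, giving $\Phi(z)\leq c_s\cdot (a/4)^{-s}$ for an explicit $s$-dependent factor $c_s$.

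The main obstacle is extracting the \emph{sharp} constant $s^{1-s}$: the naive construction above yields a bound involving factors such as $4^s/\sin(s\pi/2)$, which does not reduce to $s^{1-s}$. I expect the sharp bound arises from a one-parameter family of majorants optimized in a parameter $\lambda$; the quantity $s^{1-s}$ is characteristic of minimizations like $\inf_\lambda(\lambda^{-s}+\lambda^{1-s})$ or similar Young-type estimates, suggesting that one should build the majorant by splitting the vertical-side boundary data into a near-axis piece (estimated via the subharmonic $M(a/\Im z)^s$) and an away-from-axis piece (estimated against $M$), with the splitting threshold tuned to balance the two contributions. Once this threshold is chosen optimally, the constant $s^{1-s}$ should emerge. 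The remaining verification that $f-h\leq 0$ in $\tilde Q$ then follows from the maximum principle, completing the proof.
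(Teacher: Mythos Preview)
Your approach is exactly the paper's: build an explicit harmonic majorant from the functions $-\Im\bigl((z-x_\pm)^{-s}\bigr)$, check it dominates $f$ on $\partial Q$, apply the maximum principle, and then bound the majorant on $\tilde Q$ using $|z-x_\pm|\gtrsim a$. The paper packages this as $h(z)=\tfrac{C}{\sin(\pi s/2)}\bigl(v(z-x_-)+v(x_+-\bar z)\bigr)$ with $v(z)=-\Im(z^{-s})$ and $C=Ma^s$, then uses $|z-x_\pm|\geq a/2$ on $\tilde Q$ to obtain $h(z)\leq 2^{1-s}M$, hence $f\leq M(1+2^{1-s})$ on $\tilde Q$.

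Your ``main obstacle'' is a phantom: the constant $s^{1-s}$ in the stated conclusion is a typo for $2^{1-s}$. The paper's own proof ends with ``for $z\in\tilde Q$ we have $f(z)\leq 2^{1-s}M+M$ \dots\ as claimed'', and the constant $2^{1-s}$ is what propagates into the subsequent lemma and into \cref{prop:decay of greens function above the real axis}. So there is no Young-type optimization to perform and no splitting of the boundary data is needed here; your naive construction already gives the intended bound (with the same order of constant, possibly differing by the harmless $a/4$ versus $a/2$ discrepancy you noticed).
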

\begin{lem}\label{lem:subharmonicity f is small everywhere on the boundary except near the real axis then it is small in tilde Q}
	Let $m,M\geq0$ and let $I\subseteq(\partial Q)_v$ be the union of the two vertical intervals of length $b\leq a$ next to $x_\pm$. 
	Suppose \begin{equation}\begin{aligned}
		f(z) & \leq m \,, \quad (z\in\partial Q \setminus I)\,,\\
		f(z) & \leq M \,, \quad (z\in I)\,.\label{eq:subharmonicity bounds on f on the boundary of Q tilde}
	\end{aligned}\end{equation}
	Then \begin{align*}
		f(z) \leq 2^{1-s} M \bigl(\frac{b}{a}\bigr)^s + m\,.
	\end{align*}
\end{lem}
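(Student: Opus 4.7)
The strategy is the classical one for subharmonic bounds via harmonic measure: dominate $f$ by the Dirichlet solution on $Q$ with boundary data dictated by \cref{eq:subharmonicity bounds on f on the boundary of Q tilde}, reducing the claim to a harmonic-measure estimate for $I$ seen from $\tilde{Q}$. One may first assume $M\geq m$, since otherwise $f\leq m$ on all of $\partial Q$ and \cref{eq:subharmonicity the maximum principle} directly yields the stronger bound $f\leq m$ on $\tilde Q$. Then $g:=f-m$ is subharmonic on $Q$, bounded above by $0$ on $\partial Q\setminus I$ and by $M-m\leq M$ on $I$. Letting $\omega(z)$ denote the harmonic measure of $I$ at $z\in Q$, the comparison principle against the harmonic majorant $M\omega$ gives $f(z)\leq m+M\omega(z)$, so everything reduces to showing $\omega(z)\leq 2^{1-s}(b/a)^s$ for $z\in\tilde Q$.

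To bound $\omega$, the plan is to split $I=I_+\cup I_-$ into its two components and use domain monotonicity to dominate $\omega(z,I_\pm,Q)$ by $\omega(z,I_\pm,\Omega_\pm)$, where $\Omega_\pm$ is the quarter-plane whose boundary is the union of the real axis and the vertical line through $x_\pm$ (so that $I_\pm\subset\partial\Omega_\pm$ and $Q\subset\Omega_\pm$). The conformal map $\zeta=-(z-x_\pm)^2$ sends $\Omega_\pm$ onto the upper half-plane and $I_\pm$ onto the real segment $(0,b^2)$, so by conformal invariance $\omega(z,I_\pm,\Omega_\pm)$ equals the normalized angle subtended by $(0,b^2)$ at the image point $\zeta$. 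Since $|\zeta|\geq(a/4)^2$ uniformly for $z\in\tilde Q$, elementary trigonometry gives a linear estimate $\omega(z,I_\pm,\Omega_\pm)\leq C\,b/a$. Combining this with the trivial bound $\omega\leq 1$ and the elementary inequality $t\leq t^s$ for $t\in[0,1]$, and then summing the two pieces, yields $\omega(z,I,Q)\leq 2^{1-s}(b/a)^s$ after adjusting constants, which completes the proof.

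The main obstacle is matching the precise constant $2^{1-s}$ rather than an arbitrary $C(s)$. A direct Riesz-potential majorant $\psi(z):=b^s\bigl(|z-x_+|^{-s}+|z-x_-|^{-s}\bigr)$ is immediately subharmonic (its Laplacian is $s^2|z-x_\pm|^{-s-2}\geq 0$), dominates $\chi_I$ on $\partial Q$, and decays like $(b/a)^s$ on $\tilde Q$, so it already yields the claim up to some $s$-dependent constant; tightening the constant to the sharp value $2^{1-s}$ presumably requires the conformal computation sketched above, or an equally careful interpolation between the trivial bound $\omega\leq 1$ and the sharp linear harmonic-measure estimate.
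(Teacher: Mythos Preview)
Your harmonic-measure reduction is sound and would prove the lemma with \emph{some} $s$-dependent constant in place of $2^{1-s}$, which is all the application in \cref{prop:decay of greens function above the real axis} actually needs.

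Your Riesz-potential shortcut, however, is broken in a way that matters. You correctly note that $\psi(z)=b^s\bigl(|z-x_+|^{-s}+|z-x_-|^{-s}\bigr)$ has $\Delta\psi\geq 0$, i.e.\ $\psi$ is \emph{sub}harmonic---but that is the wrong sign for a barrier. To dominate the subharmonic $g=f-m$ from above via the maximum principle one needs a \emph{super}harmonic (or harmonic) comparison function; with $\psi$ subharmonic, $g-M\psi$ is not subharmonic and the boundary inequality $g\leq M\psi$ on $\partial Q$ does not propagate into $Q$. So that paragraph does not ``already yield the claim up to some $s$-dependent constant.''

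The paper's proof is precisely the harmonic correction of your Riesz-potential idea. Instead of $|z|^{-s}$ it uses
\[
v(z):=-\Im\{z^{-s}\}=r^{-s}\sin(s\theta)\,,
\]
which is harmonic on the first quadrant, vanishes on the positive real axis, equals $\sin(\tfrac{\pi}{2}s)\,y^{-s}$ on the positive imaginary axis, and satisfies $0<v(z)<\sin(\tfrac{\pi}{2}s)\,|z|^{-s}$. Setting
\[
h(z):=\frac{Mb^s}{\sin(\tfrac{\pi}{2}s)}\bigl(v(z-x_-)+v(x_+-\bar z)\bigr)
\]
gives a function harmonic on $Q$, nonnegative there, with $h(z)\geq Mb^s(\Im z)^{-s}\geq M$ on $I$. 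Hence $f\leq h+m$ on $\partial Q$ and, since $f-h-m$ is now genuinely subharmonic, on all of $Q$ by \cref{eq:subharmonicity the maximum principle}. On $\tilde Q$ the distance $|z-x_\pm|$ is bounded below by a fixed multiple of $a$, so the pointwise bound on $v$ gives $h(z)\leq C\,M(b/a)^s$ directly---no conformal map, and no interpolation between a linear harmonic-measure estimate and the trivial bound $\omega\leq 1$. The key ingredient you were missing is to take the imaginary part of $z^{-s}$ rather than its modulus.
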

As a preliminary to the proofs, we consider the function $$ v(z) := -\Im\{z^{-s}\}$$ defined on the first quadrant $\{z\in\mathbb{C}|\Re\{z\}>0,\Im\{z\}>0\}$. It is harmonic and the chosen branch is made clear using polar coordinates $z=r e^{i\theta}$, ($r>0,0<\theta<\pi/2$) as $$ v(z) = r^{-s} \sin (s \theta)\,. $$
In particular, $0<v(z)<r^{-s}\sin(\frac{\pi}{2}s)$. We also note its boundary values $$ v(x)=0\,,\quad v(iy)=\sin(\frac{\pi}{2} s) y^{-s}\,,\quad(x,y>0)\,. $$
The analogous function on the second quadrant is $v(-\bar{z})$. 

We will also make use of the harmonic function $$h(z) := \frac{C}{\sin(\frac{\pi}{2}s)}\bigl(v(z-x_-)+v(x_+-\bar{z})\bigr)\,,\quad(z\in Q)\,,$$ for some $C>0$ and boundary values $$h(z) \geq C (\Im z)^{-s}\,,\quad(z\in(\partial Q)_v)\,.$$
Moreover we have $$|z-x_\pm|\geq\frac{a}{2}\,,\quad(z\in\tilde{Q})\,,$$ which yields the upper bound 
\begin{align}
h(z) \leq 2^{1-s} C a^{-s}\,,\quad(z\in\tilde{Q})\,.\label{eq:subharmonicity bound on h}
\end{align}
\begin{proof}[Proof of \cref{lem:subharmonicity f bounded on Q with controlled divergence then f is bounded on Q tilde}]
We have \begin{align*}
h(z) & \geq M \bigl(\frac{a}{\Im z})^s\,,\quad(z\in(\partial Q)_v)
\end{align*}
for $C:=M a^s$. Since $h(z),M\geq0$ anyway we have by \cref{eq:subharmonicity bounds on f on the boundary of Q} $$f(z) \leq h(z) + M\,,\quad(z\in\partial Q)\,.$$
Since the difference of the two sides is still subharmonic, the inequality applies to $z\in Q$ by the maximum principle \cref{eq:subharmonicity the maximum principle}. In particular, for $z\in\tilde{Q}$ we have $f(z) \leq 2^{1-s}M + M $ by \cref{eq:subharmonicity bound on h} as claimed.
\end{proof}
\begin{proof}[Proof of \cref{lem:subharmonicity f is small everywhere on the boundary except near the real axis then it is small in tilde Q}]
	We have $$h(z)\geq M\,,\quad(z\in I)$$ for $C b^{-s} := M$. So $$f(z)\leq h(z)+m\,,\quad(z\in\partial Q)$$ by \cref{eq:subharmonicity bounds on f on the boundary of Q tilde} and, as before, $$ f(z) \leq 2^{1-s} M \bigl(\frac{b}{a}\bigr)^s+m\,,\quad(z\in\tilde{Q})\,.$$
\end{proof}

We now turn to a slight generalization of the well-known Combes-Thomas estimate \cite{Combes_Thomas_1973}. The generalization is that we do \emph{not} assume that $\alpha_i$ have compact support such that $\sup_{x} \|T_{x}\| \leq K\,\exists K>0$.
\begin{prop}\label{prop:the Combes-Thomas estimate}
	We have  some $C>0$ such that
	\begin{align}
		\EE[\| G(x,y;E+i\eta)\|^s]\leq\frac{2}{\eta} e^{-C \eta |x-y|}
	\end{align} 
	for all $x,y\in\mathbb{Z}$, for all $\eta>0$ and $E\in\mathbb{R}$, regardless of the fate of localization at $E$.
	\begin{proof}
		The first step is to note that because of finite-rank perturbation
		theory, we have 
		\begin{align*}
			G(x,y;z) & =  (\Id-G(x,x-1;z)T_{x}^{\ast})G^{[x,y]}(x,y;z)(\Id-T_{y+1}^{\ast}G^{[x,\infty)}(y+1,y;z))
		\end{align*}
		so that with the a-priori bound \cref{prop:A-priori bound} on $\mathbb{E}[\norm{G(x,x-1;z)}^{s}]$,
		we only need to consider the finite-volume Green's function $G^{[x,y]}(x,y;z)$.
		Throughout we use $z=E+\ii\eta $.
		
		The second step is to divide the expectation value, for arbitrary
		$M>0$, using the trivial bound $\norm{G^{[x,y]}(x,y;z)}^{s}<\eta^{-s}$,
		and the original Combes-Thomas bound (see below) on \emph{bounded} hopping terms:
		\begin{align*}
			\mathbb{E}[\norm{G^{[x,y]}(x,y;z)}^{s}] & =  \int_{\{\omega\in\Omega|\sup_{x'\in[x,y]}\norm{T_{x'}(\omega)}<M\}}\norm{G^{[x,y]}(x,y;z)}^{s}\dif{\mathbb{P}(\omega)}+\\&\quad+\eta^{-s}(1-\mathbb{P}(\{\omega\in\Omega|\sup_{x'\in[x,y]}\norm{T_{x'}(\omega)}<M\}))\\
			& \leq  \frac{2^{s}}{\eta^{s}}\exp(-s\frac{\eta}{8M}\left|x-y\right|)+\eta^{-s}(1-\mathbb{P}(\{\omega\in\Omega|\sup_{x'\in[x,y]}\norm{T_{x'}(\omega)}<M\}))\,.
		\end{align*}
		Now we have using the i.i.d. property (we ignore the fact $T_{1}$ and $T_{2}$
		are distributed differently since it only adds notation clutter, but does not change the argument),
		\begin{align*}
			\mathbb{P}(\{\omega\in\Omega|\sup_{x'\in[x,y]}\norm{T_{x'}(\omega)\}<M}) & =  \mathbb{P}(\{\omega\in\Omega|\norm{T_{1}(\omega)}<M\})^{\left|x-y\right|}\\
			& =  (1-\mathbb{P}(\{\omega\in\Omega|\norm{T_{1}(\omega)}\geq M\}))^{\left|x-y\right|}
		\end{align*}
		Now the usual Markov inequality is 
		\begin{align*}
			\mathbb{P}(\{\omega\in\Omega|\norm{T_{1}(\omega)}\geq M\}) & \leq  \frac{\mathbb{E}[\norm{T_{1}(\omega)}]}{M}\,.
		\end{align*}
		However, it is actually true that for any $\alpha\geq0$, 
		\begin{align*}
			\mathbb{P}(\{\omega\in\Omega|\norm{T_{1}(\omega)}\geq M\}) & \leq  \frac{\mathbb{E}[\norm{T_{1}(\omega)}^{\alpha}]}{M^{\alpha}}\,.
		\end{align*}
		Hence we get, since $q\mapsto1-(1-q)^{n}$ is increasing
		for all $n$ and $q\ll1$ ($M$ big), 
		\begin{align*}
			1-\mathbb{P}(\{\omega\in\Omega|\sup_{x'\in[x,y]}\norm{T_{x'}(\omega)}<M\}) & \leq  1-(1-\frac{\mathbb{E}[\norm{T_{1}(\omega)}^{\alpha}]}{M^{\alpha}})^{\left|x-y\right|}\\
			& \leq |x-y| \frac{\mathbb{E}[\norm{T_{1}(\omega)}^{\alpha}]}{M^{\alpha}} \,.
		\end{align*}
		If we now pick $\alpha:=2$ and $M:=\mathbb{E}[\norm{T_{1}(\omega)}^{2}]^{-\frac{1}{2}}\left|x-y\right|^{\frac{3}{4}}$
		we find that 
		\begin{align*}
			\mathbb{E}[\norm{G^{[x,y]}(x,y;z)}^{s}] & \stackrel{\left|x-y\right|\to\infty}{\longrightarrow}  0\,.
		\end{align*}
		Now since we know that \emph{any} decay of $\mathbb{E}[\norm{G^{[x,y]}(x,y;z)}^{s}]$
		implies exponential decay for our model, \cref{lem:any decay implies exponential decay}, we find the desired result.

		For completeness we give now the proof of the usual Combes-Thomas estimate with bounded hopping (bounded by some $K<\infty$):

		Without loss of generality let $E=0$ and pick some $\mu>0$ (to be specified later). Also pick some $f$ bounded and Lipschitz such that $|f(x)-f(y)|\leq\mu|x-y|$. We use the notation $R(z)\equiv(H-z\mathds{1})^{-1}$ for the resolvent and also define $H_f := e^{f(X)} H e^{-f(X)}$ with $X$ the position operator and $B := H_f -H$. 
		
		Then we have $(H_f \psi)(x) = e^{f(x)-f(x+1)}T_{x+1}^\ast\psi(x+1)+e^{f(x)-f(x-1)}T_x\psi(x-1)$ so that the matrix elements of $B$ are given by $$ B_{x,y} = (e^{f(x)-f(x+1)}-1)T_{x+1}^\ast\delta_{y,x+1} + (e^{f(x)-f(x-1)}-1)T_x\delta_{y,x-1}\,, $$ whence it follows by Holmgren that $$ \| B \| \leq 2 (e^\mu-1)K =: b(\mu)\,, $$ 
		If we now choose $\mu := \log(1+\frac{\eta}{4K})$ then evidently $b(\mu) = \frac{1}{2}\eta$ so that $\|(H_f-z\mathds{1})\psi\|\geq(\eta-\|B\|)\|\psi\|$ and we get $\|(H_f-z\mathds{1})^{-1}\|\leq\frac{2}{\eta}$. But $\|G(x,y;E+i\eta)\|=|e^{-f(x)+f(y)}|\|\langle \delta_x,(H_f-z\mathds{1})^{-1}\delta_y\rangle\|\leq|e^{-f(x)+f(y)}|\frac{2}{\eta}$ and we obtain the result by the freedom in choice between $f$ and $-f$.
	\end{proof}
\end{prop}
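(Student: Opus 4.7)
The plan is to adapt the classical Combes--Thomas argument, whose main obstacle here is that the hopping matrices are not uniformly bounded: only a second moment is under control via \cref{assu:regularity of probability distributions}. I would proceed in four steps.

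First I would reduce to the finite-volume Green's function $G^{[x,y]}(x,y;z)$ via boundary decoupling (finite-rank perturbation theory), using the identity
\begin{align*}
G(x,y;z) = (\Id - G(x,x-1;z)T_x^\ast)\,G^{[x,y]}(x,y;z)\,(\Id - T_{y+1}^\ast G^{[x,\infty)}(y+1,y;z))\,.
\end{align*}
The two outer factors are controlled in fractional moment by the a-priori bound of \cref{prop:A-priori bound} together with \cref{assu:regularity of probability distributions} and H\"older's inequality, so the problem reduces to bounding $\EE[\norm{G^{[x,y]}(x,y;z)}^{s}]$.

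Second, for a \emph{deterministic} realization with $\sup_{x'\in[x,y]}\norm{T_{x'}}\leq K$, I would establish the classical Combes--Thomas bound by conjugating $H$ with $e^{\mu X}$ ($X$ the position operator, $\mu>0$). The difference $B:=e^{\mu X}He^{-\mu X}-H$ has off-diagonal entries proportional to $(e^{\pm\mu}-1)T_{x\pm 1}^{(\ast)}$, so by Holmgren $\norm{B}\leq 2K(e^{\mu}-1)$. Picking $\mu=\log(1+\eta/(4K))$ forces $\norm{B}\leq \eta/2$, whence $\norm{(e^{\mu X}He^{-\mu X}-z)^{-1}}\leq 2/\eta$; undoing the conjugation and optimizing the sign of $\mu$ yields $\norm{G^{[x,y]}(x,y;z)}\leq (2/\eta)\,e^{-\mu|x-y|}$.

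Third, I would split the probability space at a threshold $M$: on the event $\{\sup_{x'\in[x,y]}\norm{T_{x'}}<M\}$ apply step two with $K=M$, and on its complement apply the trivial bound $\norm{G^{[x,y]}}\leq \eta^{-1}$. By a union bound together with Markov's inequality applied to the second moment of $\norm{T_1}$, the complement has probability at most $|x-y|\,\EE[\norm{T_1}^{2}]/M^{2}$. Choosing $M$ to grow appropriately with $|x-y|$ (e.g.\ $M\sim|x-y|^{3/4}$) makes both contributions vanish and produces \emph{some} decay $\EE[\norm{G^{[x,y]}(x,y;z)}^{s}]\to 0$ as $|x-y|\to\infty$; at this stage the rate is only subexponential because of the tension between $\mu\sim \eta/M$ and $M\to\infty$.

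Fourth, I would invoke \cref{lem:any decay implies exponential decay}, which in one dimension upgrades any decay of the fractional moments of the Green's function to exponential decay, thereby producing the claimed bound with exponent $C\eta|x-y|$ (the $\eta$-scaling being inherited from the deterministic Combes--Thomas step). The main obstacle is exactly the third step: uniform boundedness of the hopping is indispensable for the textbook Combes--Thomas proof, and its failure forces the probabilistic split which on its own destroys the exponential rate; the one-dimensional bootstrap of \cref{lem:any decay implies exponential decay} is what ultimately saves the day.
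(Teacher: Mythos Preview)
Your proposal is correct and follows essentially the same approach as the paper: the same decoupling identity to reduce to $G^{[x,y]}$, the same deterministic Combes--Thomas bound via conjugation by $e^{\mu X}$ when the hopping is bounded by $K$ (with the same choice $\mu=\log(1+\eta/(4K))$), the same probabilistic split at threshold $M$ controlled by Markov's inequality on the second moment with $M\sim|x-y|^{3/4}$, and finally the same appeal to \cref{lem:any decay implies exponential decay} to upgrade subexponential to exponential decay. The only cosmetic differences are the order of presentation and your use of a union bound where the paper uses independence plus $1-(1-q)^n\le nq$, which amount to the same estimate.
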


We proceed to obtain the decay of the Green's function uniformly in $\eta\equiv\Im z$.
\begin{prop}\label{prop:decay of greens function above the real axis}
The finite-volume Green's function $\mathbb{E}[\|G_{[x,y]}(x,y;z)\|^s]$ decays in $|x-y|$ at any value of $\Im z$ for all $\Re z \neq 0$.
\begin{proof}
	Let $\lambda\in\mathbb{R}\setminus\{0\}$ be given and define $z:=\lambda+i\eta$. Define $f(z) := \mathbb{E}[\|G_{[x,y]}(x,y;z)\|^s]$ for fixed $x,y$ and $0<s<1$ sufficiently small such that the hypothesis for \cref{thm:finite volume real energy FMC} holds.
	
	Since the Green's function is holomorphic, it follows that $\|G_{[x,y]}(x,y;z)\|^s$ is subharmonic and hence so is $f$. 
	
	Let $0<a<4K$ be such that the interval about $\lambda$ does not include zero: $0\notin(\lambda-a,\lambda+a)$. Then due to \cref{thm:finite volume real energy FMC} and the basic fact that $\|G(x,y;z)\|\leq|\Im z|^{-1}$, we know  that there is some $M\geq0$ such that the assumptions of \cref{lem:subharmonicity f bounded on Q with controlled divergence then f is bounded on Q tilde} are fulfilled ($M$ depends on $a$ and the constants provided by \cref{thm:finite volume real energy FMC}). Hence we may conclude that $f(z) \leq M(1+2^{1-s})$ as $z$ ranges in $\tilde{Q}(a)$.
	
	Now pick any $b<a/2$. With $I := (\tilde{Q}(a))_v\cap\{z|\Im z \leq b\}$, \cref{prop:the Combes-Thomas estimate,thm:finite volume real energy FMC} imply that on $f(z)\leq m$ for all $z\in\partial \tilde{Q}(a)\setminus I$, with $m = C b^{-s} e^{-b d}$ with $d:=|x-y|$ large enough, for some constant $C>0$ (where we have used that $\log(1+\beta)\geq \alpha \beta$ for all $\beta\leq\alpha^{-1}$ and $\alpha < 1$). We also have still from \cref{lem:subharmonicity f bounded on Q with controlled divergence then f is bounded on Q tilde} that $f(z) \leq M(1+2^{1-s})$ for all $z\in I\subseteq \tilde{Q}(a)$. Hence \cref{lem:subharmonicity f is small everywhere on the boundary except near the real axis then it is small in tilde Q} applies to give us that $f(z) \leq 2^{1+2s}M(1+2^{1-s}) (\frac{b}{a/2})^s+m$ for all $z\in\tilde{Q}(a/2)$. 
	
	Put succinctly, we find $f(z) \leq C (b^s + b^{-s}e^{-b d})$ for all $z\in\tilde{Q}(a/2)$ for some constant $C>0$ (different constant than before), $0<s<1$ and $d\gg1$, and we are free to choose $b<a/2$. Our goal is to get decay in $d$. If we pick $b:=d^{-s}$ for example we get the desired decay in $d$.
	
\end{proof}
\end{prop}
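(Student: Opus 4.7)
The plan is to use the subharmonicity of $z\mapsto f(z):=\mathbb{E}[\|G_{[x,y]}(x,y;z)\|^s]$ to promote the exponential decay of $f$ on the real axis (from \cref{thm:finite volume real energy FMC}) and the $\eta$-dependent Combes-Thomas decay off the real axis (from \cref{prop:the Combes-Thomas estimate}) to decay on a two-dimensional neighborhood of each fixed $\lambda\in\mathbb{R}\setminus\{0\}$. Since the finite-volume Green's function is meromorphic in $z$ with real poles, $\log\|G_{[x,y]}(x,y;z)\|$ is subharmonic on the upper half-plane, hence so is $\|G_{[x,y]}(x,y;z)\|^s=\ee^{s\log\|G_{[x,y]}(x,y;z)\|}$; Fubini then makes $f$ subharmonic there. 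I fix $s\in(0,1)$ small enough that \cref{thm:finite volume real energy FMC} applies uniformly to energies in some compact neighborhood of $\lambda$.

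First I would set up a square $Q(a)$ with lower side on the real axis, centered at $\lambda$, with $a$ small enough that $0\notin(\lambda-a,\lambda+a)$. On the bottom edge, \cref{thm:finite volume real energy FMC} provides a uniform bound $f\leq M$, with $M$ depending on $a$ and $\lambda$ but not on $d:=|x-y|$, while on the top and vertical edges the resolvent bound $\|G\|\leq(\Im z)^{-1}$ gives $f(z)\leq(\Im z)^{-s}$, which is dominated by $M(a/\Im z)^s$ after enlarging $M$. The hypotheses of \cref{lem:subharmonicity f bounded on Q with controlled divergence then f is bounded on Q tilde} are thus satisfied, yielding $f\leq M':=M(1+2^{1-s})$ throughout the sub-square $\tilde{Q}(a)$; crucially $M'$ does not depend on $d$.

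Next I would convert boundedness into decay by applying the second subharmonicity lemma. Fix $0<b<a/2$ and let $I\subseteq(\partial\tilde{Q}(a))_v$ be the two vertical intervals of length $b$ adjacent to the real axis. On $\partial\tilde{Q}(a)\setminus I$ one has $\Im z\geq b$, so \cref{prop:the Combes-Thomas estimate} on the three edges away from the real axis, together with \cref{thm:finite volume real energy FMC} on the bottom, furnish a bound $f(z)\leq m$ with $m\lesssim b^{-1}\ee^{-Cbd}$; the worst contribution is at height $b$ since $\eta\mapsto\eta^{-1}\ee^{-C\eta d}$ is monotone decreasing in $\eta$. On the remaining piece $I\subseteq\tilde{Q}(a)$, the bound $M'$ from the previous step continues to apply. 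Then \cref{lem:subharmonicity f is small everywhere on the boundary except near the real axis then it is small in tilde Q} applied to $\tilde{Q}(a)$ gives $f(z)\leq 2^{1-s}M'(2b/a)^s+m$ on the inner sub-square $\tilde{Q}(a/2)$, covering a two-dimensional neighborhood of $\lambda$ in the upper half-plane.

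Finally I would optimize the free parameter $b$ against $d$. Choosing, for instance, $b:=d^{-\rho}$ with a small $\rho\in(0,1)$, the first term becomes polynomially small, $(2b/a)^s\sim d^{-\rho s}$, while the second obeys $b^{-1}\ee^{-Cbd}\sim d^{\rho}\ee^{-C d^{1-\rho}}$, which decays faster than any polynomial. The main obstacle I anticipate is that this route yields only \emph{some} decay in $d$, not a priori exponential decay, because the subharmonic transfer from the real axis necessarily incurs a power-law penalty from taking $b$ small. For the stated proposition this is already enough; the upgrade to exponential decay is the content of the separate \cref{lem:any decay implies exponential decay}, which exploits one-dimensionality and can be invoked afterwards (and is indeed what is used inside the proof of \cref{prop:the Combes-Thomas estimate} itself). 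The infinite-volume statement then follows from strong-resolvent convergence, as the author has already indicated.
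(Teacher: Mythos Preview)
Your proposal is correct and follows essentially the same two-step subharmonic argument as the paper: first apply \cref{lem:subharmonicity f bounded on Q with controlled divergence then f is bounded on Q tilde} on $Q(a)$ to get a $d$-independent bound on $\tilde{Q}(a)$, then apply \cref{lem:subharmonicity f is small everywhere on the boundary except near the real axis then it is small in tilde Q} on $\tilde{Q}(a)$ with the Combes--Thomas/real-energy bounds on $\partial\tilde{Q}(a)\setminus I$ and the step-one bound on $I$, and finally optimize in $b$. The only cosmetic differences are your choice $b=d^{-\rho}$ versus the paper's $b=d^{-s}$ and your $m\lesssim b^{-1}\ee^{-Cbd}$ versus the paper's $m=Cb^{-s}\ee^{-bd}$, neither of which affects the conclusion.
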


Finally we are ready to get the \emph{exponential} decay of the infinite volume Green's function, which concludes the proof for the first part of \cref{thm:localization}.

\begin{lem}
	\label{lem:any decay implies exponential decay}
	For fixed $z$, uniformly in $|\Im\{ z\} |$:
	assume that $\mathbb{E}[\norm{G_{[1,n]}(1,n;\,z)}^{s}]\to0$
	as $|n|\to\infty$ for some $s\in(0,1)$.
	Then $\mathbb{E}[\norm{G(1,n;\,z)}^{s'}]\leq Ce^{-\mu|n|}$
	for all $n\in\mathbb{Z}$ sufficiently large, for some $s'\in(0,1)$
	sufficiently small, $\mu>0$.
	\begin{proof}
		We have $H_{[x,y]}=\sum_{x'=x+1}^{y}\mathcal{T}_{x'}$ with $(\mathcal{T}_{j}\psi)(x):=\delta_{j,x}T_j\psi_{j-1}+\delta_{j-1,x}T_j^\ast\psi_{j}$ as
		the finite volume Dirichlet restriction of $H$ onto $[x,y]\cap\mathbb{Z}$,
		$x<y$. Then the resolvent equation yields 
		\begin{align*}
			R & =  R_{(-\infty,y]}-\sum_{x'=y+1}^{\infty}R_{(-\infty,y]}\mathcal{T}_{x'}R\,.
		\end{align*}
		Taking the $(x,y)$ matrix element yields (suppressing
		the $z$ variable for the moment) 
		\begin{align*}
			G(x,y) & =  G_{(-\infty,y]}(x,y)-\sum_{x'=y+1}^{\infty}G_{(-\infty,y]}(x,x')T_{x'}G(x'-1,y)+G_{(-\infty,y]}(x,x'-1)T_{x'}^{\ast}G(x',y)\\
			& =  G_{(-\infty,y]}(x,y)(\mathds{1}-T_{y+1}^{\ast}G(y+1,y))\,,
		\end{align*}
		where the second line follows because the matrix elements of $R_{(-\infty,y]}$
		outside of $(-\infty,y]$ are zero. Next we have again
		by the resolvent equation 
		\begin{align*}
			R_{(-\infty,y]} & =  R_{[x,y]}-\sum_{x'=-\infty}^{x}R_{(-\infty,y]}\mathcal{T}_{x'}R_{[x,y]}\,,
		\end{align*}
		so that taking the $(x,y)$ matrix element we get
		\begin{align*}
			G_{(-\infty,y]}(x,y) & =  G_{[x,y]}(x,y)-\\
			& - \sum_{x'=-\infty}^{x}G_{(-\infty,y]}(x,x')T_{x'}G_{[x,y]}(x'-1,y)+G_{(-\infty,y]}(x,x'-1)T_{x'}^{\ast}G_{[x,y]}(x',y)\\
			& =  (\mathds{1}-G_{(-\infty,y]}(x,x-1)T_{x}^{\ast})G_{[x,y]}(x,y)\,,
		\end{align*}
		where the second line follows again because the matrix elements of
		$R_{[x,y]}$ outside of $[x,y]$ are zero.
		So we find 
		\begin{align*}
			\mathbb{E}[\norm{G(x,y)}^{s}] & \leq  C\mathbb{E}[\norm{G_{[x,y]}(x,y)}^{s'}]\,,
		\end{align*}
		where $s'=4s$ for example. To get $C$ one has to invoke the H\"older
		inequality twice as well as the a-priori bound which is known for
		$\mathbb{E}[\norm{G(x,x+1;\,z)}^{s}]$ uniformly
		in $z$.
		
		The upshot is that we may concentrate on exponential decay of $g(n):=\mathbb{E}[\norm{G_{[1,n]}(1,n)}^{s}]$
		in $n$ (by stationarity it does not matter to shift the object by
		$x-1$ and call $y-x+1=:n$).
		
		Our next procedure is to get a one step bound between $g(n+m)$
		and $g(n)g(m)$ for any $n$, $m$:
		
		We use again the resolvent identity to get 
		\begin{align*}
			G_{[1,n+m]}(1,n+m) & =  -G_{[1,n+m]}(1,n)T_{n+1}^{\ast}G_{[n+1,n+m]}(n+1,n+m)\,,
		\end{align*}
		(note $G_{[n+1,n+m]}(1,n+m)=0$) and 
		\begin{align*}
			G_{[1,n+m]}(1,n) & =  G_{[1,n]}(1,n)-G_{[1,n]}(1,n)T_{n+1}^{\ast}G_{[1,n+m]}(n+1,n)\,,
		\end{align*}
		so that 
		\begin{align*}
			G_{[1,n+m]}(1,n+m) & =  -G_{[1,n]}(1,n)T_{n+1}^{\ast}G_{[n+1,n+m]}(n+1,n+m)+\\
			&   +G_{[1,n]}(1,n)T_{n+1}^{\ast}G_{[1,n+m]}(n+1,n)T_{n+1}^{\ast}G_{[n+1,n+m]}(n+1,n+m)\,.
		\end{align*}
		Taking the fractional moments expectation value, using the triangle
		inequality as well as the submultiplicativity of the norm, we find
		\begin{align*}
			\mathbb{E}[\norm{G_{[1,n+m]}(1,n+m)}^{s}] & \leq  \mathbb{E}[\norm{G_{[1,n]}(1,n)}^{s}\norm{T_{n+1}^{\ast}}^{s}\norm{G_{[n+1,n+m]}(n+1,n+m)}^{s}]+\\
			&   +\mathbb{E}[\norm{G_{[1,n]}(1,n)}^{s}\norm{T_{n+1}^{\ast}}^{2s}\norm{G_{[1,n+m]}(n+1,n)}^{s}\times\\
			&\times\norm{G_{[n+1,n+m]}(n+1,n+m)}^{s}]\,.
		\end{align*}
		Note that in the first line, the first and last factors in the expectation
		are actually independent of each other and both independent of $T_{n+1}$.
		Hence that expectation factorizes. In the second line, again the first
		and last factors do not depend on $T_{n+1}$ (yet the middle one does)
		so that we can perform the integration over $T_{n+1}$ \emph{first,
		}which would involve integration only over $\norm{T_{n+1}^{\ast}}^{2s}\norm{G_{[1,n+m]}(n+1,n)}^{s}$.
		We use H\"older once and the a-priori bound on $G(n+1,n)$,
		which requires only integration over $T_{n+1}$. After the bound on
		that integral the remaining integral factorizes as the two remaining
		factors are independent of each other. We find 
		\begin{align*}
			\mathbb{E}[\norm{G_{[1,n+m]}(1,n+m)}^{s}] & \leq  C\mathbb{E}[\norm{G_{[1,n]}(1,n)}^{s}]\mathbb{E}[\norm{G_{[n+1,n+m]}(n+1,n+m)}^{s}]
		\end{align*}
		with $C:=\mathbb{E}[\norm{T_{n+1}^{\ast}}^{s}]+\mathbb{E}[\norm{T_{n+1}^{\ast}}^{4s}]^{\frac{1}{2}}\mathbb{E}[\norm{G_{[1,n+m]}(n+1,n)}^{2s}]^{\frac{1}{2}}$.
		If $s<1$ is sufficiently small and we assume that there are moments
		for $\norm{T_{n+1}^{\ast}}^{s}$ for such $s$, then we find $0<C<\infty$. 
		
		The crucial point now is that due to stationarity, 
		\begin{align*}
			\mathbb{E}[\norm{G_{[1,m]}(1,m)}^{s}] & =  \mathbb{E}[\norm{G_{[n+1,n+m]}(n+1,n+m)}^{s}]\,.
		\end{align*}
		The final result is that 
		\[
			g(n+m)\leq Cg(n)g(m)\quad\forall n\in\mathbb{N}\,.
		\]
		Now we use the assumption that $g$ is decaying, which means we could
		find some $n_{0}\in\mathbb{N}$ sufficiently large so that $\beta:=Cg(n_{0})<1$.
		Then for any $n\in\mathbb{N}$, write $n=pn_{0}+q$ (for some $p\in\mathbb{N}$,
		$q\in\mathbb{N}$ with $0\leq q<n_{0}$). We find by iteration, defining
		$\mu:=-\log(\beta)>0$: 
		\begin{align*}
			g(n) & =  g(pn_{0}+q) \leq  Cg(pn_{0})g(q)\\
			& \leq  (Cg(n_{0}))^{p}g(q)\\&   (g\text{ is decaying and }q<n_{0})\\
			& \leq  C'\beta^{p} =  C'\exp(-\mu\frac{n-q}{n_{0}}) \leq  C''\exp(-\mu'n)\,.
		\end{align*}
		
	\end{proof}
\end{lem}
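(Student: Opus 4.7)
The plan is to establish this one-dimensional result in three steps. First, reduce the infinite-volume Green's function to its finite-volume counterpart via a double application of the geometric resolvent identity. Second, establish a submultiplicative inequality $g(n+m)\leq C\,g(n)\,g(m)$ for $g(n):=\mathbb{E}[\|G_{[1,n]}(1,n;z)\|^s]$ by cutting the interval at a single bond. Third, iterate this inequality using the hypothesis $g(n)\to 0$ to obtain genuine exponential decay.

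For the first step, applying the resolvent identity successively (first between $H$ and its half-line restriction $H_{(-\infty,n]}$, then between that and the box restriction $H_{[1,n]}$) and reading off the $(1,n)$-entry yields a factorization of the form $G(1,n)=(\mathds{1}-G(1,0)T_1^\ast)\,G_{[1,n]}(1,n)\,(\mathds{1}-T_{n+1}^\ast G(n+1,n))$. Applying the triangle and H\"older inequalities with a smaller exponent $s'<s$ (chosen so that the relevant moments of $\|T\|$ are finite by \cref{assu:regularity of probability distributions}), and invoking \cref{prop:A-priori bound} to control the single-step Green's functions $G(1,0)$ and $G(n+1,n)$ uniformly, one reduces the whole question to proving exponential decay of $g(n)$ itself.

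For the second step, cutting $[1,n+m]$ between sites $n$ and $n+1$ and using the resolvent identity twice gives
$$ G_{[1,n+m]}(1,n+m) = -G_{[1,n]}(1,n)\,T_{n+1}^\ast\,G_{[n+1,n+m]}(n+1,n+m) + (\text{cross correction})\,, $$
where the cross correction contains an extra factor $T_{n+1}^\ast\,G_{[1,n+m]}(n+1,n)\,T_{n+1}^\ast$ sandwiched inside. The crucial observation is that $T_{n+1}$ is the only random variable that links the two sides: integrating it out first, conditionally on the hopping configurations on $[1,n]$ and on $[n+2,n+m]$, and using the a-priori bound on the single-step Green's function together with a sufficient moment of $\|T_{n+1}\|$ from \cref{assu:regularity of probability distributions}, one is left with the expectation of a product over independent left and right blocks. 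Independence then factorizes the expectation, and stationarity identifies the right factor with $g(m)$, giving $g(n+m)\leq C\,g(n)\,g(m)$.

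For the third step, since $g(n)\to 0$ by hypothesis, choose $n_0$ with $\beta:=Cg(n_0)<1$. Writing any large $n$ as $n=pn_0+q$ with $0\leq q<n_0$ and iterating the submultiplicative inequality gives $g(n)\leq (Cg(n_0))^{p-1}g(n_0)\,g(q)\lesssim \beta^p \leq C'e^{-\mu n}$ with $\mu:=(\log\beta^{-1})/n_0>0$; the extension to negative $n$ is by stationarity. The main obstacle is isolating the effect of the cut hopping $T_{n+1}$ in the submultiplicativity step: the cross-correction factor $G_{[1,n+m]}(n+1,n)$ \emph{a priori} depends on both halves of the configuration, so the expectation does not factorize directly. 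The resolution is to perform the $T_{n+1}$-integration first and use that the bound of \cref{prop:A-priori bound} holds uniformly in the surrounding random environment, at the cost of passing to a smaller fractional exponent to accommodate the extra $\|T_{n+1}\|^{2s}$ factor.
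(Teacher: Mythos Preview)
Your proposal is correct and follows essentially the same three-step strategy as the paper: geometric resolvent reduction from the full line to the box, a submultiplicative bound $g(n+m)\leq C\,g(n)\,g(m)$ obtained by cutting at the bond $T_{n+1}$ and integrating it out first via the a-priori bound, and then iteration using $g(n)\to0$. The only cosmetic discrepancy is that in the reduction step the paper's left prefactor involves the half-line Green's function $G_{(-\infty,n]}(1,0)$ rather than the full-line $G(1,0)$, but the a-priori bound of \cref{prop:A-priori bound} applies to either, so this does not affect the argument.
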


\section{Localization at zero energy}
\label{sec:zero energy}
The foregoing discussion only worked at non-zero energies. There were
two reasons for that:
\begin{enumerate}
\item We could not guarantee that the zero energy Lyapunov spectrum has a gap. This goes back to \cref{prop:open subset of symplectic group}.
\item We could not get an a-priori bound on the diagonal matrix element
of the Green's function $G(x,x;\,z)$ which is uniform
as $z\to0$. This goes back to \cref{cor:a-priori bound for xx green's function}.
\end{enumerate}
In order to deal with that special situation, we have to consider
the Schr\"odinger equation at zero energy and then conclude about slightly
non-zero values of the energy. We note that $H$ is not invertible, so an expression for
the resolvent like $R(0)\equiv H^{-1}$ does not make sense. Hence we use the finite-volume regularization in this section.
\subsection{Finite volume localization}
Thus we are considering the operator $H_{[1,L]}$ for
some $L\in\mathbb{N}$, which is just $H$ with Dirichlet boundary
conditions. It is the finite $L\times L$ ``band'' matrix of $N\times N$
blocks given as 
\begin{align*}
H_{[1,L]} & =  \begin{pmatrix}0 & T_{2}^{\ast}\\
T_{2} & 0 & T_{3}^{\ast}\\
 & T_{3} &    & \ddots\\
 &    & \ddots & T_{L}^{\ast}\\
 &   \ddots & T_{L} & 0
\end{pmatrix}\,.
\end{align*}
\begin{prop}
$H_{[1,2n+1]}$ is not invertible and $H_{[1,2n]}$
is invertible, for all $n\in\mathbb{N}$.
\begin{proof}
Using the left boundary condition we have $\psi_{0}=0$. Then the
Schr\"odinger equation at zero energy implies that the wave function
at all even sites is zero, by iteration: 
\begin{align*}
T_{2}^{\ast}\psi_{2}+T_{1}\psi_{0} & =  0
\end{align*}
and so on. Thus the even sites are all zero by the left boundary condition.

If we consider $H_{[1,2n]}$, then the right boundary
condition is $\psi_{2n+1}=0$, and then again by using the Schr\"odinger
equation the wave function at all odd sites must be zero. Hence $H_{[1,2n]}\psi=0$
implies $\psi=0$, that is, $H_{[1,2n]}$ is invertible.

If on the other hand we have $H_{[1,2n+1]}$, then the
right boundary condition is $\psi_{2n+2}=0$, which doesn't give any
new information: it is merely compatible with having the wave function
at all even sites zero. Hence, the wave function at odd sites is unconstrained.
Once $\psi_{1}\in\mathbb{C}^{N}$ is chosen, we use the equation to
obtain the wave function's value at odd sites along the entire chain:
\begin{align*}
T_{3}^{\ast}\psi_{3}+T_{2}\psi_{1} & =  0
\end{align*}
and so on. Hence $\ker(H_{[1,2n+1]})  \cong  \mathbb{C}^{N}$
and $\ker(H_{[1,2n]}) =  \{0\}$.
\end{proof}
\end{prop}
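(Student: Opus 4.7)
The key observation is that $H$ has no on-site term---this is the chirality hypothesis $\{H,\Pi\}=0$---so the zero-energy equation $T_{n+1}^\ast\psi_{n+1}+T_n\psi_{n-1}=0$ relates $\psi_{n+1}$ only to $\psi_{n-1}$. Since every $T_n$ is invertible, this is a pair of first-order recurrences, one living entirely on even-indexed sites and one on odd-indexed sites, decoupled from each other. The finite-volume operator $H_{[1,L]}$ is self-adjoint on a finite-dimensional space, so invertibility is equivalent to triviality of the kernel, and I will analyze $\ker H_{[1,L]}$ using these two independent recurrences together with the Dirichlet boundary conditions $\psi_0=\psi_{L+1}=0$.

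The plan is to split into cases by the parity of $L$. For $L=2n$, the two boundary conditions $\psi_0=0$ and $\psi_{2n+1}=0$ sit at indices of opposite parity, so each recurrence is anchored. Forward propagation from $\psi_0=0$ through the even-site recurrence $\psi_{2k+2}=-(T_{2k+2}^\ast)^{-1}T_{2k+1}\psi_{2k}$ yields $\psi_2=\psi_4=\cdots=\psi_{2n}=0$, and backward propagation from $\psi_{2n+1}=0$ through the odd-site recurrence $\psi_{2k-1}=-(T_{2k})^{-1}T_{2k+1}^\ast\psi_{2k+1}$ yields $\psi_{2n-1}=\cdots=\psi_1=0$. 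Hence $\ker H_{[1,2n]}=\{0\}$, establishing invertibility.

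For $L=2n+1$, \emph{both} boundary conditions are at even indices ($0$ and $2n+2$), and so they only feed into the even-site recurrence; one is used to propagate and the other turns out to be automatically consistent. The odd-site recurrence is completely unconstrained by the boundary, and picking $\psi_1\in\mathbb{C}^N$ arbitrarily and propagating via $\psi_{2k+1}=-(T_{2k+1}^\ast)^{-1}T_{2k}\psi_{2k-1}$ yields an $N$-dimensional family of zero modes. Thus $\ker H_{[1,2n+1]}\cong\mathbb{C}^N\neq 0$, so $H_{[1,2n+1]}$ is not invertible.

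There is no serious analytic obstacle; the content is entirely structural. The only thing to be slightly careful about is the bookkeeping of which sites are anchored by which boundary condition, and to verify that invertibility of each $T_n$ is genuinely being used so that each recurrence is a bona fide first-order difference equation. A quick sanity check that mirrors this bookkeeping is the chiral block decomposition $H_{[1,L]}=\bigl(\begin{smallmatrix}0 & M^\ast\\ M & 0\end{smallmatrix}\bigr)$, where $M$ sends odd-site components to even-site components: for $L=2n$ both subspaces have dimension $Nn$ and the recurrence argument shows $M$ has trivial kernel, whereas for $L=2n+1$ the odd-site subspace has dimension $N(n+1)$ while the even-site subspace has dimension $Nn$, and rank-nullity already forces $\dim\ker M\geq N$.
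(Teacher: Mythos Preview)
Your proof is correct and follows essentially the same approach as the paper: both arguments use that the zero-energy recursion decouples into independent even- and odd-site first-order recurrences, then examine which boundary condition anchors which recurrence depending on the parity of $L$. Your additional sanity check via the chiral block decomposition and rank--nullity is a pleasant bonus not present in the paper, but the core argument is identical.
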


Consequently, it would not make sense to consider the resolvent for
odd chain-lengths, and we shall restrict our attention to $H_{[1,2n]}$.

It turns out that it is easy to calculate the matrix elements of $R_{[1,2n]}(0)\equiv(H_{[1,2n]})^{-1}$.
We only need to describe the elements of $R_{[1,2n]}(0)$
on the diagonal and above it due to the self-adjointness of $H_{[1,2n]}$. 
\begin{prop}
The only non-zero matrix elements $R_{[1,2n]}(0)$ on or above the diagonal are given by 
\begin{align*}
G_{[1,2n]}(2k,2l+1;\,0) & =  (-T_{2k}^\circ T_{2k-1})\dots(-T_{2l+4}^\circ T_{2l+3})T_{2l+2}^\circ
\end{align*}
for all $(k,l)\in\mathbb{N}^{2}$ such that $(2k,2l+1)\in[1,2n]^{2}$
and such that $k>l$.
\begin{proof}
Let $l\in\mathbb{N}$ be given such that $2l+1\in[1,2n]$.
We start from the left boundary condition, which is that $G_{[1,2n]}(0,2l+1;\,z)\equiv0$.
We then evaluate the Schr\"odinger equation at zero energy in the left
most position to find 
\begin{align*}
T_{2}^{\ast}G_{[1,2n]}(2,2l+1;\,0) & =  \mathds{1}\delta_{1,2l+1}\,.
\end{align*}
If $l=0$ then we find $G_{[1,2n]}(2,1;\,0)=T_{2}^\circ$.
Otherwise $G_{[1,2n]}(2,2l+1;\,0)=0$. We
continue in this fashion to find that $G_{[1,2n]}(2k,2l+1;\,0)=0$
as long as $k\leq l$ and the equation above once $k=l+1$, and then
iterate for $k>l+1$.

We cannot proceed in the same way for $G_{[1,2n]}(1,2l+1;\,0)$
because the boundary condition on the left doesn't say anything about
it. Instead we must use the boundary condition on the right, which
says $G_{[1,2n]}(2n+1,2l+1;\,z)\equiv0$.
In the same way we use the Schr\"odinger equation to conclude about
$G_{[1,2n]}(2n-1,2l+1;\,0)=0$: 
\begin{align*}
T_{2n+1}^{\ast}\underbrace{G_{[1,2n]}(2n+1,2l+1;\,0)}_{\equiv0}+T_{2n}G_{[1,2n]}(2n-1,2l+1;\,0) & =  \mathds{1}\underbrace{\delta_{2n,2l+1}}_{=0}\,.
\end{align*}
Since the right hand side will \emph{always} be zero (due to the difference
in parity), we find that 
\begin{align*}
G_{[1,2n]}(2k+1,2l+1;\,0) & =  0
\end{align*}
for all $k$ such that $2k+1\in[1,2n]$. 

In a similar way we also find that 
\begin{align*}
G_{[1,2n]}(2k,2l;\,0) & =  0
\end{align*}
 for all $2k$ and $2l$ within the chain, using the boundary condition
on the left and then evolving to the right.
\end{proof}
\end{prop}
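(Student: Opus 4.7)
The plan is to solve the defining relation $H_{[1,2n]} G_{[1,2n]}(\cdot, m; 0) = \delta_{m,\cdot}\mathds{1}_N$ column by column, exploiting that at $z = 0$ the chiral structure of $H$ decouples same-parity sites: in the Schr\"odinger equation $T_{n'+1}^\ast \psi_{n'+1} + T_{n'} \psi_{n'-1} = 0$ the two terms couple $\psi$ only across sites of opposite parity to $n'$. This immediately reduces the matrix inverse to two independent first-order recursions driven by the Dirichlet conditions $G(0, m; 0) = G(2n+1, m; 0) = 0$.

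Fix an odd column $m = 2l + 1 \in [1, 2n]$. Evaluating the row-Schr\"odinger equation at an even row $n' = 2k$ gives
\begin{align*}
T_{2k+1}^\ast G(2k+1, 2l+1; 0) + T_{2k} G(2k-1, 2l+1; 0) = \delta_{2k, 2l+1} \mathds{1}_N = 0,
\end{align*}
a homogeneous recursion in $k$ for the odd-row entries; combined with the right Dirichlet condition $G(2n+1, 2l+1; 0) = 0$ it forces $G(2k+1, 2l+1; 0) = 0$ for every odd $2k+1 \in [1, 2n]$. Evaluating at an odd row $n' = 2k - 1$ then gives a first-order recursion for the even-row entries,
\begin{align*}
T_{2k}^\ast G(2k, 2l+1; 0) + T_{2k-1} G(2k-2, 2l+1; 0) = \delta_{2k-1, 2l+1} \mathds{1}_N,
\end{align*}
which together with the left Dirichlet condition $G(0, 2l+1; 0) = 0$ is solved by induction on $k$: one gets $G(2k, 2l+1; 0) = 0$ for $k \leq l$; the source at $k = l + 1$ produces $G(2l+2, 2l+1; 0) = T_{2l+2}^\circ$; and for $k > l + 1$ the homogeneous part yields $G(2k, 2l+1; 0) = -T_{2k}^\circ T_{2k-1} G(2k-2, 2l+1; 0)$, which telescopes to the claimed product.

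For an even column $m = 2l$ the parity bookkeeping reverses: the homogeneous recursion now applies to the even-row entries starting from $G(0, 2l; 0) = 0$, giving $G(2k, 2l; 0) = 0$ throughout, and consequently $G(2k+1, 2l; 0) = 0$ by the complementary recursion. Hence the only non-vanishing entries on or above the diagonal are those linking an even row to an odd column, with the explicit product formula stated.

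The only mildly delicate point is checking that the solution built by iterating from one boundary automatically satisfies the opposite Dirichlet condition. This is guaranteed by the invertibility of $H_{[1,2n]}$ proved just before, which forces the Green's function to be unique; alternatively one verifies directly from the product formula and the parity count that $G(2n+1, m; 0) = 0$ for every $m$. No genuinely hard analytic step is required--the entire argument is a careful one-dimensional Gaussian elimination organized by parity.
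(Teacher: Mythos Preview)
Your proof is correct and follows essentially the same route as the paper: both solve $H_{[1,2n]}G(\cdot,m;0)=\delta_m$ row by row using the Dirichlet conditions at $0$ and $2n+1$, exploiting that at $z=0$ the recursion at an even (resp.\ odd) site couples only odd-row (resp.\ even-row) entries, so the two parities decouple. One small imprecision: in your even-column paragraph the ``complementary recursion'' for $G(2k+1,2l;0)$ carries a source at $k=l$, so the vanishing you claim holds only for $k\ge l$ (i.e.\ on the half you actually need)---worth stating explicitly, but it does not affect the conclusion.
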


Now that we know that all diagonal matrix elements of $R_{[1,2n]}(0)$
are zero, we proceed to get an expression at non-zero energy, but
still finite volume:
\begin{prop}
\label{prop:Uniform bound on diagonal matrix element of finite volume Green's function}If
the Lyapunov exponents are all non-zero for $z=0$ (so we assume more than what \cref{cor:The Lyapunov spectrum is simple}
automatically gives), then we have 
\begin{align*}
\mathbb{E}[\norm{G_{[1,2n]}(x,x;\,z)}^{s}] & <  C
\end{align*}
for some constant uniformly in $z$ (as $z\to0$), uniformly in $n$,
and independent of $x$.
\begin{proof}
We use the resolvent identity to get 
\begin{align*}
G_{[1,2n]}(x,x;\,z) & =  G_{[1,2n]}(x,x;\,z)-G_{[1,2n]}(x,x;\,0)\\
 & =  \left\langle \delta_{x},[R_{[1,2n]}(z)-R_{[1,2n]}(0)]\delta_{x}\right\rangle \\
 &   (\text{Resolvent identity})\\
 & =  \left\langle \delta_{x},zR_{[1,2n]}(z)R_{[1,2n]}(0)\delta_{x}\right\rangle \\
 & =  \sum_{y=1}^{2n}zG_{[1,2n]}(x,y;\,z)G_{[1,2n]}(y,x;\,0)\,.
\end{align*}
So 
\begin{align*}
\mathbb{E}[\norm{G_{[1,2n]}(x,x;\,z)}^{s}] & \leq  \sum_{y=1}^{2n}|z|^{s}(\mathbb{E}[\norm{G_{[1,2n]}(x,y;\,z)}^{2s}])^{\frac{1}{2}}(\mathbb{E}[\norm{G_{[1,2n]}(y,x;\,0)}^{2s}])^{\frac{1}{2}}\,.
\end{align*}
We now use \cref{lem:any decay implies exponential decay} (namely that the finite volume complex energy Green's function is exponentially decaying) to conclude: 
\begin{align*}
\mathbb{E}[\norm{G_{[1,2n]}(x,x;\,z)}^{s}] & \leq  \sum_{y=1}^{2n}|z|^{s}(|z|^{-s}e^{-\mu_{s}|x-y|})(e^{-\mu_{s}'|x-y|})\\
 & <  C\,.
\end{align*}
We note that the bound on $\mathbb{E}[\norm{G_{[1,2n]}(y,x;\,0)}^{2s}]$
does not include a factor of $|z|^{-s}$ precisely because
we know that at zero energy $G_{[1,2n]}(y,y;\,0)=0$.
\end{proof}
\end{prop}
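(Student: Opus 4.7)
The plan is to leverage the striking fact, established in the immediately preceding proposition, that $G_{[1,2n]}(x,x;0)=0$ identically. This lets us write, via the resolvent identity,
\begin{align*}
G_{[1,2n]}(x,x;z) &= G_{[1,2n]}(x,x;z)-G_{[1,2n]}(x,x;0) \\
&= z\sum_{y=1}^{2n} G_{[1,2n]}(x,y;z)\,G_{[1,2n]}(y,x;0),
\end{align*}
producing a factor of $z$ that will absorb the $|z|^{-1}$ divergence normally associated with the diagonal resolvent kernel (cf.\ \cref{cor:a-priori bound for xx green's function}). Taking the $s$-fractional moment and applying H\"older at exponent $2$ in each summand yields
\[
\mathbb{E}[\norm{G_{[1,2n]}(x,x;z)}^s] \leq |z|^s \sum_{y=1}^{2n}\mathbb{E}[\norm{G_{[1,2n]}(x,y;z)}^{2s}]^{1/2}\,\mathbb{E}[\norm{G_{[1,2n]}(y,x;0)}^{2s}]^{1/2}.
\]

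For the first factor I would invoke the finite-volume analog of \cref{prop:decay of greens function above the real axis}, which rests on \cref{thm:finite volume real energy FMC} and \cref{lem:any decay implies exponential decay}. After possibly shrinking $s$, it delivers a bound of the form $\mathbb{E}[\norm{G_{[1,2n]}(x,y;z)}^{2s}]^{1/2} \leq |z|^{-s}e^{-\mu|x-y|}$, uniform in $n$ and in $z$ in any fixed punctured neighborhood of $0$. The $|z|^{-s}$ that appears here is exactly cancelled by the $|z|^s$ generated by the resolvent identity, which is the whole point of the maneuver.

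For the second factor I would use the explicit formula of the preceding proposition expressing $G_{[1,2n]}(2k,2l+1;0)$ as an ordered product of the form $(-T_{2k}^{\circ}T_{2k-1})\cdots(-T_{2l+4}^{\circ}T_{2l+3})T_{2l+2}^{\circ}$. Under the hypothesis $0\notin\{\gamma_j(0)\}_{j=1}^{2N}$, the one-chirality transfer product $(-T_{2j+1}^{\circ}T_{2j})_j$ has a strictly positive top Lyapunov exponent, so the product above decays exponentially in $|k-l|$. A fractional-moment bound $\mathbb{E}[\norm{G_{[1,2n]}(y,x;0)}^{2s}]^{1/2}\leq Ce^{-\mu'|x-y|}$ uniform in $n$ then follows by an argument parallel to \cref{prop:fractional moments of products of transfer matrices decay exponentially}, using \cref{assu:regularity of probability distributions} to absorb the unpaired factor $T_{2l+2}^{\circ}$.

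Combining, the right-hand side becomes $C\sum_{y=1}^{2n} e^{-(\mu+\mu')|x-y|}$, which sums to a finite constant independent of $n$, $x$, and of $z$ in a punctured neighborhood of $0$. The principal obstacle is the zero-energy estimate: because at $z=0$ the transfer matrices factor across chirality into two first-order problems, the Hermitian symplectic structure is lost and Furstenberg's theorem is not directly available. One must instead verify the fractional-moment exponential decay by hand from the non-vanishing of the Lyapunov exponents of $-T_{2j+1}^{\circ}T_{2j}$, and this is precisely where the extra hypothesis $0\notin\{\gamma_j(0)\}$ does real work.
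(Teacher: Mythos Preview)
Your approach is essentially the same as the paper's: start from $G_{[1,2n]}(x,x;0)=0$, apply the resolvent identity to produce the factor of $z$, split via H\"older, and bound the two factors separately---the first picking up the compensating $|z|^{-s}$ and the second not. The only difference is in how the zero-energy factor is justified: the paper simply re-invokes the machinery of \cref{thm:finite volume real energy FMC} and \cref{lem:any decay implies exponential decay} at $\lambda=0$ (this is legitimate because $\gamma_N(0)>0$ follows from the hypothesis via the symplectic symmetry of \cref{rem:Lyapunov spectrum symmetric about zero}, and the vanishing of $G_{[1,2n]}(y,y;0)$ removes the $|\lambda|^{-s}$ prefactor from the a-priori constant $C(\lambda)$), whereas you argue more directly from the explicit product formula for $G_{[1,2n]}(2k,2l+1;0)$.

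One phrasing issue is worth flagging. You write that the one-chirality product $(-T_{2j+1}^{\circ}T_{2j})_j$ ``has a strictly positive top Lyapunov exponent, so the product above decays exponentially.'' Read literally this is backwards: a positive top exponent makes the \emph{norm} of that product grow, and indeed the hypothesis $0\notin\{\gamma_j(0)\}$ does not by itself force the $N$ single-chirality exponents to have a fixed sign. What actually drives the fractional-moment decay of the Green's function is $\gamma_N(0)>0$ for the full $2N\times 2N$ transfer matrix, fed through \cref{lem:A Green's function formula} and \cref{prop:fractional moments of products of transfer matrices decay exponentially}, exactly as in the proof of \cref{thm:finite volume real energy FMC}. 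Your final paragraph correctly identifies that the symplectic/Furstenberg route is unavailable at $z=0$ and that the hypothesis must be invoked directly, so the right instinct is there; just be careful that the decay is obtained via the ratio of exterior powers (equivalently $\operatorname{tr}(|B'|^{-2})$), not via the norm of a single-chirality product.
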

\subsection{Infinite volume localization}
Our next goal is to conclude the same bound for the infinite system.
By ergodicity instead of working with $[1,2n]$ we could
just as well work with $[-n+1,n]$, which also holds
an even number of sites.
\begin{prop}
We have 
\begin{align*}
\slim_{n\to\infty}R_{[-n+1,n]}(z) & =  R(z)
\end{align*}
for all fixed $z\in\mathbb{C}\backslash\mathbb{R}$.
\begin{proof}
The operator $H_{[-n+1,n]}$ is defined as 
\begin{align*}
H_{[-n+1,n]} & =  \sum_{j=-n+2}^{n}\mathcal{T}_{j}
\end{align*}
with $(\mathcal{T}_{j}\psi)(x):=\delta_{j,x}T_j\psi_{j-1}+\delta_{j-1,x}T_j^\ast\psi_{j}$ so that the resolvent identity gives 
\begin{align*}
R_{[-n+1,n]} & =  R+R(H-H_{[-n+1,n]})R_{[-n+1,n]}\\
 & =  R+\sum_{j\in\mathbb{Z}\setminus\{-n+2,\dots,n\}}R\mathcal{T}_{j}R_{[-n+1,n]}\,.
\end{align*}
Now
\begin{align*}
\|\sum_{j\in\mathbb{Z}\backslash\{-n+2,\dots,n\}}\mathcal{T}_{j}\psi\|^2&=\sum_{l\in\mathbb{Z}}\|\sum_{j\in\mathbb{Z}\backslash\{-n+2,\dots,n\}}\left\langle \delta_{l},\mathcal{T}_{j}\psi\right\rangle \|^2\\
&=\sum_{j\in\mathbb{Z}}\|\sum_{l\in\mathbb{Z}\backslash\{-n+2,\dots,n\}}(\delta_{j,l}T_{l}\psi_{l-1}+\delta_{j,l-1}T_{l}^{\ast}\psi_{l})\|^{2}\\
&\leq\sum_{j\in\mathbb{Z}\backslash\{-n+2,\dots,n\}}\|T_{j}\psi_{j-1}\|^{2}+\sum_{j\in\mathbb{Z}\backslash\{-n+1,\dots,n-1\}}\|T_{j+1}^{\ast}\psi_{j+1}\|^{2}\,.
\end{align*}
Next we observe that
\begin{align*}
(R_{[-n+1,n]}(z)\psi)_{j} & =  -z^{-1}\psi_{j}\quad\forall j\in\mathbb{Z}\backslash\{-n+2,\dots,n\}
\end{align*}
so that 
\begin{align*}
\|R(z)\sum_{j=-n+2}^{n}\mathcal{T}_{j}R_{[-n+1,n]}(z)\psi\|^{2} & \leq  \norm{R(z)}^{2}|z|^{-2}\sum_{j\in\mathbb{Z}\backslash\{-n+2,\dots,n\}}\norm{T_{j}\psi_{j-1}}^{2}+\\
 &   +\norm{R(z)}^{2}|z|^{-2}\sum_{j\in\mathbb{Z}\backslash\{-n+1,\dots,n-1\}}\norm{T_{j+1}^{\ast}\psi_{j+1}}^{2}\,.
\end{align*}
But $\psi\in l^{2}$ and $\norm{R(z)}\leq|\Im\{z\} |^{-1}$,
so that the right hand side converges to zero as $n\to\infty$, all
at fixed $z\neq0$.
\end{proof}
\end{prop}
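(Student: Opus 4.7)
The plan is to apply the second resolvent identity
\[
R(z) - R_{[-n+1,n]}(z) = R(z)\,\bigl(H_{[-n+1,n]}-H\bigr)\,R_{[-n+1,n]}(z)\,,
\]
together with the decomposition $H - H_{[-n+1,n]} = \sum_{j\notin\{-n+2,\ldots,n\}}\mathcal T_j$ of the hoppings discarded by the Dirichlet cut. The decisive structural observation is that $H_{[-n+1,n]}$ acts as zero off the block $[-n+1,n]$, so the equation $(H_{[-n+1,n]}-z)\phi = \psi$ forces $(R_{[-n+1,n]}(z)\psi)_j = -z^{-1}\psi_j$ at every exterior site, which makes the correction term explicit.

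Setting $\phi := R_{[-n+1,n]}(z)\psi$ and bounding $\|(H-H_{[-n+1,n]})\phi\|^2$ in Holmgren fashion gives an estimate controlled by sums of $\|T_l\phi_{l-1}\|^2$ and $\|T_{l+1}^\ast\phi_{l+1}\|^2$ over the removed indices. Substituting the exterior identity $\phi = -z^{-1}\psi$ and using $\|R(z)\|\le|\Im z|^{-1}$, I would obtain a bound of the form
\[
\bigl\|(R(z)-R_{[-n+1,n]}(z))\psi\bigr\|^2 \;\le\; \frac{1}{|z\,\Im z|^2}\sum_{|j|>n}\bigl(\|T_j\psi_{j-1}\|^2 + \|T_{j+1}^\ast\psi_{j+1}\|^2\bigr)\,.
\]
The right-hand side is a tail of an almost surely convergent series: \cref{assu:regularity of probability distributions} supplies $\EE[\|T_j\|^2]\le C$ uniformly in $j$, so by Tonelli $\EE\bigl[\sum_j \|T_j\|^2 |\psi_{j-1}|^2\bigr] = C\|\psi\|^2 < \infty$, whence the full series is $\mathbb P$-a.s.\ finite and its tail vanishes as $n\to\infty$.

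This yields strong convergence at a fixed $\psi$ on a $\psi$-dependent full-measure event. Choosing a countable dense subset of $\ell^2(\ZZ)\otimes\CC^N$ produces a single full-measure event on which convergence holds simultaneously for every vector in the subset, and the uniform bound $\|R_{[-n+1,n]}(z)\|,\|R(z)\|\le|\Im z|^{-1}$ extends this by a standard $\varepsilon/3$ argument to all of $\ell^2$. The main subtlety I anticipate is the treatment of the boundary hoppings $\mathcal T_{\pm(n+1)}$, where one endpoint of the hopping lies in the interior of the block and is not directly expressible in terms of $\psi$; this is handled by choosing the Holmgren decomposition with the shifted index ranges indicated above, which for each boundary hopping retains only the exterior endpoint and thus controls the correction purely by the $\ell^2$ tail of $\psi$.
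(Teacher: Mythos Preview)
Your overall strategy matches the paper's: the second resolvent identity, the decomposition $H - H_{[-n+1,n]} = \sum_{j\notin\{-n+2,\ldots,n\}}\mathcal T_j$, the exterior identity $\phi_j=-z^{-1}\psi_j$, and a tail bound. You are in fact more careful than the paper on one point: you correctly recognise that finiteness of $\sum_j\|T_j\|^2|\psi_{j-1}|^2$ is only an almost-sure statement (via the second-moment assumption and Tonelli), and you upgrade the $\psi$-dependent full-measure event to a single one by density and the uniform resolvent bound. The paper's proof simply writes ``$\psi\in\ell^2$ so the RHS converges to zero'' without commenting on the unbounded hoppings.

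There is, however, a genuine gap in your handling of the boundary hoppings, and your proposed resolution does not work. With $\phi=R_{[-n+1,n]}(z)\psi$, the vector $(H-H_{[-n+1,n]})\phi$ evaluated at the exterior site $n+1$ contains the term $T_{n+1}\phi_n$, and symmetrically at site $-n$ the term $T_{-n+1}^\ast\phi_{-n+1}$. Here $\phi_n$ and $\phi_{-n+1}$ are the values of $\phi$ at the two \emph{interior} edge sites of the block, and the exterior identity $\phi_k=-z^{-1}\psi_k$ does not apply to them. No ``Holmgren decomposition with shifted index ranges'' makes these contributions disappear: each boundary hopping $\mathcal T_{n+1}$ (resp.\ $\mathcal T_{-n+1}$) couples one interior site to one exterior site, and both endpoints contribute to $\|(H-H_{[-n+1,n]})\phi\|^2$. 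Since one only has $|\phi_n|\le|\Im z|^{-1}\|\psi\|$ with no decay in $n$, and $\|T_{n+1}\|$ need not decay either, these two terms are not shown to vanish. (The paper's own argument glosses over precisely the same point.)

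The cleanest repair bypasses the resolvent-identity computation altogether: compactly supported vectors form a core for the self-adjoint operator $H$, and for any such $\psi$ one has $H_{[-n+1,n]}\psi=H\psi$ as soon as $n$ exceeds the support radius by one; strong resolvent convergence then follows from the standard core criterion (e.g.\ Reed--Simon, Thm~VIII.25(a)). If you prefer to stay with the resolvent identity, reversing its order to $R-R_n=R_n(H_n-H)R$ so that $H-H_n$ acts on the $n$-independent vector $\chi=R\psi\in D(H)$ helps, since then the bulk of $\|(H-H_n)\chi\|^2$ is a tail of $\|H\chi\|^2<\infty$; but the two boundary terms still require a separate argument.
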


\begin{cor}
We have 
\begin{align*}
\lim_{n\to\infty}\norm{G_{[-n+1,n]}(x,x;\,z)} & =  \norm{G(x,x;\,z)}
\end{align*}
for all $z\in\mathbb{C}\backslash\mathbb{R}$, so that using Fatou's
lemma, 
\begin{align*}
\mathbb{E}[\norm{G(x,x;\,z)}^{s}] & \leq  \lim_{n\to\infty}\mathbb{E}[\norm{G_{[-n+1,n]}(x,x;\,z)}^{s}]\,.
\end{align*}
But since the bound \cref{prop:Uniform bound on diagonal matrix element of finite volume Green's function}
is uniform in $n$, we find that $\mathbb{E}[\norm{G(x,x;\,z)}^{s}]$
is bounded uniformly in $|\Im\{z\} |$ under
the same assumptions on the Lyapunov spectrum as in \cref{prop:Uniform bound on diagonal matrix element of finite volume Green's function}.

As a result, we may now go back to the previous section and apply all the proofs there, extending them so that it holds uniformly including in the limit $z\to0$ as long as the Lyapunov spectrum doesn't include zero, at \emph{all}
real energies, including zero energy. This concludes the proof of the second part of \cref{thm:localization}. \end{cor}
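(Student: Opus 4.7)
My plan is to prove this corollary in three movements, each of which should be routine once the right object is identified.

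First, I would establish the pointwise convergence $\norm{G_{[-n+1,n]}(x,x;z)}\to\norm{G(x,x;z)}$ for each $z\in\CC\setminus\RR$ directly from the strong resolvent convergence $\slim_n R_{[-n+1,n]}(z)=R(z)$ shown in the preceding proposition. Strong convergence applied to each of the $N$ basis vectors $\delta_x\otimes e_j$, paired with $\delta_x\otimes e_i$, yields entrywise convergence of the $N\times N$ blocks $G_{[-n+1,n]}(x,x;z)\to G(x,x;z)$; continuity of the trace norm on $\Mat_N(\CC)$ then promotes this to norm convergence, $\PP$-almost surely.

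Second, since $\norm{G_{[-n+1,n]}(x,x;z)}^s\to\norm{G(x,x;z)}^s$ almost surely and the integrands are nonnegative, Fatou's lemma gives
\begin{align*}
\EE[\norm{G(x,x;z)}^s]\leq\liminf_{n\to\infty}\EE[\norm{G_{[-n+1,n]}(x,x;z)}^s]\,.
\end{align*}
By \cref{prop:Uniform bound on diagonal matrix element of finite volume Green's function}, the right-hand side is bounded by a constant $C$ that depends neither on $n$ nor on $\abs{\Im z}$ (and remains finite in the limit $z\to0$), under the assumption \cref{eq:localized at zero} that the zero-energy Lyapunov spectrum avoids zero. I therefore conclude the a-priori bound $\EE[\norm{G(x,x;z)}^s]\leq C$ uniformly as $z$ approaches any real value, including zero.

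Third, and this is the substantive payoff, the new bound replaces \cref{cor:a-priori bound for xx green's function}, whose singular prefactor $\abs{z}^{-1}$ was the \emph{sole} obstruction to running the arguments of \cref{sec:loc at non-zero energies} at $\lambda=0$. With this in hand, I would rerun those arguments verbatim for a compact $K\subset\RR$ with $0\in K$: step (i) is \cref{thm:finite volume real energy FMC}, which goes through because $\gamma_N(0)>0$ by hypothesis \cref{eq:localized at zero} and the a-priori inputs no longer blow up; step (ii) is the subharmonic upgrade \cref{prop:decay of greens function above the real axis}, which lifts exponential decay on the real axis to polynomial-then-exponential decay at all $\Im z$; step (iii) is the decoupling \cref{lem:any decay implies exponential decay}, which promotes any decay back to exponential decay of the infinite-volume Green's function. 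Concatenating these yields the FM condition \cref{eq:fractional moment condition} at $\lambda=0$, completing part (II) of \cref{thm:localization}.

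The main obstacle I anticipate is bookkeeping: one must verify that throughout the chain of H\"older inequalities and resolvent identities of \cref{sec:loc at non-zero energies}, the \emph{only} place where the previously-restrictive $\lambda\neq0$ assumption was used is precisely through \cref{cor:a-priori bound for xx green's function}, and that no hidden factor of $\abs{z}^{-1}$ lurks elsewhere (for instance in the Combes–Thomas step \cref{prop:the Combes-Thomas estimate} or in the a-priori bound on one-step entries \cref{prop:A-priori bound}). A careful audit should show that both of these are already uniform in $z$, so no further adjustment is needed and the extension to $\lambda=0$ is clean.
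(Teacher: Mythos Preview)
Your proposal is correct and follows essentially the same approach as the paper, which in fact offers no separate proof for this corollary: the reasoning is embedded in the statement itself (strong resolvent convergence $\Rightarrow$ convergence of diagonal blocks, then Fatou, then rerun \cref{sec:loc at non-zero energies}). Your write-up is actually more careful than the paper in two places: you correctly use $\liminf$ rather than $\lim$ in Fatou's lemma, and you explicitly audit that the only $z$-singular ingredient in the chain of arguments is \cref{cor:a-priori bound for xx green's function}, while \cref{prop:A-priori bound} and \cref{prop:the Combes-Thomas estimate} are already uniform in $z$.
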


\section{Chiral random band matrices and the \texorpdfstring{$\sqrt{L}$}{sqrt-L} conjecture}\label{sec:chiral RBM}
In this section we remark briefly about random band matrices from the point of view of our chiral model. To avoid confusion, now the size of our random matrices is $W$ instead of $N$ as in the rest of the paper. Random band matrices are random Hermitian matrices $A$ of size\footnote{To be compatible with the rest of the paper we denote a size of a matrix here by $L$ instead of the usual $N$ in the literature, the usual name of the conjecture is "the $\sqrt{N}$ conjecture".} $L\times L$, for some $L\in\NN$, such that there is a parameter $W\in\NN_{<L/2}$ with $A_{ij} = 0$ for all $|i-j|>W$. I.e., they are of finite band width $W$. The random distribution that is usually considered is that where each entry within the non-zero band is independent and identically distributed (except the Hermitian constraint), say, as a Gaussian centered at zero (though this detail is largely unimportant for the relevant questions to follow). These models are interesting due to the fact that they exhibit a phase transition, from localization for $W\ll \sqrt{L}$ to delocalization $W\gg \sqrt{L}$. I.e., a transition is conjectured for $W \sim \sqrt{L}$ \cite{PhysRevLett.64.1851,PhysRevE.48.R1613}. Indeed, the case where $W$ is a constant (w.r.t. $L\to\infty$) is simply the Anderson model on a strip (of width $W$), which is completely localized, and the case $W=L/2$ is a random matrix from the Gaussian unitary ensemble, known to be delocalized. Both phases have been established mathematically (see \cite{Schenker2009,10.1093/imrn/rnx145} for the localized regime for $W\sim L^{1/7}$ and e.g. \cite{Bourgade2017,https://doi.org/10.1002/cpa.21895} and references therein for the delocalized regime for $W\sim L^{3/4}$) but the precise transition at $W\sim\sqrt{L}$ has yet to have been established (but see e.g. \cite{Shcherbina2014,Shcherbina2017} for some recent promising progress).

Using finite rank perturbation theory and a-priori bounds, the question of localization for random band matrices is equivalent to the question of the localization length for the Anderson model on the strip, or the following generalization of it: \begin{align} (H\psi)_n = T_{n+1}^\ast \psi_{n+1}+T_n\psi_{n-1}+V_n\psi_n \qquad (\psi\in\ell^2(\mathbb{Z})\otimes\CC^W;n\in\ZZ)\label{eq:Wegner N orbital model}\end{align} where $\Set{V_n}_n$ is a random i.i.d. sequence chosen from $\mathrm{GUE}(W)$ and $\Set{T_n}_n$ is a random i.i.d. sequence chosen from $\mathrm{Ginibre}(W)$. Strictly speaking this model is called the Wegner-$W$ orbital model \cite{10.1093/imrn/rnx145}, and to get the random band model one should take $T_n$ to be distributed as random upper triangular. The distinction, however, should not matter for our purposes. %Hence \cref{eq:Wegner N orbital model} is a modification of \cref{eq:Hamiltonian} where we allow for onsite potential (and where $\alpha_0=\alpha_1$). 
Then the $\sqrt{L}$ conjecture from the localization side is equivalent to the model \cref{eq:Wegner N orbital model} exhibiting localization length of order $W$ (measured in distance between slices of width $W$); one way to formulate that is to establish the FM condition for \cref{eq:Wegner N orbital model} with decay rate $W$: 
\begin{align} \sup_{\eta\neq0}\mathbb{E}\left[\norm{G(x,y;E+\ii\eta)}^s\right]\leq C\exp(-\mu|x-y|/W)\qquad(x,y\in\mathbb{Z}) \end{align} for some $s\in(0,1)$ and $0<C,\mu<\infty$ where $\mu\sim \mathcal{O}(1)$ in $W\to\infty$. This should hold for any $E$ in the almost-sure spectrum, i.e., there is no expectation for a phase transition in $E$. Thus, the preceding sections which connect localization length with the smallest Lyapunv exponent lead us to re-formulate the $\sqrt{L}$ conjecture (at energy $\lambda$) from the localization side as: \begin{align} \inf_{j\in\Set{1,\dots,2W}}|\gamma_j(\lambda)| \gtrsim \frac{1}{W}\,. \label{eq:sqrt-N conjecture in terms of Lyapunv spectrum}\end{align} 
%\begin{rem} We note in passing that the conjecture of complete 2D localization \cite{Abrahams1979} in this context would be formulated as \begin{align} \inf_{W\in\NN}\inf_{j\in\Set{1,\dots,2W}}|\gamma_j(\lambda)| \gtrsim 1 \label{eq:complete 2D localization in terms of Lyapunov spectrum}\end{align} but of course for very different distributions of the hopping and onsite potential: $T_n = -\Id$ and $$V_n =\begin{bmatrix}
%	v_{n,1} & -1 & 0 & \dots\\
%	-1 & v_{n,2} & -1 & \dots \\
%	0 & -1 & v_{n,3} & -1 & 0 & \dots \\
%	\dots & \dots & \dots \\
%	\dots &\dots&\dots&\dots& -1 & v_{n,W} 
%\end{bmatrix} $$ with $\Set{v_{n,j}}_{j=1}^{W}$ an i.i.d. sequence of real numbers. This yields the Anderson model on the strip, which in the limit $W\to\infty$ is the Anderson model on $\ZZ^2$.\end{rem}

%Returning now to our model, i
It turns out--and this is the point of this section--that \emph{for \cref{eq:Hamiltonian} (i.e. $V_n=0$ in \cref{eq:Wegner N orbital model}) at zero energy}, the Lyapunov spectrum can be calculated exactly for special choices of $\alpha_0,\alpha_1$.

\begin{thm}
	Consider the model \cref{eq:Hamiltonian} with distributions $\alpha_0,\alpha_1$ both Ginibre with variances $\sigma_0,\sigma_1$ respectively. If $\sigma_0=\sigma_1$ then all zero energy Lyapunov exponents are zero. If, on the other hand, one chooses e.g. $\sigma_0 = \exp(-1/W)$ and $\sigma_1 = \exp(-2/W)$, the $\sqrt{W}$ conjecture holds at zero energy, i.e., \begin{align} \inf_{j\in\Set{1,\dots,2W}}|\gamma_j(0)| \sim \frac{1}{W}\,. \label{eq:sqrt N conjecture expressed via LE}\end{align}
\end{thm}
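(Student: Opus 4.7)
The plan has two reductions and one explicit computation: decouple by parity at zero energy, scale out the Ginibre variances to reduce to the standard case $\sigma_0=\sigma_1=1$, and identify the remaining exponents via the matrix $F$-distribution.

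Following the remark around \cref{eq:zero energy transfer matrices}, the two-step transfer matrix at $z=0$ is block-diagonal,
\begin{align*}
A_{2x+1}(0)A_{2x}(0) = \begin{pmatrix}-T_{2x+1}T_{2x}^{\circ} & 0\\ 0 & -T_{2x+1}^{\circ}T_{2x}\end{pmatrix},
\end{align*}
and the two $W\times W$ blocks satisfy $(-T_{2x+1}T_{2x}^{\circ})=(-T_{2x+1}^{\circ}T_{2x})^{-\ast}$; their Lyapunov spectra are therefore negatives of one another. The $2W$ zero-energy symplectic exponents take the form $\{\pm\gamma_{k}\}_{k=1}^{W}$, where $\gamma_1\geq\dots\geq\gamma_W$ are the Lyapunov exponents of the i.i.d.\ product of the single block $B_{x}:=-T_{2x+1}^{\circ}T_{2x}$ (up to the conventional factor $\tfrac12$ relating two-step to one-step, which is harmless at the orders of magnitude at stake, cf.\ \cref{eq:Definition of LE}). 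Writing $T_i=\sigma_i\widehat T_i$ with $\widehat T_i$ drawn from the standard complex Ginibre ensemble turns $B_x$ into $(\sigma_0/\sigma_1)\widehat B_x$, and since scalar dilation shifts every Lyapunov exponent by the same logarithm,
\begin{align*}
\gamma_k = \log(\sigma_0/\sigma_1) + \widehat\gamma_k,\qquad k=1,\ldots,W,
\end{align*}
where $\widehat\gamma_k$ denote the standard Lyapunov exponents.

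The standard exponents $\widehat\gamma_k$ are analyzed by the observation that $\widehat B=-\widehat T_1^\circ\widehat T_0$ is bi-unitarily invariant and that $|\widehat B|^2=\widehat T_0^\ast(\widehat T_1\widehat T_1^\ast)^{-1}\widehat T_0$ has the same spectrum as $W_0W_1^{-1}$ for independent standard complex Wisharts $W_i=\widehat T_i\widehat T_i^\ast$ of dimension $W$---the complex matrix $F$-ensemble with equal parameters. Exchange symmetry $\widehat T_0\leftrightarrow\widehat T_1$ leaves the joint law invariant under $\mu\mapsto1/\mu$, forcing $\widehat\gamma_k=-\widehat\gamma_{W+1-k}$ and in particular $\sum_k\widehat\gamma_k=0$. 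A bulk-asymptotic analysis of the limiting density (via either Newman's theorem for Lyapunov spectra of isotropic products, or the Meijer--$G$ representation of singular values of Ginibre-type products) then shows that the bulk spacing in the middle of the spectrum is $\Theta(1/W)$, so that $|\widehat\gamma_{\lceil W/2\rceil}|=\Theta(1/W)$ (identically zero when $W$ is odd), while $|\widehat\gamma_k|=\Theta(1)$ for $k$ away from $W/2$.

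Both parts of the theorem now follow. For $\sigma_0=\sigma_1$ the shift vanishes, so $\{\gamma_k\}=\{\widehat\gamma_k\}$ already contains a value of magnitude $\Theta(1/W)$---identically zero when $W$ is odd, and vanishing in the limit $W\to\infty$ in any case; zero thus belongs to the Lyapunov spectrum. For the tuned variances $\sigma_0=e^{-1/W}$, $\sigma_1=e^{-2/W}$ one has $\log(\sigma_0/\sigma_1)=+1/W$, a shift of exactly the bulk scale of the $\widehat\gamma_k$; the peripheral $\gamma_k$ remain of magnitude $\Theta(1)$, while the bulk $\gamma_k=\widehat\gamma_k+1/W$ stay of magnitude $\Theta(1/W)$ provided the shift $+1/W$ does not accidentally annihilate any $-\widehat\gamma_k$. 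Verifying this non-cancellation is the main technical obstacle of the plan, and it is precisely what forces the specific numerical choice of variances in the statement: the bulk positions of $\widehat\gamma_k$ must be pinned down with enough precision to guarantee that the $1/W$ shift leaves every $\gamma_k$ of magnitude at least $\Theta(1/W)$. Once this is done, $\inf_j|\gamma_j(0)|=\Theta(1/W)$ and \cref{eq:sqrt N conjecture expressed via LE} holds at zero energy.
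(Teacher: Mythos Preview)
Your setup (parity decoupling at zero energy and the variance scaling $\gamma_k=\log(\sigma_0/\sigma_1)+\widehat\gamma_k$) is correct and matches the paper. The divergence is in what you do with the standard exponents $\widehat\gamma_k$: you assert they form a non-trivial spectrum with bulk spacing $\Theta(1/W)$ and peripheral values $\Theta(1)$, whereas the paper argues they are \emph{all exactly zero}. The paper's route is a second application of Newman's isotropy argument: starting from $\xi_1+\dots+\xi_k=\mathbb{E}[\log\|AB^\circ e_1\wedge\dots\wedge e_k\|]$, it replaces the basis $\{e_j\}$ by $\{B^\ast e_j\}$ to obtain $\xi_1+\dots+\xi_k=\mathbb{E}[\log\|A e_1\wedge\dots\|]-\mathbb{E}[\log\|B^\ast e_1\wedge\dots\|]$, hence $\xi_j=\xi_j^A-\xi_j^B$. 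Since the Ginibre Lyapunov spectrum depends on the variance only through an overall additive $\log\sigma$, this collapses to $\xi_j=\log(\sigma_0/\sigma_1)$ for \emph{every} $j$. Both parts of the theorem are then immediate --- all exponents vanish when $\sigma_0=\sigma_1$, and all equal exactly $1/W$ for the tuned choice --- with no non-cancellation issue to address.

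Your proposal therefore has a genuine gap relative to the statement: if, as you claim, $|\widehat\gamma_1|$ is of order one, then for $\sigma_0=\sigma_1$ the zero-energy spectrum $\{\pm\widehat\gamma_k\}$ is \emph{not} identically zero, and you have contradicted rather than established the first assertion; for the second assertion you invent an obstacle (the shift $1/W$ possibly cancelling some $-\widehat\gamma_k$) and then do not resolve it. Either your matrix-$F$ bulk analysis of $\widehat\gamma_k$ is wrong, or the paper's basis-change step is --- note that the vectors $B^\ast e_j$ are neither orthonormal nor independent of $M=AB^\circ$, so the legitimacy of that substitution in Newman's formula is the point to scrutinize. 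That tension is worth resolving carefully, but as written your argument cannot prove the theorem as stated.
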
 
\begin{proof}
	At zero energy, the Lyapunov spectrum breaks apart as in \cref{eq:zero energy transfer matrices}, so it is enough to study the smallest Lyapunov exponent of the sequence of $W\times W$ matrices $\Set{A_n B_n^{\circ}}_n$ where $\Set{A_n}_n$ are distributed according to $\alpha_0$ and $\Set{B_n}_n$ according to $\alpha_1$. We recall that the Ginibre distribution is given by $$ \dif{\mathrm{Ginibre}_\sigma(A)} = (\pi\sigma^2)^{-W^2}\exp(-\frac{1}{\sigma^2}\tr(|A|^2))\dif{A} $$ where $\dif{A}$ denotes the Lebesgue measure on $\Mat_{W}(\CC)\cong\CC^{W^2}$. Hence, the distribution $\mathbb{P}_{\mathrm{CBZE}}$ of each element of the sequence $\Set{A_n B_n^{\circ}}_n$ is given by $$ \mathbb{E}_{\mathrm{CBZE}}\left[f\right] = \int_{A,B\in\Mat_{W}(\CC)}(\pi^2\sigma_0^2\sigma_1^2)^{-W^2}\exp\left(-\frac{1}{\sigma_0^2}\tr(|A|^2)-\frac{1}{\sigma_1^2}\tr(|B|^2)\right)f(AB^\circ)\dif{A}\dif{B} $$ for any measurable $f:\Mat_{W}(\CC)\to\CC$. Here the acronym $\mathrm{CBZE}$ stands for the distribution of "Chiral Band matrices at Zero Energy". %We note already that had we chosen $\alpha_0,\alpha_1$ such that $\mathbb{P}_{\mathrm{CBZE}}$ is Ginibre with variance $1/W$ (e.g., $\alpha_1$ the delta mass at $\Id_W$ and $\alpha_0$ as Ginibre with variance $1/W$) then we obtain \cref{eq:sqrt N conjecture expressed via LE} directly via the analysis of \cite{Newman_1986_cmp/1104114627}. 
	
	In particular, we see that: (1) $\mathbb{P}_{\mathrm{CBZE}}\left[\mathrm{GL}_W(\CC)^c\right]=0$, and (2) for any fixed unitary $W\times W$ matrix $U$, $|AB^\circ|^2$ has the same distribution as $|AB^\circ U|^2$. Indeed, this is clear from the cyclicity of the trace, $U$ being unitary so its Jacobian is the identity, and $$U^\ast (AB^\circ)^\ast AB^\circ U = (U^\ast B U)^{-1}(U^\ast A^\ast A U)(U^\ast B U)^\circ \longrightarrow(AB^\circ)^\ast A B^\circ \,.$$ These two conditions thus allow us to apply the same arguments as in \cite{Newman_1986_cmp/1104114627} to conclude that the Lyapunov exponents $\Set{\xi_j}_{j=1}^W$ of the sequence $\Set{A_n B_n^{\circ}}_n$ are given by $$ \xi_1+\dots+\xi_k = \mathbb{E}_{\mathrm{CBZE}}\left[\log\left(\norm{AB^\circ e_1\wedge\dots\wedge e_k}\right)\right]\,. $$ But following the arguments in \cite{Newman_1986_cmp/1104114627}, even more is true. Since the choice of the basis $\Set{e_j}_j$ does not matter for the calculation of the exponents, we could just as well use $\Set{B^\ast e_j}_j$ instead so we also have $$ \xi_1+\dots+\xi_k = \mathbb{E}_{\mathrm{CBZE}}\left[\log\left(\frac{\norm{AB^\circ B^\ast e_1\wedge\dots\wedge e_k}}{\norm{ B^\ast e_1\wedge\dots\wedge e_k}}\right)\right] $$ so we see that the task reduces to calculation of the exponents $\Set{\xi^A_j}_{j=1}^W$ of $A$ and $\Set{\xi^B_j}_{j=1}^W$ of $B$ separately, and that $\xi_j = \xi_j^A - \xi_j^B$ (the distribution of $B$ is invariant under adjoints). Since both $A$ and $B$ are Ginibre, the very same calculation in \cite{Newman_1986_cmp/1104114627} applies and we obtain $$\xi_j = \log(\sigma_0/\sigma_1)\qquad(j=1,\dots,W)$$ so all exponents are degenerate. The two statements in the theorem readily follow.
\end{proof}

\bigskip
\bigskip
\noindent\textbf{Acknowledgements:} 
I am gratefully indebted to Gian Michele Graf for many stimulating discussions. I also wish to thank Michael Aizenman, Peter M\"uller, Hermann Schulz-Baldes and G\"unter Stolz. I acknowledge support by the Swiss National Science Foundation (grant number P2EZP2\_184228).
\bigskip

\section{Appendix}
\subsection{Deterministic bounds on the Fermi projection}

Our goal in this subsection is to prove \cref{cor:almost sure consequences of localization for the Fermi projection}. Using \cite[Theorem 2]{Aizenman_Graf_1998}, one obtains that if $P := \chi_{(-\infty,E)}(H)$ where $E\in\mathbb{R}$ is an energy for which \cref{eq:fractional moment condition} holds (either automatically for all $E\neq0$ by the first part of \cref{thm:localization} or with further assumptions by its second part) then \begin{align}
\mathbb{E}[\|P(x,y)\|]\leq C e^{-\mu|x-y|}
\end{align}
for some (deterministic) constants $C>0$ and $\mu>0$. The following claim proves \cref{cor:almost sure consequences of localization for the Fermi projection} then.

\begin{claim}
Let $A:\Omega\to\mathcal{B}(\mathcal{H})$ be an ergodic random operator (with the usual assumptions on the disorder probability space $(\Omega,\mathcal{A},\mathbb{P},T)$ with $T$ being the ergodic action of $\mathbb{Z}$). If $$\mathbb{E}[\|A(x,y)\|]\leq C e^{-\mu|x-y|}$$ for some (deterministic) constants  $C,\mu>0$, then almost surely for some (random) constants $C'>0$ and deterministic constants $\mu',\nu>0$, $$ \sum_{x,y\in\mathbb{Z}} \|A(x,y)\|(1+|x|)^{-\nu}e^{\mu'|x-y|} \leq C' <\infty\,. $$ 
%\begin{proof}
%For some $\mu',\nu>0$ we have by Fatou's lemma and the hypothesis, that \begin{align*}
%\mathbb{E} \sum_{x,y\in\mathbb{Z}} \|A(x,y)\|(1+|x|)^{-\nu}e^{\mu'|x-y|} & \leq \liminf_{|\Lambda|\to\infty}\sum_{x,y\in\mathbb{Z}_\Lambda} \mathbb{E} \|A(x,y)\|(1+|x|)^{-\nu}e^{\mu'|x-y|} \\
%& \leq \liminf_{|\Lambda|\to\infty}\sum_{x,y\in\mathbb{Z}_\Lambda} C e^{-\mu|x-y|}(1+|x|)^{-\nu}e^{\mu'|x-y|}\,.
%\end{align*} 	
%If we now pick $\mu'\in(0,\mu)$ and $\nu\geq2$ we see that this last line is actually finite. We conclude our result by the basic fact that an integrable non-negative function must be be finite almost-everywhere.
%
%\end{proof}
\end{claim}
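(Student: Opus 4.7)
The plan is to combine Markov's inequality with the first Borel--Cantelli lemma to upgrade the first-moment decay bound on $\norm{A(x,y)}$ into an almost-sure pointwise polynomial bound, which one then sums against the prescribed weight. Ergodicity is not really needed for this implication; the only inputs are the hypothesized decay of $\mathbb{E}[\norm{A(x,y)}]$ and the fact that $A(\omega)\in\mathcal{B}(\mathcal{H})$ for every $\omega$.

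First I would introduce a deterministic threshold function $t_{x,y}>0$ (to be chosen) and apply Markov's inequality:
\begin{align*}
\mathbb{P}\bigl(\norm{A(x,y)}>t_{x,y}\bigr)\leq \frac{\mathbb{E}[\norm{A(x,y)}]}{t_{x,y}}\leq \frac{Ce^{-\mu|x-y|}}{t_{x,y}}\,.
\end{align*}
Taking $t_{x,y}:=(1+|x|)^2 e^{-\mu|x-y|/2}$, the resulting upper bound $C(1+|x|)^{-2}e^{-\mu|x-y|/2}$ is summable over $(x,y)\in\mathbb{Z}^2$, since the double sum factors into two convergent one-dimensional sums. Borel--Cantelli then produces an almost-surely finite random set $F(\omega)\subseteq\mathbb{Z}^2$ such that $\norm{A(x,y)(\omega)}\leq t_{x,y}$ whenever $(x,y)\notin F(\omega)$.

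Next I would absorb the finitely many exceptional pairs into a random prefactor. Since $\norm{A(x,y)}\leq \norm{A(\omega)}_{\mathrm{op}}<\infty$ almost surely and $F(\omega)$ is finite, there exists a random $C'(\omega)\geq 1$ with
\begin{align*}
\norm{A(x,y)}\leq C'(\omega)(1+|x|)^2 e^{-\mu|x-y|/2}\qquad\bigl((x,y)\in\mathbb{Z}^2\bigr)\,.
\end{align*}
Choosing the deterministic constants $\nu:=4$ and $\mu':=\mu/4$, the target sum is then bounded by $C'(\omega)\sum_{x,y\in\mathbb{Z}}(1+|x|)^{-2}e^{-\mu|x-y|/4}$, which is finite and gives the claim.

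The only (very mild) obstacle is the simultaneous balancing of exponents: the threshold $t_{x,y}$ must be \emph{large} enough so that $\sum_{x,y}\mathbb{P}(\norm{A(x,y)}>t_{x,y})$ converges, yet \emph{small} enough so that $t_{x,y}$ itself is summable against the weight $(1+|x|)^{-\nu}e^{\mu'|x-y|}$ with strictly positive deterministic $\nu,\mu'$. Splitting the exponential decay $e^{-\mu|x-y|}$ as $e^{-\mu|x-y|/2}\cdot e^{-\mu|x-y|/2}$ and pairing the polynomial factor $(1+|x|)^2$ in $t_{x,y}$ against $\nu=4$ in the final weight gives a compatible choice, and any other pairing of a polynomial exponent $\alpha>1$ with $\nu>\alpha+1$ and any $0<\mu'<\mu/2$ would work equally well.
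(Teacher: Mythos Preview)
Your argument is correct. The paper itself does not give a proof of this claim; it simply says that the proof ``has appeared already in many places'' and cites \cite[Prop.~A.1]{Shapiro2019}, so there is nothing in the paper to compare against directly. Your Markov/Borel--Cantelli route is one of the two standard ways to establish this; the other, marginally shorter, is to pick $\mu'=\mu/2$, $\nu=2$ from the outset and use Tonelli:
\[
\mathbb{E}\Bigl[\sum_{x,y}\|A(x,y)\|(1+|x|)^{-2}e^{\mu|x-y|/2}\Bigr]
\leq C\sum_{x,y}(1+|x|)^{-2}e^{-\mu|x-y|/2}<\infty,
\]
whence the sum is almost surely finite. Your approach has the minor advantage of also producing an explicit almost-sure pointwise bound $\|A(x,y)\|\leq C'(\omega)(1+|x|)^{2}e^{-\mu|x-y|/2}$, which is sometimes useful in its own right. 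Your remark that ergodicity is not used is accurate for this implication.
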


The proof of this claim has appeared already in many places, e.g., \cite[Prop. A.1]{Shapiro2019}.

\subsection{The Hermitian symplectic group}\label{sec:The Hermitian symplectic group}
Since the Hermitian symplectic group is not a very canonical object in the mathematics literature (though it does appear, e.g., in \cite{Sadel2010}), we spell out its basic properties below for convenience.
\begin{defn}
	\label{def:The Hermitian Symplectic Group}The (Hermitian) symplectic
	group $\HSG$ is defined as 
	\begin{align*}
		\HSG & \equiv  \{M\in \Mat_{2N\times2N}(\mathbb{C})|M^{\ast} J M= J\}
	\end{align*}
	with $ J\equiv\begin{pmatrix}0 & -\mathds{1}_{N}\\
		\mathds{1}_{N} & 0
	\end{pmatrix}$ the standard symplectic form. Here we abbreviate $G:=\HSG$.
	In this section we write $M$ in $N$-block form as $M=\begin{pmatrix}A & B\\
		C & D
	\end{pmatrix}$, where $A$, $B$, $C$, and $D$ are unrelated to their meanings
	outside of this section. 
\end{defn}

\begin{prop}
	$G$ is a group under matrix multiplication.
	\begin{proof}
		Taking the determinant of 
		\begin{align*}
			M^{\ast} J M & =   J
		\end{align*}
		shows that $|\det(M)|=1$ so that $M$ is invertible
		if it is in $G$. From the defining relation we have
		\begin{align*}
			M & =  (M^{\ast} J)^{-1} J =  M^{-1}A^{\ast-1} J
		\end{align*}
		so that using the commutativity of adjoint and inverse, we get:
		\begin{align*}
			M^{-1} & =   J^{-1}M^{\ast} J =   J^{-1}((M^{-1})^{-1})^{\ast} J =   J^{-1}((M^{-1})^{\ast})^{-1} J =  ((M^{-1})^{\ast} J)^{-1} J\,,
		\end{align*}
		so that 
		\begin{align}
			(M^{-1})^{\ast} J M^{-1} & =   J\label{eq:inverse of conjugate symplectic is conjugate symplectic}
		\end{align}
		and we find that $M^{-1}\in G$ as well. Hence the inverse map restricted
		to $G$ lands in $G$. If 
		\begin{align*}
			M_{i}^{\ast} J M_{i} & =   J
		\end{align*}
		for $i\in\{1,2\}$ then 
		\begin{align*}
			M_{2}^{\ast}M_{1}^{\ast} J M_{1}M_{2} & =  M_{2}^{\ast} J M_{2} =   J\,.
		\end{align*}
		Hence matrix multiplication map restricted to $G^{2}$ lands in $G$.
		
		Associativity is inherited by the associativity of general matrix
		multiplication. Finally, the identity matrix is of course conjugate
		symplectic.
	\end{proof}
\end{prop}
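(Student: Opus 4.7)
The plan is to verify the four group axioms (closure, associativity, identity, inverses) directly from the defining relation $M^\ast J M = J$; associativity is free from matrix multiplication and the identity matrix manifestly satisfies the relation, so the real content is closure and existence of inverses inside $G$.

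First I would establish closure: given $M_1, M_2 \in G$, one computes
\begin{align*}
(M_1 M_2)^\ast J (M_1 M_2) = M_2^\ast (M_1^\ast J M_1) M_2 = M_2^\ast J M_2 = J,
\end{align*}
so $M_1 M_2 \in G$. Next I would show every $M \in G$ is invertible: taking determinants of $M^\ast J M = J$ and using $\det J \neq 0$ yields $|\det M|^2 = 1$, in particular $|\det M| = 1$, which is precisely the content of \cref{eq:group property of transfer matrices} in the case of the transfer matrices and shows $M^{-1}$ exists.

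The slightly less automatic step is showing $M^{-1} \in G$, i.e.\ $(M^{-1})^\ast J M^{-1} = J$. Starting from $M^\ast J M = J$, I would left-multiply by $(M^\ast)^{-1} = (M^{-1})^\ast$ and right-multiply by $M^{-1}$, obtaining
\begin{align*}
J = (M^{-1})^\ast J M^{-1},
\end{align*}
which is exactly the defining relation for $M^{-1}$. This is essentially the content of \cref{eq:inverse of conjugate symplectic is conjugate symplectic} rearranged.

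I do not expect any serious obstacle here: the verification is a one-line manipulation per axiom. The only thing to be careful about is not conflating the Hermitian symplectic condition $M^\ast J M = J$ with the complex symplectic condition $M^T J M = J$ (the paper explicitly warns about this confusion just after the definition of $\HSG$); every step above uses the adjoint consistently. Associativity and the identity axiom require no computation beyond $\Id^\ast J \Id = J$. Once closure and inverses are in hand, $G$ is a subgroup of $\GL_{2N}(\CC)$, which completes the proof.
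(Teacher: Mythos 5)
Your proof is correct and follows essentially the same route as the paper: verify closure, invertibility via $|\det M|=1$, the inverse relation $(M^{-1})^\ast J M^{-1}=J$, and the trivial axioms. Your derivation of the inverse step (conjugating the defining relation by $M^{-1}$ on both sides) is in fact cleaner than the paper's manipulation, which reaches \cref{eq:inverse of conjugate symplectic is conjugate symplectic} by a more roundabout chain of identities.
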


\begin{rem}
	Note that $G$ is unitarily equivalent to $U(N,N)$,
	the indefinite unitary group. 
\end{rem}

\begin{prop}
	$G$ is closed under adjoint.
	\begin{proof}
		Let $M\in G$ be given. Then $M^{\ast} J M= J$. We want to
		show that $M J M^{\ast}= J$. Taking the inverse of \cref{eq:inverse of conjugate symplectic is conjugate symplectic}
		we find 
		\begin{align*}
			((M^{-1})^{\ast} J M^{-1})^{-1} & =   J^{-1}\,.
		\end{align*}
		Use the fact that $ J^{-1}=- J$ and again the commutativity
		of adjoint and inverse maps to find: 
		\begin{align*}
			M J M^{\ast} & =   J\,,
		\end{align*}
		so that $M^{\ast}\in G$ as desired.
	\end{proof}
\end{prop}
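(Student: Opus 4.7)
The plan is to derive the target identity $M J M^{\ast} = J$, which is the defining relation for $M^{\ast}$ to lie in $G$, directly from the assumption $M^{\ast} J M = J$ by a short algebraic manipulation. The two ingredients at hand are the group structure proved in the previous proposition, which via \cref{eq:inverse of conjugate symplectic is conjugate symplectic} already establishes closure under inversion, and the elementary observation $J^{-1} = -J$ (immediate from $J^{2} = -\mathds{1}_{2N}$).

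Concretely, I would first invoke \cref{eq:inverse of conjugate symplectic is conjugate symplectic} to rewrite the hypothesis $M \in G$ as $(M^{-1})^{\ast} J M^{-1} = J$. Next I would take the matrix inverse of both sides: the right-hand side becomes $J^{-1}$, and the left-hand side, after using the order-reversing rule $(ABC)^{-1} = C^{-1} B^{-1} A^{-1}$ together with the commutativity of adjoint and inverse $((M^{-1})^{\ast})^{-1} = M^{\ast}$ and $(M^{-1})^{-1} = M$, becomes $M J^{-1} M^{\ast}$. Substituting $J^{-1} = -J$ and cancelling the overall sign yields $M J M^{\ast} = J$, which is precisely the condition $M^{\ast} \in G$.

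There is no serious obstacle; the entire argument is a bookkeeping exercise in inverses and adjoints. The only care required is to keep the order of factors straight under inversion and to track the sign from $J^{-1} = -J$. As a sanity check one could bypass the prior proposition by manipulating $M^{\ast} J M = J$ directly: multiplying on the left by $(M^{\ast})^{-1}$ and on the right by $M^{-1}$ gives $J = (M^{-1})^{\ast} J M^{-1}$, after which the same inversion step concludes. Alternatively, one could note that $G$ is unitarily equivalent to the indefinite unitary group $U(N,N)$, and self-adjoint-closure of $U(p,q)$ is a standard fact, but the direct algebraic route above is shorter and self-contained.
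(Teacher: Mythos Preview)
Your proposal is correct and follows essentially the same route as the paper's own proof: invoke \cref{eq:inverse of conjugate symplectic is conjugate symplectic}, invert both sides, use $J^{-1}=-J$ and $((M^{-1})^{\ast})^{-1}=M^{\ast}$ to arrive at $MJM^{\ast}=J$. Your additional remarks (the direct derivation bypassing the prior proposition and the $U(N,N)$ observation) are fine but not needed.
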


\begin{prop}
	\label{prop:general block description of Hermitian symplectic group}
	$G$ may be described as 
	\begin{align}
		G & =  \left\{\left.\begin{pmatrix}A & B\\
			C & D
		\end{pmatrix}\in \Mat_{2N}(\mathbb{C})\right|\begin{array}{c}
			(A,B,C,D)\in \Mat_{N}(\mathbb{C})^{4}:\\
			A^{\ast}C\text{ and }B^{\ast}D\text{ are S.A.}\\
			\text{ and }A^{\ast}D-C^{\ast}B=\mathds{1}_{N}
		\end{array}\right\}\label{eq:Description of the symplectic group}\,.
	\end{align}
	\begin{proof}
		Starting from $M^\ast J M \stackrel{!}{=} J$ we have
		\begin{align*}
			\begin{pmatrix}A & B\\
				C & D
			\end{pmatrix}^{\ast}\begin{pmatrix}0 & -\mathds{1}_{N}\\
				\mathds{1}_{N} & 0
			\end{pmatrix}\begin{pmatrix}A & B\\
				C & D
			\end{pmatrix} & =  \begin{pmatrix}A^{\ast} & C^{\ast}\\
				B^{\ast} & D^{\ast}
			\end{pmatrix}\begin{pmatrix}0 & -\mathds{1}_{N}\\
				\mathds{1}_{N} & 0
			\end{pmatrix}\begin{pmatrix}A & B\\
				C & D
			\end{pmatrix}\\
			& =  \begin{pmatrix}A^{\ast} & C^{\ast}\\
				B^{\ast} & D^{\ast}
			\end{pmatrix}\begin{pmatrix}-C & -D\\
				A & B
			\end{pmatrix}\\
			& =  \begin{pmatrix}-A^{\ast}C+C^{\ast}A & -A^{\ast}D+C^{\ast}B\\
				-B^{\ast}C+D^{\ast}A & -B^{\ast}D+D^{\ast}B
			\end{pmatrix}\\
			& \stackrel{!}{=}  \begin{pmatrix}0 & -\mathds{1}_{N}\\
				\mathds{1}_{N} & 0
			\end{pmatrix}\,.
		\end{align*}
	\end{proof}
\end{prop}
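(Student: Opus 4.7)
The proposition is a pure block-matrix computation, and the plan is to unfold the defining relation $M^\ast J M = J$ with $M$ written in the $N\times N$ block form and read off the conditions directly. First I would multiply out
\begin{align*}
M^\ast J M \;=\; \begin{pmatrix}A^\ast & C^\ast\\ B^\ast & D^\ast\end{pmatrix}\begin{pmatrix}0 & -\Id_N\\ \Id_N & 0\end{pmatrix}\begin{pmatrix}A & B\\ C & D\end{pmatrix}\;=\;\begin{pmatrix}-A^\ast C+C^\ast A & -A^\ast D+C^\ast B\\ -B^\ast C+D^\ast A & -B^\ast D+D^\ast B\end{pmatrix},
\end{align*}
and set this equal to $J=\begin{pmatrix}0 & -\Id_N\\ \Id_N & 0\end{pmatrix}$, giving the four block equations $A^\ast C=C^\ast A$, $A^\ast D-C^\ast B=\Id_N$, $D^\ast A - B^\ast C=\Id_N$, and $D^\ast B=B^\ast D$.

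Next I would observe the redundancy: the first equation says exactly that $A^\ast C$ is self-adjoint, the fourth that $B^\ast D$ is self-adjoint, and the third is the conjugate-transpose of the second, so it carries no new information. This reduces the system to the three conditions stated in \cref{eq:Description of the symplectic group}, establishing containment of $G$ in the right-hand side of \cref{eq:Description of the symplectic group}. The reverse containment is immediate by the same computation run backwards: if $(A,B,C,D)$ satisfy the three conditions, then all four block entries of $M^\ast J M$ agree with those of $J$, hence $M\in G$.

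There is no real obstacle here; the only minor subtlety I would flag is that the resulting proof does not by itself guarantee invertibility of $M$ (which is however part of the ambient definition of $G$ and was established in the preceding proposition via $|\det M|=1$). No further estimates or auxiliary lemmas are needed, and the whole proof fits into the single multi-line display already present in the excerpt together with a one-sentence remark on the self-adjointness reinterpretation and on the third block equation being redundant with the second.
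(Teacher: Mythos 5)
Your proposal is correct and is essentially identical to the paper's proof: both unfold $M^\ast J M = J$ in $N\times N$ block form and read off the three conditions, with the third block equation being the adjoint of the second. Your added remarks on the redundancy and on invertibility are accurate but not needed beyond what the paper records.
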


%\begin{cor}
%\label{cor:Implications for a matrix to be in the Hermitian symplectic group}Since
%we know that $M\in G\Longleftrightarrow M^{\ast}\in G$, we find that
%for any $M\in G$ we have 
%\begin{equation}
%\begin{aligned}
%A^{\ast}D-C^{\ast}B & =\mathds{1}_{N}\\
%AD^{\ast}-BC^{\ast} & =\mathds{1}_{N}\\
%A^{\ast}C,B^{\ast}D,AB^{\ast},CD^{\ast} & \text{are S.A.}
%\label{eq:implications for a matrix to be in the Hermitian symplectic group}
%\end{aligned}
%\end{equation}
%\end{cor}

%\begin{cor}
%	\label{cor:General form of Hermitian symplectic matrix when D is invertible}
%	Let $$ M=\begin{pmatrix}A & B\\
%	C & D
%	\end{pmatrix}\in \GL_{2N}(\mathbb{C}) $$ with $D\in \GL_{N}(\mathbb{C})$. Then $M\in\HSG$ iff 
%	\begin{align}
%	B=D^\circ R\,,\qquad C = SD^\circ
%	\end{align} 
%	with $R=R^\ast$, $S=S^\ast$ and 
%	\begin{align}
%	A = D^\circ (\mathds{1}+B^\ast C)\label{eq:form for A block in a Hermitian symplectic matrix with D invertible}
%	\end{align}
%	\begin{proof}
%		\cref{eq:form for A block in a Hermitian symplectic matrix with D invertible} is equivalent to the last condition of \cref{eq:Description of the symplectic group} when $D$ is invertible. So we need only verify that $A^\ast C$ and $B^\ast D$ are self-adjoint. But $B^\ast D = R D^{-1} D = R $ which is self-adjoint by hypothesis and $A^\ast C  = (\mathds{1}+C^\ast B) D^{-1}S D^\circ=D^{-1}SD^\circ+D^{-1}SD^\circ R D^{-1} S D^\circ$, the last expression being a sum of two terms which are self-adjoint using the assumptions on $R$ and $S$.
%	\end{proof}
%\end{cor}

\begin{rem}
	\label{note:If D is invertible then a symplectic matrix is uniquely determined by just three entries}By
	the last equation in \cref{eq:Description of the symplectic group},
	if $D\in \GL_{N}(\mathbb{C})$, then $M$ is specified by
	only three block entries $B$, $C$ and $D$ and $A$ may be solved
	for to satisfy the symplectic constraint.
\end{rem}

\begin{proof}[Proof of \cref{lem:General form of Hermitian symplectic matrix when D is invertible}]
	\cref{eq:form for A block in a Hermitian symplectic matrix with D invertible} is equivalent to the last condition of \cref{eq:Description of the symplectic group} when $D$ is invertible. So by \cref{prop:general block description of Hermitian symplectic group} we need only verify that $A^\ast C$ and $B^\ast D$ are self-adjoint. But $B^\ast D = D^\ast R D $ which is self-adjoint by hypothesis on $R$ and $A^\ast C  = (\mathds{1}+C^\ast B) D^{-1} D S = S + S D^\ast R D S$, the last expression being a sum of two terms which are self-adjoint using the assumptions on $R$ and $S$.
\end{proof}

\begin{prop}
	\label{prop:Eigenvalues of Hermitian symplectic matrix are symmetric about S^1}If
	$M\in G$, then its eigenvalues are symmetric about $S^{1}$ so that its singular values are symmetric about one.
	\begin{proof}
		We have $\det( J)=1$ so that $|\det(M)|=1$.
		So define $\theta\in S^{1}$ via $\ee^{i\theta}:=\det(M^{\ast})^{-1}$.
		We have $M=- J M^{\circ} J$ from the definition so that
		if $p_{M}(\lambda)\equiv\det(M-\lambda\mathds{1}_{2N})$
		is the characteristic polynomial of $M$, we find 
		\begin{align*}
			p_{M}(\lambda) & =  \det(- J M^{\circ} J-\lambda\mathds{1}_{2N})\\
			& =  \det( M^{\circ} J-\lambda J)\tag{$\det( J)=1$}\\
			& =  e^{-i\theta}\det( J-\lambda M^{\ast} J)\tag{$e^{-i\theta}\det(M^{\ast})=1$}\\
			& =  e^{-i\theta}\det(-\mathds{1}_{2N}+\lambda M^{\ast}) =  e^{-i\theta}\overline{\det(-\mathds{1}_{2N}+\overline{\lambda}M)}\tag{$\det(- J)=1$}\\
			& =  e^{-i\theta}\overline{\overline{\lambda}^{2N}\det(-\mathds{1}_{2N}\overline{\lambda}^{-1}+M)} =  e^{-i\theta}\lambda^{2N}\overline{p_{M}(\overline{\lambda}^{-1})}\,.
		\end{align*}
		Since $|M|^2$ is also symplectic, this relation holds true also
		for $p_{|M|^2}$, hence for the singular values.
	\end{proof}
\end{prop}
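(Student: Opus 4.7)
The plan is to extract the eigenvalue symmetry directly from the defining relation $M^\ast J M = J$, and then bootstrap it into the singular value symmetry by applying the eigenvalue result to $M^\ast M$, which is itself in $\HSG$.

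First I would rearrange $M^\ast J M = J$ into $M^{-1} = J^{-1} M^\ast J$, exhibiting $M^{-1}$ as conjugate to $M^\ast$ and hence sharing its spectrum. Since $\sigma(M^{-1}) = \{\lambda^{-1} \mid \lambda \in \sigma(M)\}$ and $\sigma(M^\ast) = \{\bar{\lambda} \mid \lambda \in \sigma(M)\}$ as multisets, this similarity forces these two multisets to coincide, which is precisely to say that the involution $\lambda \mapsto 1/\bar{\lambda}$ permutes the eigenvalues of $M$. A quick polar calculation $\lambda = r e^{\ii\theta} \Rightarrow 1/\bar{\lambda} = r^{-1} e^{\ii\theta}$ confirms that this involution fixes $S^1$ pointwise and inverts the modulus along each ray from the origin, which is exactly the symmetry about $S^1$ asserted in the statement.

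Second, for the singular values I would use that $\sigma_j(M) = \sqrt{\mu_j(M^\ast M)}$. Since $\HSG$ is closed under both matrix multiplication and adjoint (established in the two preceding propositions of this appendix), $M^\ast M$ is again in $\HSG$. Applying the first step to $M^\ast M$ yields a symmetry of its spectrum under $\mu \mapsto 1/\bar{\mu}$; but $\sigma(M^\ast M) \subseteq (0,\infty)$, so this collapses to the pairing $\mu \leftrightarrow \mu^{-1}$. Taking square roots delivers the claimed pairing $\sigma \leftrightarrow \sigma^{-1}$ of singular values about $1$.

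There is no real obstacle here; the argument is essentially two lines of linear algebra once the identity $M^{-1} = J^{-1} M^\ast J$ is in hand. The only minor points deserving attention are the geometric identification of $\lambda \mapsto 1/\bar{\lambda}$ with reflection through $S^1$, and the observation that $M^\ast M \in \HSG$, both of which are routine given the group structure developed above.
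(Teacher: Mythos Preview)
Your proof is correct and rests on the same key identity as the paper's, namely that the defining relation $M^\ast J M = J$ rewrites as $M^{-1} = J^{-1} M^\ast J$ (equivalently $M = -J M^\circ J$), forcing the eigenvalue symmetry $\lambda \leftrightarrow 1/\bar{\lambda}$; the singular value statement is then obtained in both proofs by applying this to $M^\ast M \in \HSG$. The only cosmetic difference is that the paper carries out an explicit characteristic polynomial computation to reach the functional equation $p_M(\lambda) = e^{-i\theta}\lambda^{2N}\overline{p_M(\bar{\lambda}^{-1})}$, whereas you invoke similarity directly---a slightly cleaner packaging of the same argument.
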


\begin{cor}
	For any $M\in G$, $\|M\|=\norm{M^{-1}}$.
	\begin{proof}
		$\|M\|$ is the largest singular value, $\norm{M^{-1}}^{-1}$ is
		the smallest singular value, but the singular values are symmetric about one.
	\end{proof}
\end{cor}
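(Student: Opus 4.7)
The plan is to read off both sides of the claimed identity as extremal singular values of $M$ and then invoke the symmetry established in the preceding proposition. First I would recall the standard facts that for any invertible linear map $M$, the operator norm $\|M\|$ equals the largest singular value $\sigma_1(M)$, while $\|M^{-1}\|$ equals $\sigma_{2N}(M)^{-1}$, the reciprocal of the smallest singular value of $M$ (since the singular values of $M^{-1}$ are exactly the reciprocals of those of $M$).

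Then I would apply \cref{prop:Eigenvalues of Hermitian symplectic matrix are symmetric about S^1} to $M\in G$. That proposition states that the singular values of a Hermitian symplectic matrix are symmetric about $1$, i.e.\ $\sigma_j(M)\sigma_{2N+1-j}(M)=1$ for all $j\in\{1,\dots,N\}$; in particular $\sigma_1(M)\sigma_{2N}(M)=1$. Combining with the first paragraph, $\|M\|=\sigma_1(M)=\sigma_{2N}(M)^{-1}=\|M^{-1}\|$, which is the claim. There is no genuine obstacle: all the work sits in \cref{prop:Eigenvalues of Hermitian symplectic matrix are symmetric about S^1}, and this corollary is a direct unpacking of that symmetry in terms of operator norms.
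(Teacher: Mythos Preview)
Your proposal is correct and follows exactly the same approach as the paper: both read off $\|M\|$ and $\|M^{-1}\|$ as the largest and reciprocal-smallest singular values of $M$ and then invoke the symmetry of the singular values about $1$ from \cref{prop:Eigenvalues of Hermitian symplectic matrix are symmetric about S^1}. Your version simply spells out the intermediate identity $\sigma_1(M)\sigma_{2N}(M)=1$ more explicitly.
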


\subsection{Some technical proofs necessary for \texorpdfstring{\cref{prop:fractional moments of products of transfer matrices decay exponentially}}{fractional moments of products of transfer matrices lemma}}

%\begin{prop}
%\label{prop:triangle inequality for (a+b)^s when s<1}If $s\in(0,1)$,
%$n\in\mathbb{N}_{\geq1}$ and $\left\{ a_{i}\right\} _{i=1}^{n}\subseteq\mathbb{R}_{\geq0}$
%then $(\sum_{i=1}^{n}a_{i})^{s}\leq\sum_{i=1}^{n}a_{i}^{s}$
%\begin{proof}
%If $\sum_{i=1}^{n}a_{i}=0$ then the statement holds trivially. Otherwise,
%define $q:=(\sum_{i=1}^{n}a_{i})^{-1}$. Then using the
%fact that $(\cdot)^{s}$ is homogeneous, that $\sum_{i=1}^{n}qa_{i}=1$,
%and that $qa_{i}\leq(qa_{i})^{s}$ (as $qa_{i}\leq1$)
%we find 
%\begin{align*}
%(\sum_{i=1}^{n}a_{i})^{s} & =  \frac{1}{q^{s}}\sum_{i=1}^{n}qa_{i} \leq  \frac{1}{q^{s}}\sum_{i=1}^{n}(qa_{i})^{s} =  \sum_{i=1}^{n}a_{i}^{s}\,.
%\end{align*}
%\end{proof}
%\end{prop}
%{\color{red}\emph{NOTE:} I would elect to remove the following part, in tiny font, but so far left it here until you give your consent. If we indeed remove this, then there will be a minor change to the text right before \cref{prop:fractional moments of products of transfer matrices decay exponentially}.}
%{\tiny
\begin{proof}[Proof of \cref{prop:fractional moments of products of transfer matrices decay exponentially}]
	This is essentially \cite[Prop. 2.7]{KLEIN1990135}. We spell out the proof here explicitly
	for the reader's convenience because in the reference it is merely
	outlined. 
	
	We define the additive co-cycle $\zeta$ on $\bar{L}_{j-1}\times \bar{L}_{j}$,
	which is an $\HSG$-space, via the formula 
	\begin{align*}
	\zeta(g,[y],[x]) & :=  \log(\frac{\norm{\wedge^{j-1}gy}}{\|y\|}\frac{\|x\|}{\norm{\wedge^{j}gx}})\qquad\,\forall\,(g,[y],[x])\in \HSG\times \bar{L}_{j-1}\times \bar{L}_{j}\,.
	\end{align*}
	Indeed, we have 
	\begin{align*}
	\zeta(gh,[y],[x]) & =  \zeta(g,h[y],h[x])+\zeta(h,[y],[x])\,.
	\end{align*}
	Now, we have
	\begin{align*}
	\lim_{n\to\infty}\frac{1}{n}\mathbb{E}[\zeta(B_{n}(z),[y],[x])] & =  \lim_{n\to\infty}\frac{1}{n}\mathbb{E}[\log(\frac{\norm{\wedge^{j-1}gy}}{\|y\|}\frac{\|x\|}{\norm{\wedge^{j}gx}})]\\
	& =  \lim_{n\to\infty}\frac{1}{n}\mathbb{E}[\log(\frac{\norm{\wedge^{j-1}gy}}{\|y\|})]-\lim_{n\to\infty}\frac{1}{n}\mathbb{E}[\log(\frac{\norm{\wedge^{j}gx}}{\|x\|})]\,,
	\end{align*}
	and then by \cref{prop:Convergence of Lyapunov limit uniformly}, $\frac{1}{n}\mathbb{E}[\zeta(B_{n}(z),[y],[x])]\stackrel{n\to\infty}{\longrightarrow}-\gamma_{j}(z)$
	uniformly in $z\in K$, $([x],[y])\in \bar{L}_{j}\times \bar{L}_{j-1}$.
	As a result, for any $\varepsilon>0$ there is some $n_{K}(\varepsilon)$
	such that $\forall n\in\mathbb{N}_{\geq n_{K}(\varepsilon)}$
	we have 
	\begin{align*}
	|\frac{1}{n}\mathbb{E}[\zeta(B_{n}(z),[y],[x])]+\gamma_{j}(z)| & <  \varepsilon\\
	& \downarrow\\
	\mathbb{E}[\zeta(B_{n}(z),[y],[x])] & <  n(-\gamma_{j}(z)+\varepsilon)\\
	& \downarrow\\
	\mathbb{E}[\zeta(B_{n}(z),[y],[x])] & <  n(-\Gamma_{j}(K)+\varepsilon)\,,
	\end{align*}
	so if we pick $n\geq n_{K}(\frac{1}{2}\Gamma_{j}(K))$,
	we find $\mathbb{E}[\zeta(B_{n}(z),[y],[x])]<-\frac{1}{2}n\Gamma_{j}(K)$,
	with $\Gamma_{j}(K):=\inf_{z\in K}\gamma_{j}(z)>0$
	by hypothesis.
	
	Before proceeding with the proof of \cref{prop:fractional moments of products of transfer matrices decay exponentially}, we give the following intermediate
	\begin{prop}
		There is some $n_{0}(K)$ and some $s_{K}\in(0,1)$
		such that 
		\begin{align}
		\mathbb{E}[\exp(s_{K}\zeta(B_{n_{0}(K)}(z),[y'],[x']))] & <  1-\varepsilon_{K}\label{eq:large enough product of transfer matrices is smaller than 1}
		\end{align}
		for all $y'$, $x'$ and $z\in K$, for some $\varepsilon_{K}\in(0,1)$.
		\begin{proof}
			This is the strip-analog of \cite[Lemma 5.1]{Carmona1987}. First note
			that for all $\alpha\in\mathbb{R}$ we have $e^{\alpha}  \leq  1+\alpha+\alpha^{2}e^{|\alpha|}$
			so that 
			\begin{align*}
			\exp(s\zeta(B_{n}(z),[y'],[x'])) & \leq  1+s\zeta(B_{n}(z),[y'],[x'])+s^{2}\zeta(B_{n}(z),[y'],[x'])^{2}e^{|s\zeta(B_{n}(z),[y'],[x'])|}\,.
			\end{align*}
			Next, 
			\begin{align*}
			|\zeta(B_{n}(z),[y'],[x'])| & \equiv  |\log(\frac{\norm{\wedge^{j-1}B_{n}(z)y'}}{\norm{y'}}\frac{\norm{x'}}{\norm{\wedge^{j}B_{n}(z)x'}})|\\
			& \leq  |\log(\frac{\norm{\wedge^{j-1}B_{n}(z)y'}}{\norm{y'}})|+|\log(\frac{\norm{\wedge^{j}B_{n}(z)x'}}{\norm{x'}})|\,,
			\end{align*}
			and via \cref{prop:bound on log of norm of exterior product of matrix of |det|=00003D1}
			we have 
			\begin{align*}
			|\zeta(B_{n}(z),[y'],[x'])| & \leq  (j-1)(2N-1)\log(\norm{B_{n}(z)})+j(2N-1)\log(\norm{B_{n}(z)})\\
			& \leq  4jN\log(\norm{B_{n}(z)})\\
			& \leq  4jN\sum_{k=1}^{n}\log(\norm{A_{k}(z)})\,.
			\end{align*}
			Hence by H\"older's inequality and \cref{prop:estimate on power of sum of logs of independent variables} right below
			we have:
			\begin{align*}
			\mathbb{E}[\exp(s\zeta(B_{n}(z),[y'],[x']))] & \leq  1+s\mathbb{E}[\zeta(B_{n}(z),[y'],[x'])]+s^{2}(\mathbb{E}[\zeta(B_{n}(z),[y'],[x'])^{4}])^{\frac{1}{2}}\times\\
			&   \times(\mathbb{E}[e^{2|s\zeta(B_{n}(z),[y'],[x'])|}])^{\frac{1}{2}}\\
			&   (\text{Independence of the variables})\\
			& \leq  1+s\mathbb{E}[\zeta(B_{n}(z),[y'],[x'])]+s^{2}16j^{2}N^{2}n^{2}(\mathbb{E}[\log(\norm{A_{1}(z)})^{4}])^{\frac{1}{2}}\times\\
			&   \times\mathbb{E}[\norm{A_{1}(z)}^{8sjN}]^{\frac{n}{2}}\,.
			\end{align*}
			We note that $\norm{A_{1}(z)}^{8sjN}$ is integrable by the proof of
			\cref{prop:log of transfer matrix is integrable} and \cref{assu:regularity of probability distributions}
			if we pick $s\in(0,1)$ sufficiently small. 
			
			Now with some choice of constants $C_{1}(K):=16j^{2}N^{2}\sup_{z\in K}\sqrt{\mathbb{E}[\log(\norm{A_{1}(z)}^{4})]}<\infty$
			and $C_{2}(K):=\sup_{z\in K}\sqrt{\mathbb{E}[\norm{A_{1}(z)}^{8sjN}]}$
			we find 
			\begin{align*}
			\mathbb{E}[\exp(s\zeta(B_{n}(z),[y'],[x']))] & \leq  1+s\mathbb{E}[\zeta(B_{n}(z),[y'],[x'])]+s^{2}n^{2}C_{1}(K)C_{2}(K)^{n}\,.
			\end{align*}
			From the above we know that
			\begin{align*}
			\mathbb{E}[\exp(s\zeta(B_{n_{K}(\frac{1}{2}\Gamma_{j}(K))}(z),[y'],[x']))] & \leq  1-s\frac{1}{2}n_{K}(\frac{1}{2}\Gamma_{j}(K))\Gamma_{j}(K)+\\
			&   +s^{2}n_{K}(\frac{1}{2}\Gamma_{j}(K))^{2}C_{1}(K)C_{2}(K)^{n_{K}(\frac{1}{2}\Gamma_{j}(K))}\,.
			\end{align*}
			Hence there is some $s\in(0,1)$ (which depends on $K$)
			so that 
			\begin{align*}
			\varepsilon_{K} & :=  \frac{1}{2}n_{K}(\frac{1}{2}\Gamma_{j}(K))\Gamma_{j}(K)-sn_{K}(\frac{1}{2}\Gamma_{j}(K))^{2}C_{1}(K)C_{2}(K)^{n_{K}(\frac{1}{2}\Gamma_{j}(K))}\,.
			\end{align*}
			is positive.
		\end{proof}
	\end{prop}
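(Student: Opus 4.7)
The plan is to use the elementary bound $e^{\alpha}\leq 1+\alpha+\alpha^{2}e^{|\alpha|}$ (valid for all $\alpha\in\RR$) applied to $\alpha:=s\zeta(B_{n}(z),[y'],[x'])$ for some $s\in(0,1)$ and $n\in\NN$ to be chosen at the end. Taking expectations yields the three-term decomposition
\begin{align*}
\mathbb{E}[\exp(s\zeta(B_{n}(z),[y'],[x']))]\leq 1+s\,\mathbb{E}[\zeta(B_{n}(z),[y'],[x'])]+s^{2}\,\mathbb{E}[\zeta(B_{n}(z),[y'],[x'])^{2}e^{|s\zeta(B_{n}(z),[y'],[x'])|}]\,.
\end{align*}
The strategy is then to make the linear (negative) term dominate the quadratic remainder by first freezing $n$ and then taking $s$ small.

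\emph{Linear term.} The preceding analysis in the outer proof, based on \cref{prop:Convergence of Lyapunov limit uniformly}, already yields some integer $n_{K}(\ve)$ (depending only on $K$ and the slack $\ve$) such that $\mathbb{E}[\zeta(B_{n}(z),[y'],[x'])]<n(-\Gamma_{j}(K)+\ve)$ for every $n\geq n_{K}(\ve)$, uniformly in $z\in K$ and in the base points $[y']\in\bar{L}_{j-1}$, $[x']\in\bar{L}_{j}$. Choosing $\ve:=\tfrac{1}{2}\Gamma_{j}(K)$ and fixing $n_{0}(K):=n_{K}(\tfrac{1}{2}\Gamma_{j}(K))$ gives the linear contribution $-s\cdot\tfrac{1}{2}n_{0}(K)\Gamma_{j}(K)$.

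\emph{Quadratic remainder.} To handle it, I first dominate $|\zeta|$ using the sub-multiplicativity of the exterior-power norms, as in \cref{prop:bound on log of norm of exterior product of matrix of |det|=00003D1}, by
\begin{align*}
|\zeta(B_{n}(z),[y'],[x'])|\leq 4jN\sum_{k=1}^{n}\log(\norm{A_{k}(z)})\,.
\end{align*}
Then Cauchy--Schwarz separates the two factors:
\begin{align*}
\mathbb{E}[\zeta^{2}e^{|s\zeta|}]\leq \mathbb{E}[\zeta^{4}]^{1/2}\,\mathbb{E}[e^{2|s\zeta|}]^{1/2}\,,
\end{align*}
and by independence of the $A_{k}(z)$'s the exponential expectation factorises as $\prod_{k=1}^{n}\mathbb{E}[\norm{A_{k}(z)}^{8sjN}]$. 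By \cref{prop:log of transfer matrix is integrable} combined with \cref{assu:regularity of probability distributions}, both $\mathbb{E}[\log(\norm{A_{1}(z)})^{4}]$ and, for $s$ small enough, $\mathbb{E}[\norm{A_{1}(z)}^{8sjN}]$ are finite and bounded uniformly for $z\in K$ (using the Lipschitz continuity in $z$ established in \cref{prop:cocycle lip cont wrt energy} and compactness of $K$). This produces a quadratic estimate of the form $s^{2}\,n_{0}(K)^{2}\,C_{1}(K)\,C_{2}(K)^{n_{0}(K)}$.

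\emph{Combining.} With $n_{0}(K)$ already fixed by the linear step, the bound reads
\begin{align*}
\mathbb{E}[\exp(s\zeta(B_{n_{0}(K)}(z),[y'],[x']))]\leq 1-s\tfrac{1}{2}n_{0}(K)\Gamma_{j}(K)+s^{2}n_{0}(K)^{2}C_{1}(K)C_{2}(K)^{n_{0}(K)}\,.
\end{align*}
The main (and only) obstacle is that the exponential factor $C_{2}(K)^{n_{0}(K)}$ in the remainder is potentially very large; however, since $n_{0}(K)$ is already fixed at this stage, it is merely a finite constant. Choosing $s_{K}\in(0,1)$ small enough that the linear coefficient beats the quadratic coefficient, and setting
\begin{align*}
\ve_{K}:=s_{K}\bigl(\tfrac{1}{2}n_{0}(K)\Gamma_{j}(K)-s_{K}n_{0}(K)^{2}C_{1}(K)C_{2}(K)^{n_{0}(K)}\bigr)\in(0,1)\,,
\end{align*}
gives the required contraction, uniformly in $z\in K$, $[y']\in\bar{L}_{j-1}$ and $[x']\in\bar{L}_{j}$.
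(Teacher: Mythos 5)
Your proposal is correct and follows essentially the same route as the paper: the elementary bound $e^{\alpha}\leq 1+\alpha+\alpha^{2}e^{|\alpha|}$, the domination $|\zeta|\leq 4jN\sum_{k}\log\norm{A_{k}(z)}$ via the exterior-power norm estimate, Cauchy--Schwarz plus independence to control the quadratic remainder by $s^{2}n^{2}C_{1}(K)C_{2}(K)^{n}$, and then fixing $n_{0}(K)=n_{K}(\tfrac12\Gamma_{j}(K))$ first and taking $s_{K}$ small afterwards. Your explicit inclusion of the factor $s_{K}$ in the definition of $\ve_{K}$ is in fact slightly cleaner than the paper's, which omits it and thereby leaves a small notational mismatch with the claimed bound $1-\ve_{K}$.
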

	Continuing now with the proof of \cref{prop:fractional moments of products of transfer matrices decay exponentially}, we note that the object whose expectation we're actually trying
	to bound is
	\begin{align*}
	\exp(s\zeta(B_{n}(z),[y],[x])) & \equiv  \exp(s\log(\frac{\norm{\wedge^{j-1}gy}}{\|y\|}\frac{\|x\|}{\norm{\wedge^{j}gx}}))\\
	& =  (\frac{\norm{\wedge^{j-1}gy}}{\|y\|}\frac{\|x\|}{\norm{\wedge^{j}gx}})^{s}\,,
	\end{align*}
	and that 
	\begin{align*}
	\exp(s\zeta(B_{n+m}(z),[y],[x])) & \equiv  \exp(s\zeta(A_{n+m}(z)\dots A_{1+m}(z)B_{m}(z),[y],[x]))\\
	&   \quad(\text{cocycle property})\\
	& =  \exp(s(\zeta(A_{n+m}(z)\dots A_{1+m}(z),B_{m}(z)[y],B_{m}(z)[x])+\\
	& \quad+ \zeta(B_{m}(z),[y],[x])))\\
	& =  \exp(s\zeta(A_{n+m}(z)\dots A_{1+m}(z),B_{m}(z)[y],B_{m}(z)[x]))\times\\
	& \quad\times \exp(s\zeta(B_{m}(z),[y],[x]))\,.
	\end{align*}
	Hence due to the fact that $\{A_{n}(z)\}_{n\in\mathbb{Z}}$ are
	independent, we can integrate first only over $\{A_{n+m}(z),\dots,A_{1+m}(z)\}$,
	so that in that integration $B_{m}(z)y$ and $B_{m}(z)x$
	are fixed. Then via \cref{eq:large enough product of transfer matrices is smaller than 1}
	we find 
	\begin{align*}
	\mathbb{E}[\exp(s_{K}\zeta(B_{n_{0}(K)+m}(z),[y],[x]))] & \leq  (1-\varepsilon_{K})\mathbb{E}[\exp(s_{K}\zeta(B_{m}(z),[y],[x]))]\,.
	\end{align*}
	So if $n\in\mathbb{N}_{\geq n_{0}(K)}$ is given, we
	write it as $n=q_{n}n_{0}(K)+r_{n}$ with $r_{n}\in\{0,\dots,n_{0}(K)-1\}$.
	We then have using H\"older's inequality
	\begin{align*}
	&   \mathbb{E}[\exp(\frac{1}{2}s_{K}\zeta(B_{n}(z),[y],[x]))]\\
	& \leq  \mathbb{E}\exp(\frac{1}{2}s_{K}\zeta(A_{q_{n}n_{0}(K)+r_{n}}(z)\dots A_{q_{n}n_{0}(K)+1}(z),B_{q_{n}n_{0}(K)}(z)[y],B_{q_{n}n_{0}(K)}(z)[x]))\times\\
	&   \times\exp(\frac{1}{2}s_{K}\zeta(B_{q_{n}n_{0}(K)}(z),[y],[x]))\\
	& \leq  \sqrt{\mathbb{E}[\exp(s_{K}\zeta(A_{q_{n}n_{0}(K)+r_{n}}(z)\dots A_{q_{n}n_{0}(K)+1}(z),B_{q_{n}n_{0}(K)}(z)[y],B_{q_{n}n_{0}(K)}(z)[x]))]}\times\\
	&   \times\sqrt{\mathbb{E}[\exp(s_{K}\zeta(B_{q_{n}n_{0}(K)}(z),[y],[x]))]}\\
	& \leq  \sqrt{\mathbb{E}[(\norm{\wedge^{j-1}A_{q_{n}n_{0}(K)+r_{n}}(z)\dots A_{q_{n}n_{0}(K)+1}(z)}\norm{(\wedge^{j}A_{q_{n}n_{0}(K)+r_{n}}(z)\dots A_{q_{n}n_{0}(K)+1}(z))^{-1}})^{s_{K}}]}\times\\
	& \times(1-\varepsilon_{K})^{\frac{q_{n}}{2}}\,,
	\end{align*}
	and using the proof of \cref{prop:bound on log of norm of exterior product of matrix of |det|=00003D1}
	and the independence condition (assuming again that $s_{K}$ has to
	be redefined so that $\norm{A_{1}}^{(j-1)s_{K}+(2N-1)js_{K}}$
	is also integrable (via \cref{assu:regularity of probability distributions})) we find
	\begin{align*}
	\mathbb{E}[\exp(\frac{1}{2}s_{K}\zeta(B_{n}(z),[y],[x]))] & \leq  \mathbb{E}[\norm{A_{1}(z)}^{(j-1)s_{K}+(2N-1)js_{K}}]^{\frac{r_{n}}{2}}(1-\varepsilon_{K})^{\frac{q_{n}}{2}}\\
	& \leq  (\sup_{z\in K}\mathbb{E}[\norm{A_{1}(z)}^{(j-1)s_{K}+(2N-1)js_{K}}]^{n_{0}(K)})(1-\varepsilon_{K})^{\frac{q_{n}}{2}}\,,
	\end{align*}
	which implies the bound in the claim.
\end{proof}

In the proof above we have used the following two propositions. The first one is a basic consequence of the multinomial theorem, Jensen's inequality, and independence and hence its proof is omitted.
\begin{prop}
\label{prop:estimate on power of sum of logs of independent variables}We
have 
\begin{align*}
\mathbb{E}[(\sum_{j=1}^{n}\log(\norm{A_{j}}))^{4}] & \leq  n^{4}\mathbb{E}[\log(\norm{A_{1}})^{4}]
\end{align*}
where $\{A_{j}\}_{j}$ are independent variables.
%\begin{proof}
%Using the multinomial theorem we find
%\begin{align*}
%\sqrt{\mathbb{E}[(\sum_{j=1}^{n}\log(\norm{A_{j}}))^{4}]} & =  \sqrt{\mathbb{E}[\sum_{k_{1}+\dots+k_{n}=4}\begin{pmatrix}4\\
%k_{1},\dots,k_{n}
%\end{pmatrix}\prod_{j=1}^{n}\log(\norm{A_{j}})^{k_{j}}]}\\
%\\
% & =  \sqrt{\sum_{k_{1}+\dots+k_{n}=4}\begin{pmatrix}4\\
%k_{1},\dots,k_{n}
%\end{pmatrix}\mathbb{E}[\prod_{j=1}^{n}\log(\norm{A_{j}})^{k_{j}}]}\\
% &   (\text{Independent Variables})\\
% & =  \sqrt{\sum_{k_{1}+\dots+k_{n}=4}\begin{pmatrix}4\\
%k_{1},\dots,k_{n}
%\end{pmatrix}\prod_{j=1}^{n}\mathbb{E}[\log(\norm{A_{j}})^{k_{j}}]}\\
% &   (\text{Independent Variables})\\
% & =  \sqrt{\sum_{k_{1}+\dots+k_{n}=4}\begin{pmatrix}4\\
%k_{1},\dots,k_{n}
%\end{pmatrix}\prod_{j=1}^{n}\mathbb{E}[\log(\norm{A_{1}})^{k_{j}}]}\\
% &   (\text{Jensen})\\
% & =  \sqrt{n^{4}\mathbb{E}[\log(\norm{A_{1}})^{4}]}\\
% & =  n^{2}\sqrt{\mathbb{E}[\log(\norm{A_{1}})^{4}]}\,.
%\end{align*}
%\end{proof}
\end{prop}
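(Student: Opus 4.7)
The plan is to reduce the statement to a convexity inequality. Set $X_j := \log(\|A_j\|)$ for $j=1,\ldots,n$, so that $(X_j)_{j=1}^n$ is an i.i.d.\ sequence of real-valued random variables (independence of the $A_j$ transfers to independence of the $X_j$, and the $X_j$ are identically distributed since the $A_j$ are). The target inequality then becomes $\mathbb{E}[(\sum_{j=1}^n X_j)^4] \leq n^4 \mathbb{E}[X_1^4]$, which I note has the same form whether one writes $X_1^4$ or $|X_1|^4$, since the fourth power is non-negative.

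I would first apply Jensen's inequality to the convex function $\mathbb{R}\ni x\mapsto x^4$ evaluated at the empirical average $\tfrac{1}{n}\sum_{j=1}^n X_j$, yielding the pointwise bound
\begin{align*}
\Bigl(\sum_{j=1}^n X_j\Bigr)^4 = n^4 \Bigl(\tfrac{1}{n}\sum_{j=1}^n X_j\Bigr)^4 \leq n^4 \cdot \tfrac{1}{n}\sum_{j=1}^n X_j^4 = n^3 \sum_{j=1}^n X_j^4\,.
\end{align*}
Taking expectations and using that the $X_j$ are identically distributed, $\mathbb{E}[X_j^4]=\mathbb{E}[X_1^4]$ for every $j$, gives $\mathbb{E}[(\sum_j X_j)^4] \leq n^3 \cdot n \cdot \mathbb{E}[X_1^4] = n^4 \mathbb{E}[X_1^4]$, which is exactly the claim.

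Alternatively, one can proceed as suggested by the paper's hint: expand $(\sum_j X_j)^4$ by the multinomial theorem into $n^4$ monomials of the form $X_{j_1}X_{j_2}X_{j_3}X_{j_4}$, and then bound each such expectation by Hölder's inequality with exponents $(4,4,4,4)$ to get $|\mathbb{E}[X_{j_1}X_{j_2}X_{j_3}X_{j_4}]| \leq \prod_{k=1}^4 \mathbb{E}[|X_{j_k}|^4]^{1/4} = \mathbb{E}[X_1^4]$, where the equality uses identical distribution; summing over the $n^4$ index tuples closes the argument. The two routes give the same estimate; I would present the Jensen version, since it is shorter and does not require independence at all (only the equality of marginals), while the multinomial route makes transparent why the constant is $n^4$. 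There is no essential obstacle here — the only mild subtlety is being careful that $X_1$ may take negative values, so one should make the (trivial) remark that $X_1^4=|X_1|^4$ before invoking Hölder or Jensen in the fourth-power form.
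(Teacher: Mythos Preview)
Your proof is correct and matches the paper's own approach: the paper omits the argument but describes it as ``a basic consequence of the multinomial theorem, Jensen's inequality, and independence,'' which is precisely the pair of routes you spell out. Your observation that the Jensen route requires only identical distribution (not independence) is accurate and a mild sharpening of the paper's hint.
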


\begin{prop}
\label{prop:bound on log of norm of exterior product of matrix of |det|=00003D1}(\cite{Bougerol9781468491746}
Lemma III.5.4) If $M\in \Mat_{L\times L}(\mathbb{C})$ has
$|\det(M)|=1$ then for any $j\in\{1,\dots,L\}$
and $v\in\wedge^{j}\mathbb{C}^{L}$ we have 
\begin{align*}
|\log(\norm{\wedge^{j}M})| & \leq  j(L-1)\log(\|M\|)
\end{align*}
and
\begin{align*}
|\log(\frac{\norm{\wedge^{j}Mv}}{\|v\|})| & \leq  j(L-1)\log(\|M\|)\,.
\end{align*}
\begin{proof}
First note that $\norm{\wedge^{j}M}$ is the product of the first
$j$ singular values: 
\begin{align*}
\norm{\wedge^{j}M} & =  \sigma_{1}(M)\dots\sigma_{j}(M)
\end{align*}
so that 
\begin{align*}
\norm{\wedge^{j}M} & \leq  \|M\|^{j}\,.
\end{align*}
Conversely, 
\begin{align*}
\norm{\wedge^{j}M}^{-1} & \leq  \norm{(\wedge^{j}M)^{-1}} =  \norm{\wedge^{j}M^{-1}} \leq  \norm{M^{-1}}^{j}\,.
\end{align*}
For any invertible matrix we have 
\begin{align*}
\norm{M^{-1}} & \leq  \frac{\|M\|^{L-1}}{|\det(M)|}\,,
\end{align*}
so that in our case 
\begin{align*}
\norm{\wedge^{j}M}^{-1} & \leq  \|M\|^{j(L-1)}\,.
\end{align*}
We find using the fact that $\log$ is monotone increasing,
\begin{align*}
\log(\norm{\wedge^{j}M}) & \leq  j\log(\|M\|) \leq  j(L-1)\log(\|M\|)
\end{align*}
and 
\begin{align*}
-\log(\norm{\wedge^{j}M}) & =  \log(\norm{\wedge^{j}M}^{-1}) \leq  j(L-1)\log(\|M\|)\,.
\end{align*}
Next, we have 
\begin{align*}
\frac{\norm{\wedge^{j}Mv}}{\|v\|} & \leq  \norm{\wedge^{j}M}\,,
\end{align*}
whereas $\|v\|=\norm{(\wedge^{j}M)^{-1}(\wedge^{j}M)v}\leq\norm{(\wedge^{j}M)^{-1}}\norm{\wedge^{j}Mv}$
so that 
\begin{align*}
(\frac{\norm{\wedge^{j}Mv}}{\|v\|})^{-1} & \leq  \norm{(\wedge^{j}M)^{-1}}
\end{align*}
which gives the second inequality of the prop.

Finally, note that these inequalities indeed make sense: $1\leq\|M\|\norm{M^{-1}}\leq\|M\|\|M\|^{L-1}=\|M\|^{L}$
so that $\log(\|M\|)\geq0$ always when $|\det(M)|=1$.
\end{proof}
\end{prop}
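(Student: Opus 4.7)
The plan is to exploit the singular-value description of the norm of exterior powers together with a standard estimate that bounds $\|M^{-1}\|$ in terms of $\|M\|$ and $|\det(M)|$. Throughout, write $\sigma_1(M) \geq \dots \geq \sigma_L(M) > 0$ for the singular values of $M$ (they are strictly positive since $|\det M|=1$).

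First I would handle the upper bound on $\|\wedge^j M\|$. Since $\wedge^j M$ acts on the $j$-fold antisymmetric tensor power and its singular values are the products $\sigma_{i_1}(M) \cdots \sigma_{i_j}(M)$ over increasing multi-indices, the largest one is $\sigma_1(M) \cdots \sigma_j(M)$, so $\|\wedge^j M\| = \sigma_1(M)\cdots\sigma_j(M) \leq \|M\|^j$. For the lower bound, the identity $(\wedge^j M)^{-1} = \wedge^j(M^{-1})$ gives $\|\wedge^j M\|^{-1} \leq \|\wedge^j(M^{-1})\| \leq \|M^{-1}\|^j$ by the same argument applied to $M^{-1}$.

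The key input is then the estimate $\|M^{-1}\| \leq \|M\|^{L-1}/|\det M|$, which follows from the cofactor formula $M^{-1} = \operatorname{adj}(M)/\det(M)$ since each entry of $\operatorname{adj}(M)$ is (up to sign) a minor of order $L-1$, hence bounded by $\|M\|^{L-1}$. Plugging in $|\det M|=1$ gives $\|\wedge^j M\|^{-1} \leq \|M\|^{j(L-1)}$. Taking logarithms of both bounds, and noting that $\|M\| \geq 1$ (because $1 \leq \|M\|\,\|M^{-1}\| \leq \|M\|^L$), so that $\log\|M\|\geq 0$ and absorbing the exponent $j$ into $j(L-1)$ costs nothing, we obtain $|\log\|\wedge^j M\|| \leq j(L-1)\log\|M\|$.

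For the second inequality, the upper bound $\|\wedge^j M v\|/\|v\| \leq \|\wedge^j M\|$ is immediate from the operator-norm definition. For the lower bound, write $\|v\| = \|(\wedge^j M)^{-1}\wedge^j M v\| \leq \|(\wedge^j M)^{-1}\|\,\|\wedge^j M v\|$, which yields $(\|\wedge^j M v\|/\|v\|)^{-1} \leq \|(\wedge^j M)^{-1}\| \leq \|M\|^{j(L-1)}$ by the previous estimate. Taking logarithms finishes the proof. No step is really an obstacle here: the only nontrivial ingredient is the cofactor bound $\|M^{-1}\| \leq \|M\|^{L-1}/|\det M|$, which is a standard linear-algebra fact; everything else is bookkeeping with singular values and exterior powers.
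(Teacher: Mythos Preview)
Your proof is correct and follows essentially the same route as the paper's: both use the singular-value identity $\|\wedge^j M\|=\sigma_1\cdots\sigma_j$, the estimate $\|M^{-1}\|\leq\|M\|^{L-1}/|\det M|$, and the sandwich $\|\wedge^j M\|^{-1}\leq\|\wedge^j M^{-1}\|\leq\|M^{-1}\|^j$ to control both directions of the logarithm, then repeat the argument with $v$ inserted. The only addition is your cofactor-formula justification for the $\|M^{-1}\|$ bound, which the paper simply asserts.
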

%
%}

\begingroup
\let\itshape\upshape
 \printbibliography
 \endgroup
\end{document}